\newtheorem{theorem}{Theorem}[section]
\newtheorem{lemma}[theorem]{Lemma}
\newtheorem{corollary}[theorem]{Corollary}
\newenvironment{definition}[1][Definition]{\begin{trivlist}
\item[\hskip \labelsep {\bfseries #1}]}{\end{trivlist}}
\newenvironment{remark}[1][Remark]{\begin{trivlist}
\item[\hskip \labelsep {\bfseries #1}]}{\end{trivlist}}
\newcommand{\xparagraph}[1]{\paragraph{#1}}
\newcommand{\ignore}[1]{}
\newcommand{\goesto}[1]{\stackrel {#1} \longrightarrow}
\newcommand{\comesfrom}[1]{\stackrel {#1} \longleftarrow}
\newcommand{\symb}{\sigma}
\newcommand{\trans}[3]{{#1} \goesto {#2} {#3}} 
\newcommand{\prefix}[2]{#1[0..#2]}
\newcommand{\suffix}[2]{#1[#2..]}
\newcommand{\A}{\mathcal A}
\newcommand{\B}{\mathcal B}
\newcommand{\lang}[1]{\mathcal{L}(#1)}
\newcommand{\xsim}[1]{\sqsubseteq^{#1}}
\newcommand{\disim}{\xsim{\mathsf{di}}}
\newcommand{\desim}{\xsim{\mathsf{de}}}
\newcommand{\fsim}{\xsim{\mathsf f}}
\newcommand{\bwsim}{\xsim{\mathsf{bw}}}
\newcommand{\xincl}[1]{\subseteq^{#1}}
\newcommand{\xstrictincl}[1]{\subset^{#1}}
\newcommand{\directtraceinclusion}{\xincl{\mathsf{di}}}
\newcommand{\delayedtraceinclusion}{\xincl{\mathsf{de}}}
\newcommand{\fairtraceinclusion}{\xincl{\mathsf f}}
\newcommand{\strictfairtraceinclusion}{\xstrictincl{\mathsf f}}
\newcommand{\bwdirecttraceinclusion}{\xincl{\mathsf{bw}}}
\newcommand{\accblindbwdirecttraceinclusion}{\xincl{\mathsf{bw-}}}
\newcommand{\countingbwtraceinclusion}{\xincl{\mathsf{bw}\textrm{-}\mathsf{c}}}
\newcommand{\delayedfixedwordsimulation}{\xsim{\mathsf{fx\textrm-de}}}
\newcommand{\languageinclusion}{\xincl {}}
\newcommand{\languageequivalence}{\approx}
\newcommand{\strictdirecttraceinclusion}{\xstrictincl{\mathsf{di}}}
\newcommand{\strictbwdirecttraceinclusion}{\xstrictincl{\mathsf{bw}}}
\newcommand{\strictlanguageinclusion}{\strictfairtraceinclusion}
\newcommand{\brel}{R_b}
\newcommand{\frel}{R_{\!f}}
\newcommand{\prune}[2]{{\it Prune}({#1},{#2})}
\newcommand{\prunerel}{P}
\newcommand{\makeprunerel}[2]{\prunerel({#1},{#2})}
\newcommand{\xprune}[3]{{\it Prune}({#1},{#2},{#3})}
\newcommand{\id}{{\it id}} 
\newcommand{\strictxsim}[1]{\sqsubset^{#1}}
\newcommand{\strictdisim}{\strictxsim{\mathsf{di}}}
\newcommand{\strictdesim}{\strictxsim{\mathsf{de}}}
\newcommand{\strictfsim}{\strictxsim{\mathsf f}}
\newcommand{\strictbwsim}{\strictxsim{\mathsf{bw}}}
\newcommand{\ksim}[1]{\sqsubseteq^{#1}}
\newcommand{\kxsim}[2]{\ksim{#1\textrm - #2}}
\newcommand{\kbwsim}{\kxsim k {\mathsf{bw}}}
\newcommand{\transksimx}{\preceq^{k\textrm{-}x}}
\newcommand{\transkbwsim}{\preceq^{k\textrm{-}\mathsf{bw}}}
\newcommand{\accblindkbwsim}{\sqsubseteq^{k\textrm{-}\mathsf{bw-}}}
\newcommand{\accblindtranskbwsim}{\preceq^{k\textrm{-}\mathsf{bw-}}}
\newcommand{\countingkbwsim}{\sqsubseteq^{k\textrm{-}\mathsf{bw}\textrm{-}\mathsf{c}}}
\newcommand{\countingtranskbwsim}{\preceq^{k\textrm{-}\mathsf{bw}\textrm{-}\mathsf{c}}}
\newcommand{\transkdisim}{\preceq^{k\textrm{-}\mathsf{di}}}
\newcommand{\transkdesim}{\preceq^{k\textrm{-}\mathsf{de}}}
\newcommand{\transkfsim}{\preceq^{k\textrm{-}\mathsf f}}
\newcommand{\transkdisimnumber}[1]{\preceq^{{#1}\textrm{-}\mathsf{di}}}
\newcommand{\transkdesimnumber}[1]{\preceq^{{#1}\textrm{-}\mathsf{de}}}
\newcommand{\stricttransksimx}{\prec^{k\textrm{-}x}}
\newcommand{\stricttranskbwsim}{\prec^{k\textrm{-}\mathsf{bw}}}
\newcommand{\stricttranskdisim}{\prec^{k\textrm{-}\mathsf{di}}}
\newcommand{\stricttranskfsim}{\prec^{k\textrm{-}\mathsf f}}
\newcommand{\cprex}[2]{\mathsf{CPre}^{#1}(#2)}
\newcommand{\cpredi}[1]{\cprex {\mathrm{di}} {#1}}
\newcommand{\cprebw}[1]{\cprex {\mathrm{bw}} {#1}}
\newcommand{\cprelong}[3]{\mathsf{CPre}(#1, #2, #3)}
\newcommand{\cpreone}[2]{\mathsf{CPre}^1(#1, #2)}
\newcommand{\cpretwo}[2]{\mathsf{CPre}^2(#1, #2)}
\newcommand{\st}{\ |\ }
\newcommand{\tickOK}{\checkmark}
\newcommand{\tickNO}{\times}
\tikzset{
	smallstate/.style={state,
		circle, 
		minimum size=4mm, 
		font=\itshape
	},
	initial text=,
	>=stealth',
	>/.style={ultra thick}
}
\begin{document}

\title{Advanced Automata Minimization}


\authorinfo{Lorenzo Clemente}
           {LaBRI, University of Bordeaux I}
           {lorenzo.clemente@labri.fr}
\authorinfo{Richard Mayr}
           {University of Edinburgh}
           {http://homepages.inf.ed.ac.uk/rmayr}

\maketitle

\begin{abstract}
We present an efficient algorithm to reduce the size of nondeterministic 
B\"uchi word automata, while retaining their language.
Additionally, we describe methods to solve PSPACE-complete automata problems like
universality, equivalence and inclusion for much larger instances 
(1-3 orders of magnitude) than before. This can be used to
scale up applications of automata in formal verification tools
and decision procedures for logical theories.

The algorithm is based on new transition pruning techniques. 
These use criteria based on combinations of backward and forward
trace inclusions.
Since these relations are themselves PSPACE-complete, we describe
methods to compute good approximations of them in polynomial time.

Extensive experiments show that the average-case complexity of our algorithm
scales quadratically. The size reduction of the automata depends very much on
the class of instances, but our algorithm consistently outperforms all previous techniques
by a wide margin. We tested our algorithm on B\"uchi automata derived from
LTL-formulae, many classes of random automata and automata derived from
mutual exclusion protocols, and compared its performance to the well-known
automata tool GOAL \cite{GOAL_survey_paper}.
\end{abstract}

\category{D.2.4}{Software Verification}{Model checking}
\category{F.1.1}{Models of Computation}{Automata}

\terms
Automata minimization, inclusion checking

\keywords
B\"uchi automata, simulation, minimization

\section{Introduction}

Nondeterministic B\"uchi automata are an effective way to represent
and manipulate $\omega$-regular languages, since they are closed under boolean operations.
They appear in many automata-based formal software verification methods, as well as in
decision procedures for logical theories.
For example, in LTL software model checking 
\cite{Holzmann:Spinbook,optimizing:concur2000}, temporal logic
specifications are converted into B\"uchi automata. 
In other cases, different versions of a program (obtained by abstraction or
refinement of the original) are translated into automata whose languages are then
compared. Testing the conformance of an implementation with its requirements 
specification thus reduces to a language inclusion or language equivalence
problem.
Another application of B\"uchi automata in software engineering is program termination
analysis by the size-change termination method \cite{Lee:SCT2001,seth:buchi}. 
Via an abstraction of the effect of program operations on
data, the termination problem can often be reduced to a
language inclusion problem about two derived B\"uchi automata. 

Our goal is to improve the efficiency and scalability of automata-based 
formal software verification methods.
We consider efficient algorithms for the minimization of automata, in the
sense of obtaining a smaller automaton with the same language, though not
necessarily with the absolute minimal possible number of states. (And, in general,
the minimal automaton for a language is not even unique.) The reason to perform
minimization is that the smaller minimized automaton is more efficient 
to handle in a subsequent computation. Thus there is an algorithmic tradeoff
between the effort for minimization and the complexity of the problem later
considered for this automaton. If only computationally easy questions 
are asked (e.g., reachability/emptiness; solvable in Logspace/PTIME)
then extensive minimization usually does not pay off.
Instead, the main applications are the following:
\begin{enumerate}
\item
Computationally hard automata problems like universality, equivalence, 
and inclusion. These are PSPACE-complete
\cite{kupfermanvardi:fair_verification},
but many practically efficient methods have been developed 
\cite{dill:inclusion:1992,doyen:raskin:antichains,Pit06,antichain:NFA:improved:10,seth:buchi,seth:efficient,abdulla:simulationsubsumption,Rabit_CONCUR2011}.
Still, these all have exponential time complexity and do not scale well.
Typically they are applied to automata with 15--100 states (unless the
automaton has a particularly simple structure).
Thus, one should first minimize the automata before applying these
exponential-time methods. A good minimization algorithm makes it possible
to solve much larger instances.
Even better, many instances of the PSPACE-complete universality, equivalence, 
and inclusion problems can already be solved in the {\em polynomial time} minimization
algorithm (e.g., by reducing the automaton to the trivial universal automaton),
so that the complete {\em exponential time} methods only need to be 
invoked in a small minority of instances.
\item
Cases where the size of an automaton strongly affects the complexity of an
algorithm. In LTL model checking \cite{Holzmann:Spinbook}
one searches for loops in a graph
that is the {\em product} of a large system specification with an automaton
derived from an LTL-formula. Smaller automata often make this easier, though
in practice it also depends on the degree of nondeterminism
\cite{Sebastiani-Tonetta:2003}.
\item
Procedures that combine and modify automata repeatedly.
Model checking algorithms and automata-based decision procedures for
logical theories compute automata products, unions, complements, projections,
etc., and thus the sizes of automata grow rapidly.
Thus, it is important to intermittently minimize the automata to keep their
size manageable, e.g., \cite{Talence:Presburger}.
\end{enumerate}
In general, finding an automaton with the minimal number of states for a
given language is computationally hard; even deciding whether a given
automaton is minimal is already PSPACE-complete \cite{ravikumar:hard:1991}.
Thus much effort has been devoted to finding methods for partial minimization
\cite{optimizing:concur2000,etessami:etal:fairsimulations:05,piterman:generalized06,buchiquotient:ICALP11}.
Simulation preorders played a central role in these efforts, because they
provide PTIME-computable under-approximations of trace inclusions.
However, the quality of the approximation is insufficient in many practical examples.
Multipebble simulations \cite{etessami:hierarchy02} yield coarser
relations by allowing the Duplicator player to hedge her bets in the
simulation game,
but they are not easily computable in practice.
%
\begin{enumerate}
\item
We present methods for transition pruning, i.e., removing transitions from
automata without changing their language. The idea is that certain transitions
can be removed, because other `better' transitions remain. The `better'
criterion relies on combinations of forward and backward simulations and trace
inclusions. We provide a complete picture which combinations are correct to
use for pruning. Moreover, the pruned transitions can be removed `in
parallel' (i.e., without re-computing the simulations and trace inclusions
after every change), which makes the method efficient and practical.
\item 
We present an efficient practical method
to compute good under-approximations of trace inclusions, by introducing \emph{lookahead simulations}.
While it is correct to use full trace inclusions and maximal-pebble
multipebble simulations in our minimization methods, these are not 
easily computed (PSPACE-hard). However, lookahead simulations are PTIME-computable,
and it is correct to use them instead of the more expensive trace inclusions and multipebble simulations.
Lookahead itself is a classic concept in parsing and
many other areas, but it can be defined in many different variants.
Our contribution is to identify and formally describe the lookahead-variant
for simulation preorders that gives the optimal compromise between efficient
computability and maximizing the sizes of the relations.%
\footnote{A thorough literature search showed that this has never been formally
  described so far.}
Practical degrees of lookahead range from 4 to 25, depending on the size and
shape of the automata.
Our experiments show that even moderate lookahead helps considerably in 
obtaining good approximations of trace-inclusions and multipebble
simulations.
\item
We show that variants of the {\em polynomial time} minimization
algorithm can solve most instances of the PSPACE-complete language inclusion
problem. Thus, the complete {\em exponential time} methods of
\cite{dill:inclusion:1992,doyen:raskin:antichains,antichain:NFA:improved:10,seth:buchi,seth:efficient,abdulla:simulationsubsumption,Rabit_CONCUR2011}
need only be invoked in a minority of the cases.
This allows to scale language inclusion testing to much larger instances
(e.g., automata with $\ge 1000$ states) which are beyond traditional methods.
\item
We performed extensive tests of our algorithm on automata of up-to 20000
states. These included random automata according to the
Tabakov-Vardi model \cite{tabakov:model}, 
automata obtained from LTL formulae, and real-world mutual
exclusion protocols. The empirically determined average-case time complexity on
random automata is quadratic, while the (never observed) worst-case complexity
is $O(n^4)$. The worst-case space complexity is quadratic. 
Our algorithm always minimizes better, on average, than all previously
available practical methods. However, the exact advantage varies,
depending on the type of instances; cf. Section~\ref{sec:experiments}.
For example, consider random automata with 100--1000 states, binary alphabet 
and varying transition density ${\it td}$. 
Random automata with ${\it td}=1.4$ cannot be minimized much by {\em any method}.
The only effect is achieved by the trivial removal of dead states which, on
average, yields automata of $78\%$ of the original size.
On the other hand, for ${\it td}=1.8,\dots,2.2$, the best previous minimization methods
yielded automata of $85\%$--$90\%$ of the original size on average, while our algorithm
yielded automata of $3\%$--$15\%$ of the original size on average.
\end{enumerate}
\noindent While we present our methods in the framework of B\"uchi automata, they
directly carry over to the simpler case of finite-word automata.

\section{Preliminaries}\label{sec:preliminaries}

A \emph{non-deterministic B\"uchi Automaton (BA)} $\A$ is a tuple $(\Sigma, Q, I, F, \delta)$
where $\Sigma$ is a finite alphabet, $Q$ is a finite set of states,
$I \subseteq Q$ is the set of \emph{initial} states,
$F \subseteq Q$ is the set of \emph{accepting} states,
and $\delta \subseteq Q \times \Sigma \times Q$ is the transition relation.
We write $p \goesto \symb q$ for $(p, \symb, q) \in \delta$.
A transition is \emph{transient} iff any path can contain it at most once.
\ignore{
Our automata are non-deterministic in the sense that
\begin{inparaenum}[1)]
	\item there might be several initial states, and
	\item there might be several transitions $p \goesto \symb q$
	for a given state $p$ and input symbol $\symb$ (and several initial states).
\end{inparaenum}
}
To simplify the presentation, we assume that automata are \emph{forward and backward complete}, i.e.,
for any state $p \in Q$ and symbol $\symb \in \Sigma$,
there exist states $q_0, q_1 \in Q$ s.t. $q_0 \goesto \symb p \goesto \symb q_1$.
Every automaton can be converted into an equivalent complete one by adding
at most two states and a linear number of transitions.%
\footnote{
For efficiency reasons, our implementation works directly on incomplete automata.}
A state is \emph{dead} iff either it is not reachable from an initial state,
or it cannot reach an accepting loop.
In our simplification techniques, we always remove dead states.

A B\"uchi automaton $\A$ describes a set of infinite words (its language), i.e., a subset of $\Sigma^\omega$.
An \emph{infinite trace} of $\A$ on a word $w = \symb_0\symb_1 \cdots \in \Sigma^\omega$ (or \emph{$w$-trace})
\emph{starting} in a state $q_0 \in Q$ is an infinite sequence of transitions 
$\pi = q_0 \goesto {\symb_0} q_1 \goesto {\symb_1} \cdots$.
By $\prefix \pi i$ we denote the finite prefix $\pi = q_0 \goesto {\symb_0} \cdots \goesto {\symb_{i-1}} q_i$,
and by $\suffix \pi i$ the infinite suffix $q_i \goesto {\symb_i} q_{i+1} \goesto {\symb_{i+1}} \cdots$.
Finite traces starting in $q_0$ and \emph{ending} in a state $q_m \in Q$ are defined similarly.
A finite or infinite trace is \emph{initial} iff it starts in an initial state $q_0 \in I$;
if it is infinite, then it is \emph{fair} iff $q_i \in F$ for infinitely many $i$.
The \emph{language of $\A$} is $\lang \A = \{w \in \Sigma^\omega \mid \mbox{$\A$ has an infinite, initial and fair trace on $w$} \}$.

\xparagraph{Language inclusion.}
When automata are viewed as a finite representation for languages,
it is natural to ask whether two different automata represent the same language,
or, more generally, to compare these languages for inclusion.
Formally, for two automata $\A = (\Sigma, Q_\A, I_\A, F_\A, \delta_\A)$ and $\B = (\Sigma, Q_\B, I_\B, F_\B, \delta_\B)$
we write $\A \languageinclusion \B$ iff $\lang \A \subseteq \lang \B$ and $\A \languageequivalence \B$ iff $\lang \A = \lang \B$.
The \emph{language inclusion/equivalence problem} consists in determining whether $\A \languageinclusion \B$ or $\A \languageequivalence \B$ holds, respectively.
For general non-deterministic automata,
language inclusion and equivalence are PSPACE-complete \cite{kupfermanvardi:fair_verification}
(which entails that, under standard theoretic-complexity assumptions, they admit no efficient deterministic algorithm).
Therefore, one considers suitable under-approximations.
%
\begin{definition}
	A preorder $\sqsubseteq$ on $Q_\A \times Q_\B$ is \emph{good for inclusion} (GFI)
	iff the following holds: If $\forall q\in I_\A \exists q' \in I_\B \cdot q \sqsubseteq q'$, then $\A \languageinclusion \B$.
\end{definition}
In other words, GFI preorders give a sufficient condition for inclusion,
by matching initial states of $\A$ with initial states of $\B$.
(They are not necessary for inclusion since there are several initial states.)
Moreover, if computing a GFI preorder is efficient, than also inclusion can be established efficiently.
Finally, if a preorder is GFI, then all smaller preorders are GFI too, i.e., GFI is $\subseteq$-downward closed.

\xparagraph{Quotienting.}

Another interesting problem is how to simplify an automaton while preserving
its semantics, i.e., its language. Generally, one tries to reduce the number
of states/transitions. 
This is useful because the complexity of decision procedures usually depends
on the size of the input automata.

A classical operation for reducing the number of states of an automaton is that of quotienting,
where states of the automaton are identified according to a given equivalence, and transitions are projected accordingly.
Since in practice we obtain quotienting equivalences from suitable preorders,
we directly define quotienting w.r.t. a preorder.
Formally, fix a BA $\A = (\Sigma, Q, I, F, \delta)$ and a preorder $\sqsubseteq$ on $Q$,
with induced equivalence $\equiv = \sqsubseteq \cap \sqsupseteq$.
Given a state $q \in Q$, we denote by $[q]$ its equivalence class w.r.t. $\equiv$,
and, for a set of states $P \subseteq Q$, $[P]$ is the set of equivalence classes $[P] = \{ [p] \st p \in P \}$.
\begin{definition}
	The \emph{quotient} of $\A$ by $\sqsubseteq$ is
	$\A/\!\sqsubseteq = (\Sigma, [Q], [I], [F], \delta')$,
	where $\delta' = \{([q_1],\symb,[q_2]) \st \exists q_1' \in [q_1], q_2' \in [q_2].\, (q_1',\symb,q_2') \in \delta\}$,
	i.e., transitions are induced element-wise.
\end{definition}
Clearly, every trace $q_0 \goesto {\symb_0} q_1 \goesto {\symb_1} \cdots$ in $\A$
immediately induces a corresponding trace $[q_0] \goesto {\symb_0} [q_1] \goesto {\symb_1} \cdots$ in $\A/\!\sqsubseteq$,
which is fair/initial if the former is fair/initial, respectively.
Consequently, $\A \languageinclusion \A/\!\sqsubseteq$ for \emph{any} preorder $\sqsubseteq$.
If, additionally, $\A/\!\sqsubseteq\ \languageinclusion\ \A$,
then we say that the preorder $\sqsubseteq$ is good for quotienting (GFQ).
\begin{definition}
	A preorder $\sqsubseteq$ is \emph{good for quotienting} iff
	$\A/\!\sqsubseteq\ \languageequivalence\ \A$.
\end{definition}
%
%
Like GFI preorders, also GFQ preorders are downward closed (since a smaller preorder is quotienting ``less'').
Therefore, we are interested in efficiently computable GFI/GFQ preorders.
A classical example is given by \emph{simulation relations}.

\xparagraph{Simulation relations.}
\ignore{
	Trace inclusions are PSPACE-complete to compute in general \cite{some paper by Vardi};
	intuitively, this holds since the quantification pattern ``$\forall\exists$'' is over arbitrary long traces.
	By restricting the quantification to only 1-step,
	and by building the trace incrementally by repeated quantification ``$\forall\exists\forall\exists\cdots$'',
	one obtains \emph{simulation relations} \cite{milner and park}, which are computable in PTIME.
}
Basic forward simulation is a binary relation on the states of $\A$;
it relates states whose behaviors are step-wise related,
which allows one to reason about the internal structure of automaton $\A$---
i.e., \emph{how} a word is accepted, and not just \emph{whether} it is accepted.
Formally, simulation between two states $p_0$ and $q_0$ can be described in terms of a game between two players, Spoiler and Duplicator,
where the latter wants to prove that $q_0$ can step-wise mimic any behavior of $p_0$, and the former wants to disprove it.
The game starts in the initial configuration $(p_0, q_0)$.
Inductively, given a game configuration $(p_i, q_i)$ at the $i$-th round of the game,
Spoiler chooses a symbol $\symb_i \in \Sigma$ and a transition $\trans{p_i}{\symb_i}{p_{i+1}}$.
Then, Duplicator responds by choosing a matching transition $\trans{q_i}{\symb_i}{q_{i+1}}$,
and the next configuration is $(p_{i+1}, q_{i+1})$.
Since the automaton is assumed to be complete, the game goes on forever,
and the two players build two infinite traces
$\pi_0 = p_0 \goesto {\symb_0} p_1 \goesto {\symb_1} \cdots$ and $\pi_1 = q_0 \goesto {\symb_0} q_1 \goesto {\symb_1} \cdots$.
The winning condition depends on the type of simulation,
and different types have been considered depending on whether one is interested in GFQ or GFI relations.
Here, we consider \emph{direct} \cite{dill:inclusion:1992},
\emph{delayed} \cite{etessami:etal:fairsimulations:05}
and \emph{fair simulation} \cite{fairsimulation:02}.
Let $x \in \{\mathrm{di, de, f}\}$.
Duplicator wins the play if $\mathcal C^x(\pi_0, \pi_1)$ holds, where 
\begin{align}
	\mathcal C^{\mathrm {di}}(\pi_0, \pi_1)	&\iff \forall (i \geq 0) \cdot p_i \in F \implies q_i \in F \\
	\mathcal C^{\mathrm {de}}(\pi_0, \pi_1)	&\iff \forall (i \geq 0) \cdot p_i \in F \implies \exists (j \geq i) \cdot q_j \in F \\
	\mathcal C^{\mathrm f}(\pi_0, \pi_1)	&\iff \textrm{ if $\pi_0$ is fair, then $\pi_1$ is fair }
\end{align}
Intuitively, direct simulation requires that accepting states are matched immediately (the strongest condition),
while in delayed simulation Duplicator is allowed to accept only after a finite delay.
In fair simulation (the weakest condition),
Duplicator must visit accepting states only if Spoiler visits infinitely many of them.
Thus, $\mathcal C^{\mathrm {di}}(\pi_0, \pi_1)$ implies $\mathcal C^{\mathrm {de}}(\pi_0, \pi_1)$,
which, in turn, implies $\mathcal C^{\mathrm f}(\pi_0, \pi_1)$.

We define $x$-simulation relation $\xsim x \subseteq Q \times Q$
by stipulating that $p_0 \xsim x q_0$ iff Duplicator has a winning strategy in the $x$-simulation game,
starting from configuration $(p_0, q_0)$;
clearly, $\disim \subseteq \desim \subseteq \fsim$.
Simulation between states in different automata $\A$ and $\B$ can be computed as a simulation on their disjoint union.
All these simulation relations are GFI preorders which can be computed in polynomial time
\cite{dill:inclusion:1992,HHK:FOCS95,etessami:etal:fairsimulations:05};
moreover, direct and delayed simulation are GFQ \cite{etessami:etal:fairsimulations:05},
but fair simulation is not \cite{fairsimulation:02}. 
\begin{lemma}[\cite{dill:inclusion:1992,HHK:FOCS95,fairsimulation:02,etessami:etal:fairsimulations:05}]
	For $x \in \{ \mathrm{di, de, f} \}$, $x$-simulation $\sqsubseteq^x$ is a PTIME, GFI preorder,
	and, for $y \in \{ \mathrm{di, de} \}$, $\sqsubseteq^y$ is also GFQ.
\end{lemma}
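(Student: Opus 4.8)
The plan is to establish four properties in turn: that each $\xsim x$ is a preorder, that it is PTIME-computable, that it is GFI, and that $\xsim{\mathrm{di}}$ and $\xsim{\mathrm{de}}$ are GFQ. The first two are routine. Reflexivity holds because Duplicator can copy Spoiler's transitions verbatim, making $\pi_0 = \pi_1$, so every $\mathcal C^x$ holds trivially. Transitivity follows by composing winning strategies: given $p \xsim x q$ and $q \xsim x r$, Duplicator plays the $q$-vs-$r$ game feeding it, as Spoiler input, the moves her own winning strategy produces in the $p$-vs-$q$ game; the resulting traces $\pi_0, \pi_1, \pi_2$ satisfy $\mathcal C^x(\pi_0,\pi_1)$ and $\mathcal C^x(\pi_1,\pi_2)$, and each winning condition is transitive (for $\mathrm{di}$, $p_i \in F \Rightarrow q_i \in F \Rightarrow r_i \in F$; for $\mathrm{de}$ the delays compose; for $\mathrm f$ fairness propagates), yielding $\mathcal C^x(\pi_0,\pi_2)$. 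For PTIME, each game is played on the product arena of size $O(|Q|^2|\Sigma|)$ equipped with a safety ($\mathrm{di}$), a one-bit obligation-tracking B\"uchi ($\mathrm{de}$), or a fair/parity ($\mathrm f$) winning condition; the winning region, hence $\xsim x$, is the greatest fixpoint of the monotone controllable-predecessor operator $\cpredi\cdot$ (resp. its delayed/fair analogues), which stabilises after polynomially many iterations.

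For GFI, assume $\forall q \in I_\A\, \exists q' \in I_\B\, \cdot\, q \xsim x q'$, and let $w \in \lang\A$ be witnessed by an initial fair trace $\pi_0$ starting in some $q \in I_\A$. Choose $q' \in I_\B$ with $q \xsim x q'$ and let Duplicator answer $\pi_0$ by her winning strategy, producing a trace $\pi_1$ in $\B$ on the same word $w$ starting in $q' \in I_\B$, hence initial. Since $\mathcal C^{\mathrm{di}}$ and $\mathcal C^{\mathrm{de}}$ each imply $\mathcal C^{\mathrm f}$, in all three cases $\mathcal C^{\mathrm f}(\pi_0,\pi_1)$ holds, so the fairness of $\pi_0$ forces $\pi_1$ to be fair; thus $w \in \lang\B$ and $\A \languageinclusion \B$.

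For GFQ, fix $y \in \{\mathrm{di},\mathrm{de}\}$. The inclusion $\A \languageinclusion \A/\!\xsim y$ holds for every preorder, as already observed, so it remains to prove $\A/\!\xsim y \languageinclusion \A$. Given $w \in \lang{\A/\!\xsim y}$ witnessed by an initial fair quotient trace $[r_0] \goesto{\symb_0} [r_1] \goesto{\symb_1} \cdots$, I would unfold each quotient transition to a concrete transition $s_i \goesto{\symb_i} t_i$ in $\A$ with $s_i \equiv r_i$ and $t_i \equiv r_{i+1} \equiv s_{i+1}$. The difficulty is that the target $t_i$ need not coincide with the next source $s_{i+1}$, only $t_i \equiv s_{i+1}$, so the concrete transitions do not chain directly. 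I would repair this using Duplicator's strategy: since $s_{i+1} \xsim y t_i$, Duplicator from $t_i$ can mimic any continuation from $s_{i+1}$, and threading her responses through the successive jumps produces a single genuine run $\pi = p_0 \goesto{\symb_0} p_1 \goesto{\symb_1} \cdots$ in $\A$ on $w$ with $p_0 \in I$ (an initial representative of $[r_0]$) and $p_i \equiv r_i$ throughout.

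The substantive step, and the main obstacle, is to show that $\pi$ is fair, i.e., that the infinitely many accepting classes along the quotient trace translate into infinitely many accepting states along $\pi$. For $\mathrm{di}$ this is immediate, since direct simulation preserves membership in $F$ across $\equiv$, so each accepting quotient position yields an accepting state of $\pi$ at the same index. For $\mathrm{de}$ the matching may lag: an obligation incurred at position $i$ is only guaranteed to be discharged at some later position, and a naive choice of representatives can postpone it across successive jumps. The key is to carry Duplicator's pending delayed obligation along the jumps and argue, using that the quotient trace visits $F$ infinitely often together with the fact that $\mathcal C^{\mathrm{de}}$ forces every obligation to be eventually met, that accepting states recur infinitely often in $\pi$. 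This is exactly where the argument exploits $\mathcal C^{\mathrm{de}}$ and would break for $\mathcal C^{\mathrm f}$ alone, consistent with fair simulation not being GFQ, so isolating and discharging the obligation correctly is the crux of the proof.
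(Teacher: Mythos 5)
First, a framing remark: the paper does not prove this lemma at all --- it is quoted from the literature with citations --- so there is no in-paper proof to compare against, and your proposal has to stand on its own. The routine parts do: reflexivity and transitivity by strategy composition, GFI by feeding an accepting run of $\A$ to Duplicator's strategy and using $\mathcal C^{\mathrm{di}} \Rightarrow \mathcal C^{\mathrm{de}} \Rightarrow \mathcal C^{\mathrm f}$, and PTIME via game solving are all fine (one slip: for fair simulation the winning region is not a single greatest fixpoint of a controllable-predecessor operator but a nested $\mu\nu\mu$ fixpoint of a parity/Streett game, as in the paper's Appendix C; this does not affect the PTIME conclusion). The GFQ argument for $\disim$ also goes through, with the correction that the run you build satisfies only $r_i \disim p_i$ rather than $p_i \equiv r_i$ --- Duplicator's one-step responses preserve the simulation in one direction only --- which still suffices because direct simulation propagates membership in $F$.

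The genuine gap is the GFQ claim for $\desim$, which is precisely the nontrivial content of the cited Etessami--Wilke--Schuller result. You correctly name the danger: at each step $i$ you abandon the game started at $(s_i,p_i)$ and restart from $(s_{i+1},p_{i+1})$ with a different representative, so an obligation incurred when the quotient run visits $[F]$ is erased before Duplicator has discharged it, and nothing you say rules out that this postponement recurs forever, yielding a run of $\A$ on $w$ that never visits $F$ although the quotient run is fair. Asserting that one should ``carry Duplicator's pending delayed obligation along the jumps'' is a restatement of what must be proved, not an argument. Closing it requires the actual machinery: working with the obligation-annotated winning region of the delayed game, showing it is compatible with the equivalence $\equiv$ so that switching representatives preserves a \emph{pending} (not reset) obligation, and a ranking or well-foundedness argument that every obligation is discharged after finitely many such switches. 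Without that, the $\mathrm{de}$ half of the GFQ claim is unproven, and the proof as written would equally ``work'' for fair simulation, which is known not to be GFQ --- a sign that the decisive step is missing rather than merely elided.
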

%

\xparagraph{Trace inclusions.}

While simulations are efficiently computable,
their use is often limited by their size,
which can be much smaller than other GFI/GFQ preorders.
One such example of coarser GFI/GFQ preorders is given by \emph{trace inclusions},
which are obtained through a modification of the simulation game, as follows.

In simulation games, the players build two paths $\pi_0, \pi_1$ by choosing single transitions in an alternating fashion;
Duplicator moves by knowing only the next 1-step move of Spoiler.
We can obtain coarser relations by allowing Duplicator a certain amount of \emph{lookahead} on Spoiler's moves.
In the extremal case of $\omega$-lookahead, i.e.,
where Spoiler has to reveal her whole path in advance,
we obtain trace inclusions.

Analogously to simulations, we define direct, delayed, and fair trace inclusion, as binary relations on $Q$.
For $x \in \{\mathrm{di, de, f}\}$, \emph{$x$-trace inclusion} holds between $p$ and $q$, written $p \xincl x q$ iff,
for every word $w = \symb_0\symb_1 \cdots \in \Sigma^\omega$,
and for every infinite $w$-trace $\pi_0 = p_0 \goesto {\symb_0} p_1 \goesto {\symb_1} \cdots$ starting at $p_0 = p$, 
there exists an infinite $w$-trace $\pi_1 = q_0 \goesto {\symb_0} q_1 \goesto {\symb_1} \cdots$ starting at $q_0 = q$, 
s.t. $\mathcal C^x(\pi_0, \pi_1)$.
All these trace inclusions are GFI preorders subsuming the corresponding simulation, i.e., $\xsim x\ \subseteq \xincl x$
(since Duplicator has more power in the trace inclusion game);
also, $\directtraceinclusion$ is a subset of $\delayedtraceinclusion$, which, in turn, is a subset of $\fairtraceinclusion$.
%
%
\ignore{
Note that fair trace inclusion is different from language inclusion,
the former being a relation between states, while the latter is a relation between automata.
However, trace inclusions are GFI, and thus they can be used to establish inclusion between automata
(If both automata have exactly one initial state,
then fair trace inclusion is even complete for language inclusion,
i.e., fair trace inclusion between the unique initial states holds iff language inclusion holds.)
}
Regarding quotienting, $\directtraceinclusion$ is GFQ (like $\disim$; this follows from \cite{etessami:hierarchy02}),
while $\fairtraceinclusion$ is not, since it is coarser than fair simulation, which is not GFQ \cite{fairsimulation:02}.
While delayed simulation $\desim$ is GFQ,
delayed trace inclusion $\delayedtraceinclusion$ is not GFQ \cite{buchiquotient:ICALP11}.
\begin{lemma}
	For $x \in \{ \mathrm{di, de, f} \}$, $x$-trace inclusion $\subseteq^x$ is a GFI preorder.
	Moreover, $\directtraceinclusion$ is a GFQ preorder.
\end{lemma}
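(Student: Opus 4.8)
The plan is to establish three properties in turn: that each $\xincl x$ is a preorder, that it is GFI, and that $\directtraceinclusion$ is moreover GFQ. Reflexivity is immediate, since for any $w$-trace $\pi_0$ from $p$ the witness $\pi_1 = \pi_0$ satisfies $\mathcal C^x(\pi_0,\pi_0)$ (accepting states are matched exactly, so even $\mathcal C^{\mathrm{di}}$ holds). For transitivity, suppose $p \xincl x q$ and $q \xincl x r$; given a $w$-trace $\pi_0$ from $p$, I would compose the two witnesses to obtain $\pi_1$ from $q$ and $\pi_2$ from $r$, and observe that each condition $\mathcal C^x$ is itself transitive: $\mathcal C^{\mathrm{di}}$ composes by chaining the implications $p_i\in F\Rightarrow q_i\in F\Rightarrow r_i\in F$ position-wise, $\mathcal C^{\mathrm{de}}$ by chaining the two existential delays ($j\ge i$ then $k\ge j$), and $\mathcal C^{\mathrm f}$ by chaining the two fairness implications. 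Hence $\mathcal C^x(\pi_0,\pi_2)$ holds and $p \xincl x r$.

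For GFI, assume $\forall q\in I_\A\,\exists q'\in I_\B\cdot q\xincl x q'$ and take $w\in\lang\A$ witnessed by a fair initial $w$-trace $\pi_0$ from some $q_0\in I_\A$. Choosing $q_0'\in I_\B$ with $q_0\xincl x q_0'$ yields a $w$-trace $\pi_1$ from $q_0'$ with $\mathcal C^x(\pi_0,\pi_1)$. Since $\mathcal C^{\mathrm{di}}\Rightarrow\mathcal C^{\mathrm{de}}\Rightarrow\mathcal C^{\mathrm f}$, in every case $\mathcal C^{\mathrm f}(\pi_0,\pi_1)$ holds, so fairness of $\pi_0$ transfers to $\pi_1$; as $\pi_1$ is initial and reads $w$, we get $w\in\lang\B$, i.e. $\A\languageinclusion\B$.

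For GFQ of $\directtraceinclusion$, the inclusion $\A\languageinclusion\A/\!\directtraceinclusion$ is free (noted in the excerpt for any preorder), so only $\A/\!\directtraceinclusion\languageinclusion\A$ needs proof. The first step I would isolate is a clean consequence of the direct condition: if $p\equiv q$ (the equivalence induced by $\directtraceinclusion$) then $p\in F\iff q\in F$, because reading $\mathcal C^{\mathrm{di}}$ at position $0$ forces the matched state to be accepting whenever the original is. Consequently each class lies wholly inside $F$ or disjoint from it, so $[q_i]\in[F]$ iff $q_i\in F$, and the accepting positions of a fair quotient run are genuine accepting positions for every representative.

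It then remains to lift a fair initial run $[q_0]\goesto{\symb_0}[q_1]\goesto{\symb_1}\cdots$ of $\A/\!\directtraceinclusion$ to a fair initial $w$-trace of $\A$, where $w=\symb_0\symb_1\cdots$. Each quotient edge is witnessed by a concrete transition $a_i\goesto{\symb_i}b_i$ with $a_i\in[q_i]$, $b_i\in[q_{i+1}]$, and $\A$ has an initial state in $[q_0]$ since $[q_0]\in[I]$. The main obstacle is that these witnessing transitions need not chain into a single trace of $\A$: the endpoint $b_i$ and the next source $a_{i+1}$ are only $\equiv$-equivalent, not equal. I would bridge each such mismatch using $\directtraceinclusion$ to transfer a continuation across equivalent states; the delicate point, which is exactly where a naive one-step induction breaks, is that matching a single step yields a trace agreeing with the representative only on this one chosen future, not on all futures, so the simulation invariant is lost after one step. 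The clean resolution is the subset/multipebble construction of \cite{etessami:hierarchy02}: maintain a set of reachable states of $\A$ covering the current class and exploit the strength of the direct condition (immediate matching of accepting states, together with the class-invariance of $F$ above) to guarantee that some genuine trace of $\A$ visits $F$ at every accepting position of the quotient run; a K\"onig's-lemma limit of the resulting finite prefixes then yields the required fair initial $w$-trace. This is precisely the argument that fails for $\delayedtraceinclusion$, whose delayed condition does not make $F$ class-invariant, which is why delayed trace inclusion is not GFQ.
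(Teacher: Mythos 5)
Your proposal is correct and follows essentially the same route as the paper, which states this lemma without proof: the preorder and GFI parts are the elementary arguments the paper implicitly relies on, and for GFQ you correctly identify both the class-invariance of $F$ under $\directtraceinclusion$ and the failure of a naive one-step lifting, deferring the crucial step to the multipebble/subset construction of \cite{etessami:hierarchy02} --- exactly the reference the paper itself invokes. The only caveat is that the covering-set invariant and its K\"onig's-lemma extraction are sketched rather than carried out, but the sketch is the right one.
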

\noindent
Finally, though $\desim$ and $\directtraceinclusion$ are incomparable,
there exists a common generalization included in $\delayedtraceinclusion$
called \emph{delayed fixed-word simulation} which is GFQ \cite{buchiquotient:ICALP11}.%
\footnote{Delayed fixed-word simulation is defined as a variant of simulation
where Duplicator has $\omega$-lookahead only on the input word $w$, and \emph{not} on Spoiler's actual $w$-trace $\pi_0$;
that it subsumes $\directtraceinclusion$ is non-trivial.}

\ignore{
	\emph{Delayed fixed-word simulation $\delayedfixedwordsimulation$} \cite{ICALP11} can be defined as a variant of simulation,
	where Duplicator has $\omega$-lookahead only on the input word $w = \symb_0\symb_1 \cdots \in \Sigma^\omega$,
	but not on Spoiler's trace $\pi_0$: Once Spoiler reveals $w$,
	the two players alternate in choosing transitions like in ordinary simulation,
	with the proviso that the input symbol at round $i$ is $\symb_i$.
	The winning condition on the resulting paths $\pi_0, \pi_1$ is $\mathcal C^{\mathrm {de}}(\pi_0, \pi_1)$, like in delayed simulation.
	Delayed fixed-word simulation is a GFQ preorder;
	by definition, it is coarser than delayed simulation (strictly coarser, in fact),
	and, more importantly, it can be proved to strictly subsume $\directtraceinclusion$ \cite{ICALP11}.
}

\xparagraph{Backward simulation and trace inclusion.}

Yet another way of obtaining GFQ/GFI preorders is to consider variants of simulation/trace inclusion which go backwards in time.
\emph{Backward simulation} $\bwsim$ (\cite{somenzi:efficient}, where it is called \emph{reverse simulation})
is defined like ordinary simulation, except that transitions are taken backwards:
From configuration $(p_i, q_i)$, Spoiler selects a transition $\trans {p_{i+1}} {\symb_i} {p_i}$,
Duplicator replies with a transition $\trans {q_{i+1}} {\symb_i} {q_i}$,
and the next configuration is $(p_{i+1}, q_{i+1})$.
Let $\pi_0$ and $\pi_1$ be the two infinite backward traces built in this way.
The corresponding winning condition considers both accepting and initial states:
\begin{align}
	\mathcal C^\mathrm{bw}(\pi_0, \pi_1) \ \iff\ \forall (i \geq 0) \cdot
	\left\{\begin{array}{l}
		p_i \in F \implies q_i \in F, \textrm{ and } \\
		p_i \in I \implies q_i \in I
	\end{array}\right.
\end{align}
$\bwsim$ is an efficiently computable GFQ preorder \cite{somenzi:efficient} incomparable with forward simulations.
It can be used to establish language inclusion by matching final states of $\A$ with final states of $\B$
(dually to forward simulations); in this sense, it is GFI.
\begin{lemma}[\cite{somenzi:efficient}]
	Backward sim. is a PTIME GFQ/GFI preorder.
\end{lemma}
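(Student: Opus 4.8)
The plan is to establish the three asserted properties of backward simulation $\bwsim$ separately: that it is a preorder, that it is GFQ, and that it is GFI. Since the lemma is attributed to \cite{somenzi:efficient}, the work is essentially to recall the standard game-based arguments rather than to discover anything new; the PTIME claim follows because $\bwsim$ is defined by exactly the same fixed-point computation as forward simulation, merely run on the transition-reversed automaton, so the classical simulation algorithms apply verbatim.

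First I would show $\bwsim$ is a preorder. Reflexivity is immediate: Duplicator copies Spoiler's moves, and the winning condition $\mathcal C^\mathrm{bw}$ holds trivially since the two backward traces coincide, so $p_i\in F\iff p_i\in F$ and $p_i\in I\iff p_i\in I$. For transitivity, suppose $p\bwsim q$ and $q\bwsim r$ witnessed by winning Duplicator strategies $\tau_1,\tau_2$. I would compose them in the usual way: against a backward move of Spoiler from $p$, run $\tau_1$ to obtain a matching backward move from $q$, then feed that move as a Spoiler move into $\tau_2$ to obtain a matching backward move from $r$. The composed strategy is winning because the two implications in $\mathcal C^\mathrm{bw}$ chain: $p_i\in F\Rightarrow q_i\in F\Rightarrow r_i\in F$, and identically for $I$. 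This is routine strategy composition and I expect no obstacle here.

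Next I would argue GFI, i.e. that matching final states of $\A$ with final states of $\B$ certifies $\A\languageinclusion\B$. Given a fair initial trace of $\A$ on $w$, I would use the backward simulation to reconstruct a corresponding trace of $\B$ by working \emph{backwards} from the accepting visits. The key point, dual to the forward case, is that the initial-state clause $p_i\in I\Rightarrow q_i\in I$ guarantees the reconstructed backward-built trace of $\B$ actually starts at an initial state, while the accepting clause guarantees fairness is preserved. Here a genuine subtlety arises, and I expect this to be the main obstacle: a backward trace is only finite when read from a fixed endpoint, so to produce a single infinite fair trace of $\B$ one cannot simply run the game once. The standard device is to exploit that $\A$ has infinitely many accepting positions along the fair trace and to take a limit (e.g. a König's-lemma / compactness argument over the finitely-branching tree of backward Duplicator responses) to stitch the finite backward segments into one infinite initial fair trace of $\B$ on the same word $w$. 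I would make this limit argument explicit, since it is the nontrivial step; everything upstream and downstream of it is bookkeeping.

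Finally, for GFQ I would verify $\A/\!\bwsim\ \languageequivalence\ \A$. One inclusion, $\A\languageinclusion\A/\!\bwsim$, is already free from the general quotienting remark in the preliminaries. For the reverse inclusion I would take a fair initial trace in the quotient $\A/\!\bwsim$ and lift it to a fair initial trace in $\A$. Proceeding backwards from an accepting class visit, I would pick concrete representatives transition-by-transition, using at each step the defining property of $\bwsim$ (that a backward transition from the chosen representative can be matched, preserving membership in $F$ and $I$) to ensure the lifted trace remains fair and, crucially, reaches an initial state of $\A$. This again requires the same compactness/limit argument as in the GFI case to assemble an infinite trace, so once that lemma is in hand the GFQ direction reduces to applying it together with the element-wise definition of $\delta'$. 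I would therefore factor out the backward-reconstruction limit argument as the single technical core and invoke it for both GFI and GFQ.
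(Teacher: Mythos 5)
Your proposal is correct, and it is essentially the argument the paper uses --- not for this lemma itself, which the paper states without proof as a citation to \cite{somenzi:efficient}, but for the strictly stronger Theorem~\ref{lem:bwincl_GFQ} on backward trace inclusion $\bwdirecttraceinclusion$: there too the core is building longer and longer finite initial traces that visit unboundedly many accepting states and then extracting an infinite fair initial trace by K\"onig's Lemma over the finitely branching automaton. You correctly identify that limit step as the single technical crux shared by the GFQ and GFI directions. The only remark worth making is that, within this paper, there is a one-line alternative route: since $\bwsim\ \subseteq\ \bwdirecttraceinclusion$ and both GFQ and GFI are $\subseteq$-downward closed, the lemma (except for the PTIME claim, which is the standard fixed-point computation on the reversed automaton, as you say) is an immediate corollary of Theorem~\ref{lem:bwincl_GFQ}; your direct game-theoretic proof buys independence from that theorem at the cost of repeating its compactness argument. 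One small caution for the write-up: K\"onig's Lemma alone yields an infinite initial branch but not automatically a \emph{fair} one, so the stitching step should be phrased so that the infinite branch is selected from the subtree of prefixes of the constructed traces with unboundedly many accepting visits (the paper glosses this at the same level, so it is a presentational point rather than a gap).
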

\noindent
The corresponding notion of \emph{backward trace inclusion} $\bwdirecttraceinclusion$ is defined as follows:
$p \bwdirecttraceinclusion q$ iff,
for every finite word $w = \symb_0\symb_1 \cdots \symb_{m-1} \in \Sigma^*$,
and for every initial, finite $w$-trace
$\pi_0 = p_0 \goesto {\symb_0} p_1 \goesto {\symb_1} \cdots \goesto {\symb_{m-1}} p_m$ ending in $p_m = p$,
there exists an initial, finite $w$-trace
$\pi_1 = q_0 \goesto {\symb_0} q_1 \goesto {\symb_1} \cdots \goesto {\symb_{m-1}} q_m$ ending in $q_m = q$,
s.t., for any $i \geq 0$,  if $p_i \in F$, then $q_i \in F$.
Note that backward trace inclusion deals with \emph{finite traces} (unlike forward trace inclusions),
which is due to the asymmetry between past and future in $\omega$-automata.
Clearly, $\bwsim \subseteq \bwdirecttraceinclusion$;
we observe that even $\bwdirecttraceinclusion$ is GFQ/GFI.
\begin{theorem}\label{lem:bwincl_GFQ}
	Backward trace inclusion \ignore{$\bwdirecttraceinclusion$} is a GFQ/GFI preorder.%
\end{theorem}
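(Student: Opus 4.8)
The plan is to prove the two properties separately, after the routine observation that $\bwdirecttraceinclusion$ is a preorder (reflexive by copying a trace, transitive by composing the two matching traces and using transitivity of ``at least as accepting''). For GFQ it suffices to establish $\A/\!\bwdirecttraceinclusion \languageinclusion \A$, since the reverse inclusion holds for every preorder. Since $\bwdirecttraceinclusion$ is a \emph{backward} relation, its GFI direction matches \emph{final} states: I will show that if every $p\in F_\A$ admits some $p'\in F_\B$ with $p \bwdirecttraceinclusion p'$, then $\A \languageinclusion \B$. Both arguments run through a single compactness engine, which I isolate first.

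\emph{The engine (a König-style extraction).} Suppose an automaton admits initial finite $w$-traces $\theta_0,\theta_1,\ldots$ of strictly increasing length, together with an infinite set of positions $n_0<n_1<\cdots$ such that, for each $k$, every sufficiently long $\theta_m$ passes through $F$ at position $n_k$. I claim the automaton then has an infinite, initial, fair $w$-trace. The proof is the standard diagonal argument: as the state set is finite, infinitely many $\theta_m$ agree on their first transition, infinitely many of those agree on the second, and so on, yielding a limit trace $\rho$ each of whose finite prefixes is shared by infinitely many $\theta_m$. The decisive point is that the witnessing positions $n_k$ are \emph{fixed} and common to all the $\theta_m$: for each $k$, infinitely many of the $\theta_m$ agreeing with $\rho$ up to position $n_k$ are long enough to be accepting there, so $\rho$ is accepting at $n_k$; as the $n_k$ are unbounded, $\rho$ is fair, and it is initial because every $\theta_m$ is. This compactness step is the main obstacle — successive witness traces need not be prefixes of one another (rerouting can rewrite the whole past), so one cannot take a naive limit of a nested chain; turning ``arbitrarily many accepting visits at common positions'' into genuine infinitary fairness is the technical heart.

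\emph{GFI.} Given $w \in \lang \A$, fix an initial fair $w$-trace $\pi = q_0 \goesto{\symb_0} q_1 \goesto{\symb_1}\cdots$ with accepting positions $i_0<i_1<\cdots$. For each $k$ the prefix $\prefix\pi{i_k}$ is an initial finite trace ending in $q_{i_k}\in F_\A$; picking $q'_k\in F_\B$ with $q_{i_k}\bwdirecttraceinclusion q'_k$, the definition of $\bwdirecttraceinclusion$ supplies an initial finite $\B$-trace $\tau_k$ of length $i_k$ reading $\symb_0\cdots\symb_{i_k-1}$, ending in $q'_k\in F_\B$ and at least as accepting as $\prefix\pi{i_k}$, hence accepting at each of $i_0,\ldots,i_k$. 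Feeding $(\tau_k)_k$ with positions $(i_k)_k$ into the engine yields an initial fair $w$-trace in $\B$, so $w\in\lang\B$.

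\emph{GFQ.} Let $[s_0]\goesto{\symb_0}[s_1]\goesto{\symb_1}\cdots$ be an initial fair run of $\A/\!\bwdirecttraceinclusion$ on $w$ with accepting positions $j_0<j_1<\cdots$, each quotient transition arising from a concrete $a_i\goesto{\symb_i}b_i$ with $a_i\equiv s_i$ and $b_i\equiv s_{i+1}$. I build, for each $n$, an initial finite $\A$-trace $\theta_n$ of length $n$ reading $\symb_0\cdots\symb_{n-1}$, ending in a state $\equiv s_n$ and accepting at every $j_k<n$, by processing positions left to right: holding a trace ending in $u_i\equiv s_i$, I reroute it via $u_i\bwdirecttraceinclusion a_i$ (valid since $u_i\equiv a_i$) to an at-least-as-accepting initial trace ending in $a_i$, then append $a_i\goesto{\symb_i}b_i$; at an accepting position $j_k$ I first reroute through a state of $F_\A$ equivalent to $s_{j_k}$, which exists because $[s_{j_k}]\in[F]$. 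Rerouting never destroys earlier accepting positions, since ``at least as accepting'' is monotone, and the $w=\epsilon$ instance of $\bwdirecttraceinclusion$ guarantees initiality is propagated (and handles the combined initial-and-accepting requirement at position $0$). The engine then extracts an initial fair $w$-trace in $\A$, giving $w\in\lang\A$ and hence $\A/\!\bwdirecttraceinclusion\ \languageequivalence\ \A$.
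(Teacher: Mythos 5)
Your proposal is correct and follows essentially the same route as the paper's proof: inductively re-route ever longer finite initial prefixes using $\bwdirecttraceinclusion$ (matching final states for GFI, concrete representatives of quotient transitions for GFQ), then extract an infinite fair trace by a K\"onig-style argument. Your only real addition is making that last compactness step explicit -- the paper just invokes K\"onig's Lemma, whereas you correctly observe that the witnesses are not nested and that fairness of the diagonal limit relies on the accepting visits occurring at fixed common positions, which the positional definition of $\bwdirecttraceinclusion$ guarantees.
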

\begin{proof}

	We first prove that $\bwdirecttraceinclusion$ is GFQ.
	Let $\sqsubseteq := \bwdirecttraceinclusion$.
	Let $w = \symb_0 \symb_1 \cdots \in \lang{\A/\!\sqsubseteq}$, and we show $w \in \lang \A$.
	There exists an initial, infinite and fair $w$-trace
	$\pi = [q_0] \goesto {\symb_0} [q_1] \goesto {\symb_1} \cdots$.
	For $i \geq 0$, let $w_i = \symb_0 \symb_1 \cdots \symb_i$ (with $w_{-1} = \varepsilon$),
	and let $\pi[0..i]$ be the $w_{i-1}$-trace prefix of $\pi$.
	For any $i \geq 0$, we build by induction an initial, finite $w_{i-1}$-trace $\pi_i$ ending in $q_i$ (of length $i$)
	visiting at least as many accepting states as $\pi[0..i]$ (and at the same time $\pi[0..i]$ does).
	
	For $i = 0$, just take the empty $\varepsilon$-trace $\pi_0 = q_0$.
	For $i > 0$, assume that an initial $w_{i-2}$-trace $\pi_{i-1}$ ending in $q_{i-1}$ has already been built.
	We have the transition $\trans {[q_{i-1}]} {\symb_{i-1}} {[q_i]}$ in $\lang{\A/\!\sqsubseteq}$.
	There exist $\hat q \in [q_{i-1}]$ and $\hat q' \in [q_i]$ s.t.
	we have a transition $\trans {\hat q} {\symb_{i-1}} {\hat q'}$ in $\A$.
	W.l.o.g. we can assume that $\hat q' = q_i$, since $[q_i] = [\hat q']$.
	By $q_{i-1} \bwdirecttraceinclusion \hat q$, there exists an initial, finite $w_{i-2}$-trace $\pi'$ ending in $\hat q$.
	By the definition of backward inclusion, $\pi'$ visits at least as many accepting states as $\pi_{i-1}$,
	which, by inductive hypothesis, visits at least as many accepting states as $\pi[0..i-1]$.
	Therefore, $\pi_i := \pi' \goesto {\symb_{i-1}} q_i$ is an initial, finite $w_{i-1}$-trace ending in $q_i$.
	Moreover, if $[q_i] \in F'$, then, since backward inclusion respects accepting states, $[q_i] \subseteq F$,
	hence $q_i \in F$, and, consequently, $\pi_i$ visits at least as many accepting states as $\pi[0..i]$.
	Since $\pi$ is fair, the finite, initial traces $\pi_0, \pi_1, \cdots$ visit unboundedly many accepting states.
	Since $\A$ is finitely branching, by K\"onig's Lemma there exists an initial, infinite and fair $w$-trace $\pi_\omega$.
	Therefore, $w \in \lang \A$.
	
	We now prove that $\bwdirecttraceinclusion$ is GFI.
	Let $\A$ and $\B$ be two automata.
	For backward notions, we require that every accepting state in $\A$ is in relation with an accepting state in $\B$.
	Let $w = \symb_0 \symb_1 \cdots \in \lang \A$,
	and let $\pi_0 = p_0 \goesto {\symb_0} p_1 \goesto {\symb_0} \cdots$ be an initial and fair $w$-path in $\A$.
	Since $\pi_0$ visits infinitely many accepting states,
	and since each such state is $\bwdirecttraceinclusion$-related to an accepting state in $\B$,
	by using the definition of $\bwdirecttraceinclusion$
	it is possible to build in $\B$ longer and longer finite initial traces in $\B$
	visiting unboundedly many accepting states.
	Since $\B$ is finitely branching,
	by K\"onig's Lemma there exists an infinite, initial and fair $w$-trace $\pi_\omega$ in $\B$.
	Thus, $w \in \lang \B$.
\end{proof}

\section{Transition Pruning Minimization Techniques} \label{sec:minimization}

While quotienting-based minimization techniques reduce the number of states by merging them,
we explore an alternative method which prunes (i.e., removes) transitions.
The intuition is that certain transitions can be removed from an automaton without changing its language
when other `better' transitions remain. 
\begin{definition}
	Let $\A = (\Sigma, Q, I, F, \delta)$ be a BA and let $\prunerel$ a transitive, asymmetric relation on $\delta$.
	The \emph{pruned automaton} is defined as $\prune{\A}{\prunerel} := (\Sigma, Q, I, F, \delta')$,
	with $\delta' = \{(p,\symb,r) \in \delta \st \nexists (p',\symb',r') \in \delta \cdot (p,\symb,r) \prunerel (p',\symb',r')\}$.
\end{definition}
%
\noindent
By the assumptions on $\prunerel$, the pruned automaton
$\prune{\A}{\prunerel}$ is uniquely defined.
Notice that transitions are removed `in parallel'. 
Though $\prunerel$ might depend on $\delta$,
$\prunerel$ is {\em not} re-computed even if the removal of a single transition changes
$\delta$. This is important because computing $\prunerel$ may be expensive.
Since removing transitions cannot introduce new words in the language,
$\prune{\A}{\prunerel} \languageinclusion \A$.
When also the converse inclusion holds (so the language is preserved),
we say that $\prunerel$ is \emph{good for pruning} (GFP),
i.e., $\prunerel$ is GFP iff $\prune{\A}{\prunerel} \!\languageequivalence\! \A$.
Clearly, GFP is $\subseteq$-downward closed (like GFI and GFQ).

We study GFP relations obtained by comparing the endpoints of
transitions over the same input symbol.
Formally, given two binary relations $\brel, \frel \subseteq Q \times Q$, we define
\[ \makeprunerel{\brel}{\frel} = \{((p,\symb,r),(p',\symb,r')) \st p \brel p' \textrm{ and } r \frel r' \} \]
$\makeprunerel{\cdot}{\cdot}$ is monotone in both arguments.
In the following, we explore which state relations $\brel,\frel$ induce GFP relations $\makeprunerel{\brel}{\frel}$.

It has long been known that 
$\makeprunerel{\id}{\strictdisim}$ and $\makeprunerel{\strictbwsim}{\id}$ are GFP
(see \cite{simulationminimization:03} where the removed transitions are called `little brothers').
Moreover, even the relation 
$R_t(\strictlanguageinclusion) := \makeprunerel{\id}{\strictdisim} \cup 
\{((p,\symb,r),(p,\symb,r')) \st \mbox{$(p,\symb,r')$ is transient and}
\ r \strictlanguageinclusion r'\}$ is GFP \cite{somenzi:efficient}, i.e.,
strict fair trace inclusion
suffices if the remaining transition can only be used once.
However, in general, $\makeprunerel{\id}{\strictlanguageinclusion}$ is not GFP.
Moreover, even if only transient transitions 
are compared/pruned, 
$\makeprunerel{\strictbwsim}{\strictlanguageinclusion}$ is not GFP; cf. Fig.~\ref{fig:pruning}.

\begin{figure}
	
	\begin{center}
		\begin{tabular}{c|ccc||cc}
			$\brel\backslash\frel$
								&	$\id$		&	$\strictdisim$	&	$\strictdirecttraceinclusion$
																					&	$\strictdesim$	&	$\strictfsim$	\\
			\hline
			$\id$				&	$\tickNO$	&	$\tickOK$		&	$\tickOK$	&	$\tickNO$		&	$\tickNO$		\\
			$\strictbwsim$		&	$\tickOK$	&	$\tickOK$		&	$\tickOK$	&	$\tickNO$		&	$\tickNO$		\\
			$\strictbwdirecttraceinclusion$
								&	$\tickOK$	&	$\tickOK$		&	$\tickNO$	&	$\tickNO$		&	$\tickNO$
		\end{tabular}
	\end{center}
	%
	%
	
	\caption{GFP relations $\makeprunerel{\brel}{\frel}$}
	\label{fig:GFP_relations}

\end{figure}

\begin{theorem}\label{lem:prune_id_strictdirecttraceinclusion}
For every asymmetric and transitive relation $R \subseteq \directtraceinclusion$,
$\makeprunerel{\id}{R}$ is GFP.
\end{theorem}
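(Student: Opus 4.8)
The plan is to prove the two language inclusions separately. Since pruning only deletes transitions, $\prune{\A}{\makeprunerel{\id}{R}} \languageinclusion \A$ holds for free, so all the work lies in showing $\A \languageinclusion \prune{\A}{\makeprunerel{\id}{R}}$; write $\A'$ for the pruned automaton and $\delta'$ for its transition set. Note first that $\makeprunerel{\id}{R}$ compares only transitions sharing the same source and symbol, so $(p,\symb,r)\in\delta$ is deleted precisely when some $(p,\symb,r')\in\delta$ satisfies $r \mathrel R r'$.

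The first key step is a maximality lemma: for every $(p,\symb,r)\in\delta$ there is a \emph{surviving} transition $(p,\symb,\hat r)\in\delta'$ with $r \directtraceinclusion \hat r$. I would prove this by chasing the chain $r \mathrel R r' \mathrel R r'' \cdots$ of ever-better $\symb$-successors of $p$: since $R$ is asymmetric and transitive it is a strict partial order, so on the finite successor set this chain cannot repeat and must halt at a surviving target $\hat r$. By transitivity $r \mathrel R \hat r$ (or $r=\hat r$ when $(p,\symb,r)$ itself survives), and because $R \subseteq \directtraceinclusion$ with $\directtraceinclusion$ a reflexive preorder, $r \directtraceinclusion \hat r$ in either case.

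Next, given an initial fair run $\pi = q_0 \goesto{\symb_0} q_1 \goesto{\symb_1} \cdots$ of $\A$ on some $w\in\lang\A$, I would build an initial fair run of $\A'$ on $w$ by induction, maintaining the invariant that after $i$ steps I have a run $q_0 = s_0 \goesto{\symb_0} \cdots \goesto{\symb_{i-1}} s_i$ in $\A'$ \emph{together with} an $\A$-trace $\tau_i$ from $s_i$ on the suffix word $\symb_i\symb_{i+1}\cdots$ witnessing $\mathcal C^{\mathrm{di}}(\suffix{\pi}{i}, \tau_i)$. Carrying this concrete dominating trace $\tau_i$, rather than the cleaner-looking $q_i \directtraceinclusion s_i$, is exactly where the main obstacle lies: direct trace inclusion supplies \emph{one} matching trace for a given Spoiler trace but does not let one peel off its first state and reassert an inclusion there, so the naive step-by-step invariant $q_i \directtraceinclusion s_i$ cannot be maintained. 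For the step I take the first transition $s_i \goesto{\symb_i} t_1$ of $\tau_i$, replace it via the maximality lemma by a surviving $s_i \goesto{\symb_i} s_{i+1}$ with $t_1 \directtraceinclusion s_{i+1}$, and then feed the tail of $\tau_i$ (a trace from $t_1$ on $\symb_{i+1}\symb_{i+2}\cdots$) through this inclusion to obtain $\tau_{i+1}$ from $s_{i+1}$; composing the two $\mathcal C^{\mathrm{di}}$ conditions re-establishes the invariant at $i+1$.

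Finally, the invariant read at index $0$, i.e.\ the instance of $\mathcal C^{\mathrm{di}}(\suffix{\pi}{i}, \tau_i)$ comparing $q_i$ with $s_i$, yields $q_i \in F \Rightarrow s_i \in F$ for every $i$. Since $\pi$ is fair, infinitely many $q_i$ are accepting, hence so are infinitely many $s_i$; as $s_0 = q_0 \in I$ and every transition used lies in $\delta'$, the limit run $s_0 \goesto{\symb_0} s_1 \goesto{\symb_1} \cdots$ is an initial fair run of $\A'$ on $w$, so $w \in \lang{\A'}$. I expect the maximality lemma and the correct choice of invariant to be the only genuinely delicate points; the index bookkeeping when composing the two acceptance conditions is routine.
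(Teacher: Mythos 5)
Your proof is correct and follows essentially the same route as the paper's: the same key maximality lemma (every transition has an $R$-maximal, hence surviving, competitor because $R$ is a strict partial order on a finite successor set) and the same inductive one-step repair of the run, using $R \subseteq \directtraceinclusion$ to regenerate the suffix after each replacement. The only difference is bookkeeping --- you carry a finite $\A'$-prefix together with an explicit dominating witness trace, whereas the paper keeps a sequence of whole infinite $i$-good traces agreeing on ever-longer prefixes and passes to the limit; your remark that the naive invariant $q_i \directtraceinclusion s_i$ cannot be maintained is precisely the reason both proofs need one of these two devices.
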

\begin{proof}
Let $\A' = \prune{\A}{\makeprunerel{\id}{R}}$. 
We show $\A \languageinclusion \A'$.
If $w = \symb_0\symb_1 \cdots\in \lang{\A}$ then there exists an infinite fair
initial trace $\hat{\pi}$ on $w$ in $\A$. We show $w \in \lang{\A'}$.

We call a trace $\pi = q_0 \goesto {\symb_0} q_1 \goesto {\symb_1} \cdots$ on
$w$ in $\A$ $i$-{\em good} if it does not contain any
transition $q_j \goesto {\symb_j} q_{j+1}$ for $j < i$ s.t. there exists an $\A$ transition
$q_j \goesto {\symb_j} q_{j+1}'$ with $q_{j+1}\, R\, q_{j+1}'$ (i.e., no such
transition is used within the first $i$ steps).
Since $\A$ is finitely branching, for every state and symbol there exists at
least one $R$-maximal successor that is still present in
$\A'$, because $R$ is asymmetric and transitive.
Thus, for every $i$-good trace $\pi$ on
$w$ there exists an $(i+1)$-good trace $\pi'$ on $w$ s.t.
$\pi$ and $\pi'$ are identical on the first $i$ steps and 
$\mathcal C^{\mathrm {di}}(\pi, \pi')$, because
$R \subseteq \directtraceinclusion$.
Since $\hat{\pi}$ is an infinite fair initial trace on $w$ (which is trivially
$0$-good), there exists an
infinite initial trace $\tilde{\pi}$ on $w$ that is
$i$-good for every $i$ and $\mathcal C^{\mathrm{di}}(\hat{\pi}, \tilde{\pi})$.
Moreover, $\tilde{\pi}$ is a trace in $\A'$.
Since $\hat{\pi}$ is fair and $\mathcal C^{\mathrm{di}}(\hat{\pi}, \tilde{\pi})$,  
$\tilde{\pi}$ is an infinite fair initial trace on $w$ that is $i$-good for every $i$.
Therefore $\tilde{\pi}$ is a fair initial trace on $w$ in $\A'$ and thus $w \in \lang{\A'}$.
\end{proof}

\begin{theorem}\label{lem:prune_strictbwdirecttraceinclusion_id}
For every asymmetric and transitive relation $R \subseteq \bwdirecttraceinclusion$,
$\makeprunerel{R}{\id}$ is GFP.
\end{theorem}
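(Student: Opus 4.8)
The plan is to mirror the proof of Theorem~\ref{lem:prune_id_strictdirecttraceinclusion}, but dualized to the backward setting. Since $\bwdirecttraceinclusion$ is a statement about \emph{finite initial} traces, I would not build an infinite trace directly (as in the forward case, where the prefix stabilizes), but instead work with finite prefixes and pass to a limit via K\"onig's Lemma, exactly as in the GFQ part of Theorem~\ref{lem:bwincl_GFQ}. Write $\A' = \prune{\A}{\makeprunerel{R}{\id}}$. The inclusion $\A' \languageinclusion \A$ is immediate, so it suffices to prove $\A \languageinclusion \A'$. Fix $w = \symb_0\symb_1\cdots \in \lang\A$ together with an infinite, initial, fair $w$-trace $\hat\pi = \hat q_0 \goesto{\symb_0} \hat q_1 \goesto{\symb_1} \cdots$ in $\A$, and aim to show $w \in \lang{\A'}$.

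The core is a \emph{repair lemma}: every finite, initial $w$-trace $\rho = q_0 \goesto{\symb_0} \cdots \goesto{\symb_{n-1}} q_n$ in $\A$ can be turned into a finite, initial $w$-trace of the same length $n$ in $\A'$ (using only non-pruned transitions) that visits at least as many accepting states as $\rho$, and whose last state is still $q_n$. The last point holds because $\makeprunerel{R}{\id}$ only prunes a transition in favour of one with the same symbol and the same target. I would prove this by repairing the transitions of $\rho$ from the last one backward to the first. First note that, for a fixed symbol $\symb$ and target $r$, the sources $p$ with $\trans{p}{\symb}{r}$ form a finite set on which $R$ is a strict partial order (asymmetry and transitivity force irreflexivity); hence from any such source $p$ there is an $R$-maximal source $p^*$ with $p \bwdirecttraceinclusion p^*$ (using $R \subseteq \bwdirecttraceinclusion$, transitivity, and reflexivity of the preorder $\bwdirecttraceinclusion$ in case $p^* = p$), and $\trans{p^*}{\symb}{r}$ survives in $\delta'$ by maximality. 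Processing position $j$ from $n-1$ down to $0$, I maintain the invariant that all transitions at positions $> j$ already lie in $\delta'$. If the transition at position $j$ leaving the current state $u_j$ into the fixed next state $u_{j+1}$ is pruned, I replace $u_j$ by a surviving $R$-maximal source $u_j^*$ and use $u_j \bwdirecttraceinclusion u_j^*$ to replace the prefix $u_0 \goesto{} \cdots \goesto{} u_j$ by an initial finite trace ending in $u_j^*$ whose states are, position by position, accepting whenever the old ones were. This touches only positions $\le j$, leaves the good suffix intact, makes position $j$ survive in $\delta'$, and cannot decrease the number of accepting states visited.

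After all $n$ steps the repaired trace uses only $\A'$-transitions and visits at least as many accepting states as $\rho$. Applying the repair lemma to the prefixes $\prefix{\hat\pi}{n}$ of the fair trace $\hat\pi$ then yields, for every $n$, a finite initial $\A'$-trace on $w$; since $\hat\pi$ is fair these visit unboundedly many accepting states. As $\A'$ is finitely branching, the very same K\"onig's Lemma argument as in Theorem~\ref{lem:bwincl_GFQ} extracts an infinite, initial, fair $w$-trace in $\A'$, so $w \in \lang{\A'}$ and hence $\A \languageinclusion \A'$.

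The main obstacle to get right is the bookkeeping of the backward repair. Because $\bwdirecttraceinclusion$ rewrites the \emph{entire} prefix leading to the improved state, the repairs must proceed strictly from the last transition toward the first, so that each step only disturbs positions lying before the already-fixed good suffix; a front-to-back order would repeatedly destroy transitions repaired earlier. One must also verify that the chosen improved source $u_j^*$ is simultaneously $\bwdirecttraceinclusion$-above the original source and, by $R$-maximality, genuinely present in $\delta'$, and that the pointwise preservation of accepting states (the defining clause of $\bwdirecttraceinclusion$) does imply the monotonicity of the total accepting-state count that feeds K\"onig's Lemma.
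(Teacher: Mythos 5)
Your proposal is correct and is essentially the paper's own argument: the key mechanism in both is to replace the source of a pruned transition by an $R$-maximal $\symb$-predecessor of the (unchanged) target, which survives pruning by maximality, and then to rebuild the initial prefix via $\bwdirecttraceinclusion$, processing positions from late to early so that already-repaired transitions are never disturbed, and finally to pass to an infinite fair trace by K\"onig's Lemma. The only difference is presentational: the paper packages the back-to-front repair as an induction on $i$ over infinite traces (property (P), preserving the suffix from step $i$ onwards together with $\mathcal C^{\mathrm{di}}$), whereas you repair finite prefixes explicitly and reuse the K\"onig argument of Theorem~\ref{lem:bwincl_GFQ}.
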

\begin{proof}
Let $\A' = \prune{\A}{\makeprunerel{R}{\id}}$. We show $\A \languageinclusion \A'$.
If $w = \symb_0\symb_1 \cdots\in \lang{\A}$ then there exists an infinite fair
initial trace $\hat{\pi}$ on $w$ in $\A$. We show $w \in \lang{\A'}$.

We call a trace 
$\pi = q_0 \goesto {\symb_0} q_1 \goesto {\symb_1} \cdots$ on $w$ in $\A$
$i$-{\em good} if it does not contain any
transition $q_j \goesto {\symb_j} q_{j+1}$ for $j < i$ s.t. there exists an $\A$ transition
$q_j' \goesto {\symb_j} q_{j+1}$ with $q_j\, R\, q_j'$ (i.e., no such
transition is used within the first $i$ steps).

We show, by induction on $i$, the following property (P):
For every $i$ and every initial trace $\pi$ on $w$ in $\A$ there exists an
initial $i$-good trace $\pi'$ on $w$ in $\A$ s.t.
$\pi$ and $\pi'$ have identical suffixes from step $i$ onwards
and $\mathcal C^{\mathrm {di}}(\pi, \pi')$.

The base case $i=0$ is trivial with $\pi'=\pi$.
For the induction step there are two cases. If $\pi$ is $(i+1)$-good
then we can take $\pi' = \pi$.
Otherwise there exists a transition $q_i' \goesto {\symb_i} q_{i+1}$ with
$q_i\, R\, q_i'$. Without restriction (since $\A$
is finite and $R$ is asymmetric and transitive) we assume
that $q_i'$ is $R$-maximal among the
$\symb_i$-predecessors of $q_{i+1}$.
In particular, the transition $q_i' \goesto {\symb_i} q_{i+1}$
is present in $\A'$.
Since $R \subseteq \bwdirecttraceinclusion$, there exists
an initial trace $\pi''$ on $w$ that has suffix
$q_i' \goesto {\symb_i} q_{i+1} \goesto{\symb_{i+1}} q_{i+2} \dots$
and $\mathcal C^{\mathrm {di}}(\pi, \pi'')$.
Then, by induction hypothesis, there exists an initial $i$-good
trace $\pi'$ on $w$ that has suffix
$q_i' \goesto {\symb_i} q_{i+1} \goesto{\symb_{i+1}} q_{i+2} \dots$
and $\mathcal C^{\mathrm {di}}(\pi'', \pi')$.
Since $q_i'$ is $R$-maximal among the
$\symb_i$-predecessors of $q_{i+1}$ we obtain that $\pi'$ is also 
$(i+1)$-good. Moreover, $\pi'$ and $\pi$ have identical suffixes
from step $i+1$ onwards. Finally, 
by $\mathcal C^{\mathrm {di}}(\pi, \pi'')$
and $\mathcal C^{\mathrm {di}}(\pi'', \pi')$, we obtain
$\mathcal C^{\mathrm {di}}(\pi, \pi')$.

Given the infinite fair initial trace $\hat{\pi}$ on $w$ in $\A$,
it follows from property (P) and K\"onig's Lemma that there
exists an infinite initial trace $\tilde{\pi}$ on $w$ that is
$i$-good for every $i$ and $\mathcal C^{\mathrm {di}}(\hat{\pi},
\tilde{\pi})$.
Therefore $\tilde{\pi}$ is an infinite fair initial trace 
on $w$ in $\A'$ and thus $w \in \lang{\A'}$.
\end{proof}

\begin{theorem}\label{lem:bwsim-fwtrace}
If $\A = \A/\!\bwsim$ then
$\makeprunerel{\strictbwsim}{\directtraceinclusion}$ is GFP.
\end{theorem}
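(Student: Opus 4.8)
The plan is to establish the nontrivial inclusion $\A \languageinclusion \prune{\A}{\makeprunerel{\strictbwsim}{\directtraceinclusion}}$, the converse $\prune{\A}{\makeprunerel{\strictbwsim}{\directtraceinclusion}} \languageinclusion \A$ being immediate since pruning only deletes transitions. Write $\A' = \prune{\A}{\makeprunerel{\strictbwsim}{\directtraceinclusion}}$ and call a transition \emph{surviving} if it lies in $\A'$. First I would record that $\makeprunerel{\strictbwsim}{\directtraceinclusion}$ is asymmetric and transitive, so that $\A'$ is well defined: asymmetry comes from the strict first component $\strictbwsim$, even though $\directtraceinclusion$ is only a preorder, and transitivity from transitivity of both $\strictbwsim$ and $\directtraceinclusion$. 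The first real ingredient is a \emph{replacement lemma}: for every transition $(p,\symb,r)$ there is a surviving transition $(p^\ast,\symb,r^\ast)$ with $p \bwsim p^\ast$ and $r \directtraceinclusion r^\ast$. This follows by iterating the pruning criterion, since each pruned transition is witnessed by one with a strictly $\strictbwsim$-larger source and a $\directtraceinclusion$-larger target; the resulting $\strictbwsim$-increasing chain of sources must terminate because the hypothesis $\A = \A/\!\bwsim$ makes $\bwsim$ antisymmetric, hence $\strictbwsim$ a strict order on the finite state set.

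The second ingredient, and the exact point where backward \emph{simulation} is stronger than backward trace inclusion, is a \emph{lifting lemma}: if $s$ is $\bwsim$-maximal and $s \goesto{\symb} t$, then for every $t^\ast$ with $t \bwsim t^\ast$ there is a transition $s \goesto{\symb} t^\ast$; moreover \emph{every} transition out of a $\bwsim$-maximal source survives in $\A'$. For the first part, the backward move into $t$ from $s$ must be matched from $t^\ast$ by a transition $s' \goesto{\symb} t^\ast$ with $s \bwsim s'$, and antisymmetry of $\bwsim$ (again from $\A = \A/\!\bwsim$) together with maximality of $s$ forces $s' = s$. The second part is immediate, as a $\bwsim$-maximal source admits no strictly larger source to witness pruning. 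This step-wise liftability is exactly what $\bwdirecttraceinclusion$ lacks, and it explains why $\makeprunerel{\strictbwsim}{\strictdirecttraceinclusion}$ is admissible in Figure~\ref{fig:GFP_relations} while $\makeprunerel{\strictbwdirecttraceinclusion}{\strictdirecttraceinclusion}$ is not.

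Given $w \in \lang{\A}$ with a fair initial trace $\hat\pi$ on $w$, I would then build a fair initial trace in $\A'$ by an iterative two-directional surgery that repairs one pruned transition at a time. At each stage I keep a fair initial trace $\pi$ on $w$ in $\A$ with $\mathcal{C}^{\mathrm{di}}(\hat\pi,\pi)$; to remove its first pruned transition $p_i \goesto{\symb_i} p_{i+1}$ at position $i$, the replacement lemma yields a surviving $p_i^\ast \goesto{\symb_i} p_{i+1}^\ast$ with $p_i \bwsim p_i^\ast$ and $p_{i+1} \directtraceinclusion p_{i+1}^\ast$. The backward relation lets me rebuild the initial prefix up to position $i$ so that it ends in $p_i^\ast$, via $\bwsim\ \subseteq\ \bwdirecttraceinclusion$ exactly as in the proof of Theorem~\ref{lem:prune_strictbwdirecttraceinclusion_id}; the forward relation lets me rebuild a fair suffix from position $i+1$ out of $p_{i+1}^\ast$, direct-dominating the old suffix as in Theorem~\ref{lem:prune_id_strictdirecttraceinclusion}. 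Both rebuilds preserve the direct condition, and since $\mathcal{C}^{\mathrm{di}}$ is transitive and $\hat\pi$ is fair, every trace produced is again fair and initial. A K\"onig's-lemma limit over the stabilizing prefixes yields an infinite fair initial trace lying entirely in $\A'$, so $w \in \lang{\A'}$ and hence $\A \languageequivalence \A'$.

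The main obstacle is \emph{convergence} of this surgery, reflecting a two-sided interference absent from the two previous theorems: there the source (resp.\ target) is fixed, so prefixes (resp.\ suffixes) are stable and a single one-directional K\"onig argument suffices. Here a single repair changes \emph{both} ends, and rebuilding the prefix can introduce new pruned transitions \emph{before} position $i$, so the na\"ive ``first pruned position'' may keep moving backwards and the process need not obviously terminate. The control comes from the observation that the prefix rebuild moves every source weakly upward in $\bwsim$ and the repaired source strictly upward; since $\A = \A/\!\bwsim$ makes $\strictbwsim$ well founded on the finite state set, each position can be repaired only finitely often, which is what makes the prefixes stabilize and the limit exist. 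Turning this into a precise progress measure—repairing positions in increasing order while bounding the finitely many backward re-repairs by the $\strictbwsim$-rank—is the technical heart of the argument, and it is exactly the step that collapses when $\bwsim$ is weakened to $\bwdirecttraceinclusion$, which is neither step-wise liftable nor compatible with such a well-founded source ordering.
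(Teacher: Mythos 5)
Your argument is correct in substance, but it is a genuinely different proof from the one in the paper, so a comparison is in order. The paper first constructs one canonical trace --- start-maximal and ``$i$-good'' in the sense of always choosing a $\strictbwsim$-maximal successor among those admitting a $\mathcal C^{\mathrm{di}}$-dominating continuation --- and then shows \emph{by contradiction} that this trace uses no pruned transition: a pruned transition at position $j$ forces, via the backward simulation game, a chain of predecessors that at each step is either \emph{equal} to the old one (contradicting $j$-goodness) or \emph{strictly} $\strictbwsim$-above it (pushing the argument back to position $0$, contradicting start-maximality); the hypothesis $\A = \A/\!\bwsim$ is used precisely to get this equal-or-strictly-above dichotomy. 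You instead start from an arbitrary fair trace and repair pruned transitions one at a time by a two-sided surgery, closer in spirit to the proofs of Theorems~\ref{lem:prune_strictbwdirecttraceinclusion_id} and~\ref{lem:bwtrace-fwsim}, and you correctly isolate the crux: termination of the repairs, controlled by the fact that a repair at position $j$ moves the source at $j$ strictly up and all earlier sources weakly up in $\bwsim$. Two points deserve care. First, your phrase ``via $\bwsim\subseteq\bwdirecttraceinclusion$ exactly as in Theorem~\ref{lem:prune_strictbwdirecttraceinclusion_id}'' is misleading: backward \emph{trace inclusion} only yields a new initial prefix with positionally matched accepting states, not the pointwise $\bwsim$-domination your termination argument needs; the prefix must be rebuilt by playing the backward simulation \emph{game} along the old prefix (you clearly intend this, since you state the pointwise-upward property, but the cited mechanism does not deliver it --- and this is exactly why $\makeprunerel{\strictbwdirecttraceinclusion}{\strictdirecttraceinclusion}$ fails). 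Second, ``each position can be repaired only finitely often'' is not literally true position-by-position, since a repair at position $j$ rebuilds the suffix and can reset everything after $j$; the clean progress measure is the \emph{lexicographic} order (earliest position most significant) on the tuple of $\strictbwsim$-heights of the first $N$ sources, which strictly increases at every repair and is bounded, after which K\"onig's Lemma applies as in the other theorems. With that measure written out, your proof goes through --- and, interestingly, it never seems to invoke $\A = \A/\!\bwsim$: the antisymmetry is only needed for the paper's equal-or-strict dichotomy and for your (unused) lifting lemma, whereas $\strictbwsim$-heights are well defined and monotone under $\bwsim$ even without quotienting. So your route, besides being more uniform with the neighbouring theorems, would actually yield the statement without the side condition; this is worth double-checking, but I see no step that breaks.
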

\begin{proof}
Let $\A' = \prune{\A}{\makeprunerel{\strictbwsim}{\directtraceinclusion}}$. We show $\A \languageinclusion \A'$.
Let $w = \symb_0\symb_1 \cdots\in \lang{\A}$. Then there exists an infinite fair
initial trace $\hat{\pi}$ on $w$ in $\A$. We show $w \in \lang{\A'}$.

We call a trace $\pi = q_0 \goesto {\symb_0} q_1 \goesto {\symb_1} \cdots$ 
on $w$ start-maximal iff it is initial and there does not exist any trace
$\pi' = q_0' \goesto {\symb_0} q_1' \goesto {\symb_1} \cdots$ on $w$
s.t. $\mathcal C^{\mathrm {di}}(\pi, \pi')$ and $q_0 \strictbwsim q_0'$.
We call a trace $\pi = q_0 \goesto {\symb_0} q_1 \goesto {\symb_1} \cdots$ 
on $w$ $i$-{\em good} iff it is start-maximal and $\pi$ does not contain any
transition $q_j \goesto {\symb_j} q_{j+1}$ for $j < i$ s.t. there exists an $\A$ transition
$q_j \goesto {\symb_j} q_{j+1}'$ with $q_{j+1}\, \strictbwsim\, q_{j+1}'$ 
and there exists an infinite trace $\suffix {\pi'} {j+1}$ from $q_{j+1}'$ with
$\mathcal C^{\mathrm {di}}(\suffix \pi {j+1}, \suffix {\pi'} {j+1})$.

Since $\A$ is finite, there are $\strictbwsim$-maximal
elements among those finitely many successors of every state $q_j$
from which there exists an infinite trace $\suffix {\pi'} {j+1}$
with $\mathcal C^{\mathrm {di}}(\suffix \pi {j+1}, \suffix {\pi'} {j+1})$.
Thus, for every infinite $i$-good trace $\pi$ on
$w$ there exists an $(i+1)$-good trace $\pi'$ on $w$ s.t.
$\pi$ and $\pi'$ are identical on the first $i$ steps and 
$\mathcal C^{\mathrm {di}}(\pi, \pi')$.

Since there is an infinite fair initial trace $\hat{\pi}$ on $w$, 
there also exists a start-maximal, and thus $0$-good,
fair initial trace on $w$, because $\strictbwsim$ has maximal elements. 
Then it follows from the
property above that there exists an infinite initial 
trace $\tilde{\pi}$ on $w$ that is
$i$-good for every $i$ and $\mathcal C^{\mathrm {di}}(\hat{\pi},
\tilde{\pi})$. In particular, this implies that $\tilde{\pi}$ is fair. 
So $\tilde{\pi}$ is an infinite fair initial trace on $w$ that is $i$-good for every $i$.

Let now $\tilde{\pi} = q_0 \goesto {\symb_0} q_1 \goesto {\symb_1} \cdots$.
We show that $\tilde{\pi}$ is also possible in $\A'$ by assuming the opposite and
deriving a contradiction. Suppose that $\tilde{\pi}$ contains a transition
$q_j \goesto {\symb_j} q_{j+1}$ that is not present in $\A'$.
Then there must exist a transition $q_j' \goesto {\symb_j} q_{j+1}'$ in $\A'$
s.t. $q_j\, \strictbwsim\, q_j'$ and 
$q_{j+1}\, \directtraceinclusion\, q_{j+1}'$.
We cannot have $j=0$, because in this case $\tilde{\pi}$ would not be start-maximal
and thus not even $1$-good. So we get $j \ge 1$.
Since $q_j\, \strictbwsim\, q_j'$ and $q_{j-1} \goesto {\symb_{j-1}} q_j$
there must exist a state $q_{j-1}'$ s.t. $q_{j-1}' \goesto {\symb_{j-1}} q_j'$
and $q_{j-1}\, \bwsim\, q_{j-1}'$. In particular, 
$q_x \in F \,\Rightarrow\, q_{x+1}' \in F$ for $x \in\{j-1, j\}$.
By $\A = \A/\!\bwsim$ we obtain
that either $q_{j-1} = q_{j-1}'$ or $q_{j-1}\, \strictbwsim\, q_{j-1}'$.
The first case would imply that $\pi'$ is not $j$-good, because 
$q_{j+1}\, \directtraceinclusion\, q_{j+1}'$, and thus yield a contradiction.
Therefore, we must have $q_{j-1}\, \strictbwsim\, q_{j-1}'$.
We cannot have $j-1=0$, because in this case $\pi'$ would not be start-maximal
and thus not even $1$-good. So we get $j-1 \ge 1$.
The whole argument above repeats with $j-1, j-2, j-3,\dots$ substituted for
$j$ until we get a contradiction or $0$ is reached. Reaching $0$ also yields a 
contradiction to start-maximality of $\tilde{\pi}$, as above. 
Therefore $\tilde{\pi}$ is a fair initial trace on $w$ in $\A'$ 
and thus $w \in \lang{\A'}$.
\end{proof}

\begin{theorem}\label{lem:bwtrace-fwsim}
$\makeprunerel{\bwdirecttraceinclusion}{\strictdisim}$ is GFP.
\end{theorem}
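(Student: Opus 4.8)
The plan is to prove the nontrivial inclusion $\A \languageinclusion \prune{\A}{\makeprunerel{\bwdirecttraceinclusion}{\strictdisim}}$; write $\A'$ for the pruned automaton. Fix $w=\symb_0\symb_1\cdots\in\lang\A$ together with an infinite fair initial $w$-trace $\hat\pi$. As in Theorems~\ref{lem:prune_id_strictdirecttraceinclusion} and \ref{lem:prune_strictbwdirecttraceinclusion_id}, I call an initial $w$-trace $k$-\emph{good} if its first $k$ transitions all survive in $\A'$, and the goal is to produce, for every $k$, a $k$-good initial $w$-trace $\pi^{(k)}$ with $\mathcal C^{\mathrm{di}}(\hat\pi,\pi^{(k)})$ (hence fair). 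König's Lemma then assembles these into the desired trace: the finite $\A'$-prefixes that extend to some fair $\mathcal C^{\mathrm{di}}$-companion of $\hat\pi$ form a finitely branching, prefix-closed tree with arbitrarily long members, and any infinite branch $\tilde\pi$ lies entirely in $\A'$; since each of its prefixes agrees with a fair $\mathcal C^{\mathrm{di}}$-companion of $\hat\pi$, one gets $\mathcal C^{\mathrm{di}}(\hat\pi,\tilde\pi)$ position by position, so $\tilde\pi$ is fair and $w\in\lang{\A'}$.

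The core is thus the finite statement, which I would establish by an iterative repair process starting from $\hat\pi$ and maintaining the invariant that the current trace $\pi=s_0\goesto{\symb_0}s_1\goesto{\symb_1}\cdots$ is initial with $\mathcal C^{\mathrm{di}}(\hat\pi,\pi)$. While some transition $s_j\goesto{\symb_j}s_{j+1}$ with $j<k$ is still pruned, repair it as follows. Asymmetry and transitivity of $\makeprunerel{\bwdirecttraceinclusion}{\strictdisim}$ on the finite set $\delta$ give a $\makeprunerel{\bwdirecttraceinclusion}{\strictdisim}$-maximal witness $(p',\symb_j,r')\in\A'$ with $s_j\bwdirecttraceinclusion p'$ and $s_{j+1}\strictdisim r'$. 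Applying $s_j\bwdirecttraceinclusion p'$ to the initial prefix $\prefix\pi j$, backward trace inclusion supplies a new initial prefix ending in $p'$ whose accepting positions include those of $\prefix\pi j$; and $s_{j+1}\strictdisim r'$ lets Duplicator's direct-simulation strategy from $r'$ replay the suffix $\suffix\pi{j+1}$ as a trace $r'=s'_{j+1}\goesto{\symb_{j+1}}s'_{j+2}\cdots$ with $s_m\disim s'_m$ for every $m\ge j+1$. Splicing the new prefix, the surviving transition $(p',\symb_j,r')$, and this new suffix yields an initial $w$-trace on which acceptance is preserved on the prefix (by backward-inclusion chaining) and on the suffix (by direct simulation), so the invariant $\mathcal C^{\mathrm{di}}(\hat\pi,\cdot)$ is restored.

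The main obstacle is \emph{termination}: a repair at position $j$ simultaneously rewrites the prefix (via backward trace inclusion) and the entire suffix (via the forward strategy), so it may turn previously surviving transitions into pruned ones, and there is no reason for the set of pruned positions to shrink. This is exactly where I exploit that the target relation is a \emph{simulation} rather than a mere trace inclusion. Let $h(q)$ be the rank of $q$ in the finite partial order that $\disim$ induces on its $\disim$-equivalence classes, so that $p\disim q\Rightarrow h(p)\le h(q)$, strictly when $p\strictdisim q$. To the current trace associate the tuple $M(\pi)=(h(s_k),h(s_{k-1}),\dots,h(s_0))$, ordered lexicographically with the \emph{latest} position most significant. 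A repair at $j$ makes $s_{j+1}$ strictly $\disim$-larger, and, because the forward strategy maintains $s_m\disim s'_m$, makes every later position weakly $\disim$-larger; the reroute disturbs only positions $\le j$, i.e.\ the \emph{less significant} coordinates. Hence, reading $M$ from the most significant end, the first coordinate that changes strictly increases, so every repair strictly increases $M$. As $M$ ranges over a finite set the process halts, and it can only halt at a $k$-good trace; together with the König argument this proves the theorem.

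Two points I would verify carefully in the full write-up: that a winning Duplicator strategy keeps every reached pair in $\disim$ (the standard invariant, which is what yields both the suffix acceptance preservation and the weak $\disim$-growth of the dominant coordinates), and that the splicing indeed produces a legal $w$-trace at the seam $p'\goesto{\symb_j}r'$. I would also remark that the argument pinpoints why the target \emph{must} be $\strictdisim$ and cannot be weakened to $\strictdirecttraceinclusion$: it is precisely the step-wise $\disim$-monotonicity of the rewritten suffix—unavailable for a one-shot trace-inclusion witness—that drives the ranking, matching the $\tickNO$ at $(\strictbwdirecttraceinclusion,\strictdirecttraceinclusion)$ in Figure~\ref{fig:GFP_relations}.
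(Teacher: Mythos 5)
Your proof is correct and follows essentially the same route as the paper's: reroute the initial prefix via $\bwdirecttraceinclusion$, replay the suffix via the stepwise-propagating $\disim$ (exactly the point where $\strictdirecttraceinclusion$ would not suffice), obtain termination from the finiteness of $\strictdisim$-chains, and assemble the limit trace with K\"onig's Lemma. The only difference is bookkeeping: the paper nests an inner repair loop at position $i$ inside an induction on $i$ (terminating via a single strictly increasing $\strictdisim$-chain at position $i+1$), whereas you flatten this into one global repair loop governed by a lexicographic ranking over all positions up to $k$ --- an equivalent reorganization.
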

\begin{proof}
Let $\A' = \prune{\A}{\makeprunerel{\bwdirecttraceinclusion}{\strictdisim}}$. We show $\A \languageinclusion \A'$.
Let $w = \symb_0\symb_1 \cdots\in \lang{\A}$. 
Then there exists an infinite fair
initial trace $\hat{\pi}$ on $w$ in $\A$. We show $w \in \lang{\A'}$.

Given some infinite initial trace $\pi = q_0 \goesto {\symb_0} q_1 \goesto {\symb_1} \cdots$ 
on $w$, we call it $i$-good iff its first $i$ transitions are also possible in
$\A'$.

We now show, by induction on $i$, the following property (P):
For every infinite initial trace $\pi = q_0 \goesto {\symb_0} q_1 \goesto {\symb_1} \cdots$ 
on $w$ and every $i \ge 0$, there exists an infinite initial trace 
$\pi' = q_0' \goesto {\symb_0} q_1' \goesto {\symb_1} \cdots$ 
on $w$ that is $i$-good and $\mathcal C^{\mathrm {di}}(\pi, \pi')$ and
$\forall j \ge i.\, q_j \disim q_j'$.

The base case $i=0$ is trivially true with $\pi' = \pi$.
For the induction step consider an infinite initial trace 
$\pi = q_0 \goesto {\symb_0} q_1 \goesto {\symb_1} \cdots$ on $w$.
By induction hypothesis, there exists an infinite initial trace 
$\pi^1 = q_0^1 \goesto {\symb_0} q_1^1 \goesto {\symb_1} \cdots$ 
on $w$ that is $i$-good and $\mathcal C^{\mathrm {di}}(\pi, \pi^1)$ and
$\forall j \ge i.\, q_j \disim q_j^1$.

If $\pi^1$ is $(i+1)$-good then we are done. Otherwise,
the transition $q_i^1 \goesto {\symb_i} q_{i+1}^1$ is not present in $\A'$.
Since $\A' = \prune{\A}{\makeprunerel{\bwdirecttraceinclusion}{\strictdisim}}$,
there must exist a transition $q_i^2 \goesto {\symb_i} q_{i+1}^2$ in $\A'$
s.t. $q_i^1 \bwdirecttraceinclusion q_i^2$ and 
$q_{i+1}^1 \strictdisim q_{i+1}^2$. 
It follows from the definitions of $\bwdirecttraceinclusion$ and
$\strictdisim$ that there exists an infinite initial trace 
$\pi^2 = q_0^2 \goesto {\symb_0} q_1^2 \goesto {\symb_1} \cdots$ 
on $w$ s.t. $\mathcal C^{\mathrm {di}}(\pi^1, \pi^2)$, 
$q_{i+1}^1 \strictdisim q_{i+1}^2$
and $\forall j \ge i+1.\, q_j^1 \disim q_j^2$.
(This last property uses the fact that $\disim$ propagates forward. 
Direct trace inclusion $\directtraceinclusion$ does not suffice.) 
By induction hypothesis, there exists an infinite initial trace
$\pi^3 = q_0^3 \goesto {\symb_0} q_1^3 \goesto {\symb_1} \cdots$ 
on $w$ that is $i$-good and $\mathcal C^{\mathrm {di}}(\pi^2, \pi^3)$ and
$\forall j \ge i.\, q_j^2 \disim q_j^3$. 
By transitivity we obtain $\mathcal C^{\mathrm {di}}(\pi^1, \pi^3)$,
$q_{i+1}^1 \strictdisim q_{i+1}^3$ and
$\forall j \ge i+1.\, q_j^1 \disim q_j^3$.

If $\pi^3$ is $(i+1)$-good then we are done. Otherwise, the argument of
the above paragraph repeats and we obtain an infinite
initial trace $\pi^5 = q_0^5 \goesto {\symb_0} q_1^5 \goesto {\symb_1} \cdots$ 
on $w$ that is $i$-good and $\mathcal C^{\mathrm {di}}(\pi^3, \pi^5)$,
that $q_{i+1}^3 \strictdisim q_{i+1}^5$ and
$\forall j \ge i+1.\, q_j^3 \disim q_j^5$.
This process cannot repeat infinitely often, because this would imply an
infinite strictly increasing $\strictdisim$-chain $q_{i+1}^{2x+1}$ for
$x=0,1,2,\dots$, which is impossible in finite automata.
Therefore, for some finite index $x$, we obtain an infinite
initial trace $\pi^x = q_0^x \goesto {\symb_0} q_1^x \goesto {\symb_1} \cdots$ 
on $w$ that is $(i+1)$-good and, by transitivity, 
$\mathcal C^{\mathrm {di}}(\pi, \pi^x)$
and $\forall j \ge i+1.\, q_j \disim q_j^x$.
Thus $\pi^x$ is the trace $\pi'$ that we were looking for.

Given the infinite fair initial trace $\hat{\pi}$ on $w$ in $\A$,
it follows from property (P) and K\"onig's Lemma that there
exists an infinite initial trace $\tilde{\pi}$ on $w$ that is
$i$-good for every $i$ and $\mathcal C^{\mathrm {di}}(\hat{\pi},
\tilde{\pi})$.
Therefore $\tilde{\pi}$ is an infinite fair initial trace 
on $w$ in $\A'$ and thus $w \in \lang{\A'}$.
\end{proof}

Theorem~\ref{lem:bwtrace-fwsim} implies that 
$\makeprunerel{\strictbwsim}{\strictdisim}$ is GFP,
but $\makeprunerel{\strictbwdirecttraceinclusion}{\strictdirecttraceinclusion}$ 
is not; see Figure~\ref{fig:pruning}. 
Moreover, $\makeprunerel{\id}{\strictdesim}$ is not GFP (even if $\A =\A/\!\desim$);
see Figure~\ref{fig:pruning}.

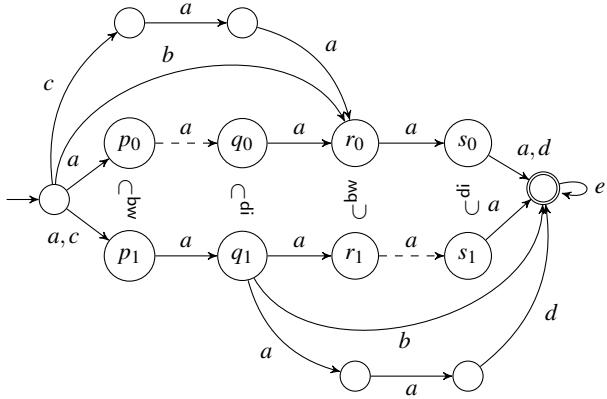
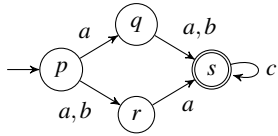
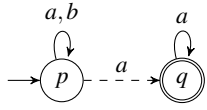
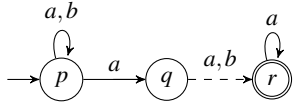
\begin{figure}[t]


\subfigure[
	$\makeprunerel{\strictbwdirecttraceinclusion}{\strictdirecttraceinclusion}$ is not GFP:
	If the dashed transitions $p_0 \goesto a q_0$ and $r_1 \goesto a s_1$ are removed,
	then $a^5e^\omega$ is no longer accepted.
	Note that $\A = \A/\!\bwdirecttraceinclusion = \A/\!\directtraceinclusion$.
	(This example even holds for $\stricttranskbwsim, \stricttranskdisim$
        and $k=3$; cf. Section~\ref{sec:lookahead}).
]{
	\begin{tikzpicture}[on grid, node distance=1cm and 1.5cm]
		\tikzstyle{vertex} = [smallstate]

		\path node [vertex, initial] (i) {};
		
		\path node [vertex] (p0) [above right = .75 and 1 cm of i] {$p_0$};
		\path node [vertex] (q0) [right = of p0] {$q_0$};
		\path node [vertex] (r0) [right = of q0] {$r_0$};
		\path node [vertex] (s0) [right = of r0] {$s_0$};
		
		\path node [vertex] (p1) [below right = .75 and 1 cm of i] {$p_1$};
		\path node [vertex] (q1) [right = of p1] {$q_1$};
		\path node [vertex] (r1) [right = of q1] {$r_1$};
		\path node [vertex] (s1) [right = of r1] {$s_1$};
		
		\path node [vertex, accepting] (f) [below right = .6 and 1 cm of s0] {};
		
		\path node [vertex] (x0) [above = 1.6cm of p0] {};
		\path node [vertex] (y0) [right = of x0] {};

		\path node [vertex] (x1) [below = 1.6cm of r1] {};
		\path node [vertex] (y1) [right = of x1] {};

		\path[->]

			(i) edge node [above left] {$a$} (p0)
			(i) edge [bend left = 30] node [above left] {$c$} (x0)
			(i) edge [bend left = 75] node [above left] {$b$} (r0)
			
			(p0) edge [dashed] node [above] {$a$} (q0)
			(q0) edge node [above] {$a$} (r0)
			(r0) edge node [above] {$a$} (s0)
			(s0) edge node [above right] {$a,d$} (f)
			
			(x0) edge node [above] {$a$} (y0)
			(y0) edge [bend left = 30] node [above right] {$a$} (r0)
			
			(i) edge node [below left] {$a,c$} (p1)
			(q1) edge [bend right = 30] node [below left] {$a$} (x1)
			(q1) edge [bend right = 75] node [below left] {$b$} (f)
			
			(p1) edge node [above] {$a$} (q1)
			(q1) edge node [above] {$a$} (r1)
			(r1) edge [dashed] node [above] {$a$} (s1)
			(s1) edge node [above left] {$a$} (f)
			
			(x1) edge node [below] {$a$} (y1)
			(y1) edge [bend right = 30] node [below right] {$d$} (f)
			
			(f) edge [loop right] node {$e$} ();

		\path
			(p0) -- node [midway] {\rotatebox{-90}{$\strictbwdirecttraceinclusion$}} (p1)
			(q0) -- node [midway] {\rotatebox{-90}{$\strictdirecttraceinclusion$}} (q1)
			(r0) -- node [midway] {\rotatebox{90}{$\strictbwdirecttraceinclusion$}} (r1)
			(s0) -- node [midway] {\rotatebox{90}{$\strictdirecttraceinclusion$}} (s1);

		\begin{pgfonlayer}{background}
		\end{pgfonlayer}

	\end{tikzpicture}
}
\\
\subfigure[
	GFP is not closed under union: Pruning automaton $\A$ with 
	$\makeprunerel{\id}{\strictdisim} \cup \makeprunerel{\strictbwsim}{\id}$
	would remove the transitions $p \goesto a r$ and $q \goesto a s$,
	and thus $aac^\omega$ would no longer be accepted.
]{
	\begin{tikzpicture}[on grid, node distance= .6cm and 1cm]
		\tikzstyle{vertex} = [smallstate]

		\path node [vertex, initial] (p) {$p$};
		\path node [vertex] (q) [above right = of p] {$q$};
		\path node [vertex] (r) [below right = of p] {$r$};
		\path node [vertex, accepting] (s) [below right = of q] {$s$};

		\path[->]

			(p) edge node [above left] {$a$} (q)
			(q) edge node [above right] {$a,b$} (s)
			(p) edge node [below left] {$a,b$} (r)
			(r) edge node [below right] {$a$} (s)

			(s) edge [loop right] node {$c$} ();

	\end{tikzpicture}
}
$\ $
\subfigure[
	$\makeprunerel{\id}{\strictdesim}$ is not GFP:
	We have $q \strictdesim p$, but removing the dashed transition $p \goesto a q$ makes the language empty,
	even though ${\cal A} = {\cal A}/\!\desim$.
]{	$\ \ $
	\begin{tikzpicture}[on grid, node distance= .6cm and 1.6cm]
		\tikzstyle{vertex} = [smallstate]

		\path node [vertex, initial] (p) {$p$};
		\path node (hidden) (x) [below = of p] {}; 
		\path node [vertex, accepting] (q) [right = of p] {$q$};

		\path[->]

			(p) edge [dashed] node [above] {$a$} (q)
			(p) edge [loop above] node {$a,b$} ()
			(q) edge [loop above] node {$a$} ();

	\end{tikzpicture}
	$\ \ $
}
\\
\centering
\subfigure[
	$\makeprunerel{\strictbwsim}{\strictlanguageinclusion}$ is not GFP:
	In the automaton above, both transitions $p \goesto a q$ and $q \goesto a r$ are transient.
	Moreover, $r \strictlanguageinclusion q$ (even $r \strictdesim q$) and $q \strictbwsim p$.
	However, removing the smaller transition $q \goesto a r$ changes the language,
	since $a^\omega$ is no longer accepted.
	Thus, $\makeprunerel{\strictbwsim}{\strictlanguageinclusion}$ is not GFP
	even when one restricts to comparing/pruning only transient transitions
	(unlike	$\makeprunerel{\id}{\strictlanguageinclusion}$).
]{	$\qquad\qquad\quad$
	\begin{tikzpicture}[on grid, node distance= .6cm and 1.4cm]
		\tikzstyle{vertex} = [smallstate]

		\path node [vertex, initial] (p) {$p$};
		\path node [vertex, initial] (q) [right = of p] {$q$};
		\path node [vertex, accepting] (r) [right = of q] {$r$};

		\path[->]

			(p) edge node [above] {$a$} (q)
			(q) edge [dashed] node [above] {$a,b$} (r)
			(p) edge [loop above] node {$a,b$} ()
			(r) edge [loop above] node {$a$} ();

	\end{tikzpicture}
	$\qquad\qquad\quad$
}

	\caption{Pruning counterexamples.}
	\label{fig:pruning}

\end{figure}

The quotienting and transition pruning techniques described above
use intricate combinations of backward and forward simulations 
(and more general trace inclusions). In particular, they subsume
previous attempts to combine backward and forward simulations for automata
minimization by {\em mediated preorder} \cite{AbdullaCHV09} (but not
vice-versa).
Mediated preorder is defined as the largest fragment 
$M \subseteq \disim \circ (\bwsim)^{-1}$ s.t. $M \circ \disim \subseteq M$.
In particular, $M$ is a preorder that is GFQ. However, an automaton $\A$ that has
been minimized by the techniques described above cannot be further reduced by
mediated preorder. First we have $\A = \A/\!\bwsim = \A/\!\disim$ by repeated
quotienting. Second, there cannot exist any distinct states $x,y$ in $\A$
s.t. $(x \disim y \wedge x \bwsim y)$ by 
the pruning techniques above (used with simulations as approximations for
trace inclusions) and the removal of dead states.
Under these conditions, quotienting with mediated preorder has no effect,
as the following theorem shows.

\begin{theorem}
Let $\A$ be an automaton s.t.
(1) $\A = \A/\!\bwsim = \A/\!\disim$ and
(2) $x \disim y \wedge x \bwsim y \Rightarrow x=y$.
Then $\A = \A/\! M$.
\end{theorem}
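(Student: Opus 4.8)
The plan is to reduce the statement to a purely order-theoretic fact about $M$. It suffices to show that the induced equivalence $M \cap M^{-1}$ is the identity relation, since then the quotient map $q \mapsto [q]$ is a transition-preserving bijection and hence $\A = \A/\!M$. Note first that condition~(1) says precisely that $\disim$ and $\bwsim$ are antisymmetric, i.e.\ both are partial orders. The two facts about $M$ that I would exploit are exactly the two clauses of its definition: the inclusion $M \subseteq \disim \circ (\bwsim)^{-1}$ and the closure property $M \circ \disim \subseteq M$. Unfolding composition, recall that $x \mathrel{M} y$ forces the existence of a \emph{mediator} $z$ with $x \disim z$ and $y \bwsim z$.

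So I would assume $x \mathrel{M} y$ and $y \mathrel{M} x$ and aim to derive $x = y$. First, unfold $x \mathrel{M} y$ to obtain a mediator $z$ with $x \disim z$ and $y \bwsim z$. Next, I would feed this back through the closure property: from $y \mathrel{M} x$ together with $x \disim z$ the inclusion $M \circ \disim \subseteq M$ yields $y \mathrel{M} z$, and unfolding \emph{this} relation produces a second mediator $u$ with $y \disim u$ and $z \bwsim u$. The point of manufacturing $u$ is that it can now be squeezed against the two hypotheses.

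The crux of the argument is to collapse these mediators onto the endpoints. From $y \bwsim z$ and $z \bwsim u$, transitivity of $\bwsim$ gives $y \bwsim u$; combined with $y \disim u$, condition~(2) forces $u = y$. Then $z \bwsim u = y$ together with $y \bwsim z$ makes $z$ and $y$ $\bwsim$-equivalent, so antisymmetry of $\bwsim$ (condition~(1)) forces $z = y$, whence $x \disim z = y$, i.e.\ $x \disim y$. Running the entirely symmetric argument with the roles of $x$ and $y$ exchanged gives $y \disim x$, and antisymmetry of $\disim$ (condition~(1)) finally yields $x = y$, as desired.

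I expect the only genuine obstacle to be this mediator-collapse step; the rest is routine unfolding of the definition of $M$ and of relational composition. The insight is to apply the closure property $M \circ \disim \subseteq M$ \emph{once more} to create the fresh mediator $u$, and then to pin it down using both special hypotheses in sequence: condition~(2) forces $u = y$, after which antisymmetry of $\bwsim$ forces the original mediator $z$ to coincide with $y$ as well. Without condition~(2) one could not eliminate $u$, and without antisymmetry of $\bwsim$ one could not eliminate $z$, so both clauses of the hypotheses are used precisely here.
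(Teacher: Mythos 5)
Your proof is correct and follows essentially the same strategy as the paper's: reduce to showing $x \mathrel{M} y \wedge y \mathrel{M} x \Rightarrow x = y$, unfold $M$ to obtain a mediator, apply the closure property $M \circ \disim \subseteq M$ once to manufacture a second mediator, and collapse it using condition~(2) followed by antisymmetry of $\bwsim$ and finally of $\disim$. The only cosmetic difference is that the paper extracts both mediators (from $x \mathrel{M} y$ and $y \mathrel{M} x$) up front and derives $x \disim y$ and $y \disim x$ in a single pass, whereas you run one direction and then invoke symmetry.
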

\begin{proof}
We show that 
$x M y \wedge y M x \Rightarrow x=y$ which implies $\A = \A/\! M$.

Let $x M y$ and $y M x$. By definition of $M$ 
there exist mediators $z$ s.t. $x \disim z$ and $y \bwsim z$,
and $w$ s.t. $x \bwsim w$ and $y \disim w$.
Since $M \circ \disim \subseteq M$ we have $x M w$. Thus there exists a mediator
$k$ s.t. $x \disim k$ and $w \bwsim k$.
By transitivity of $\bwsim$ we also have $x \bwsim k$.
By (2) we get $x=k$.
Thus $x \bwsim w$ and $w \bwsim x$.
By (1) we get $x=w$.
Thus $y \disim w=x \disim z$ and by transitivity $y \disim z$. 
Moreover, $y \bwsim z$ as above.
By (2) we get $z=y$.
Thus $x \disim z = y$ and $y \disim w = x$.
By (1) we get $x=y$.
\end{proof}

\section{Lookahead Simulations}\label{sec:lookahead}

While trace inclusions are theoretically appealing as GFQ/GFI preorders coarser than simulations,
it is not feasible to use them in practice, because they are too hard to
compute (even their membership problem is PSPACE-complete).
As a first attempt at achieving a better trade-off between complexity and size
we recall \emph{multipebble simulations} \cite{etessami:hierarchy02},
which are obtained by providing Duplicator with several pebbles, instead of one.
However, computing multipebble simulations is not feasible in practice either,
on automata of nontrivial size.
Therefore, we explore yet another way of obtaining good under-approximations of trace inclusion:
We introduce \emph{lookahead simulations},
which are obtained by providing Duplicator with a limited amount of information about Spoiler's future moves.
While lookahead itself is a classic concept (e.g., in parsing) it can be
defined in several different ways in the context of adversarial games like in
simulation. We compare different variants for computational efficiency and approximation quality.

\xparagraph{$k$-pebble simulation.}

Simulation preorder can be generalized by allowing Duplicator to control several pebbles instead of just one. 
In $k$-pebble simulation, $k>0$, Duplicator's position is a set of at most $k$ states
(while Spoiler still controls exactly 1 state),
which allows Duplicator to `hedge her bets' in the simulation game.
The direct, delayed, fair and backward winning conditions can be generalized to the multipebble framework \cite{etessami:hierarchy02}.
For $x\in\{\mathrm{di, de, f, bw}\}$ and $k > 0$,
$k$-pebble $x$-simulation is coarser than $x$-simulation and it implies $x$-containment;
by increasing $k$, one can control the quality of the approximation to trace inclusion.
Direct, delayed, fair and backward $k$-pebble simulations are not transitive in general,
but their transitive closures are GFI preorders;
the direct, delayed and backward variants are also GFQ.
%
However, computing $k$-pebble simulations is infeasible, even for modest values for $k$.
In fact, for a BA with $n$ states, computing $k$-pebble simulation requires solving a game of size $n \cdot n^k$.
Even in the simplest case of $k=2$ this means at least cubic space,
which is not practical for large $n$.
For this reason, we consider a different way to extend Duplicator's power, 
i.e., by using \emph{lookahead} on the moves of Spoiler.

\xparagraph{$k$-step simulation.}

We generalize simulation by having the players select sequences of transitions of length $k > 0$ instead of single transitions:
This gives Duplicator more information,
and thus yields a larger simulation relation.
%
%
In general, $k$-step simulation and $k$-pebble simulation are incomparable,
but $k$-step simulation is strictly contained in $n$-pebble simulation.
However, the rigid use of lookahead in big-steps causes at least two issues:
\begin{inparaenum}[1)]
	\item For a BA with $n$ states, we need to store only $n^2$ configurations $(p,q)$
	(which is much less than $k$-pebble simulation).
        However, in {\em every round} we have to explore up-to $d^k$ different moves for each player
	(where $d$ is the maximal out-degree of the automaton). 
        In practice (e.g., $d=4$, $k=12$) this is still too large.
	\item Duplicator's lookahead varies between $1$ and $k$,
	depending where she is in her response to Spoiler's long move.
	Thus, Duplicator might lack lookahead where it is most needed,
	while having a large lookahead in other situations where it is not useful.
	In the next notion, we attempt at ameliorating this.
\end{inparaenum}

\xparagraph{$k$-continuous simulation.}

Duplicator is continuously kept informed about Spoiler's next $k$ moves,
i.e., she always has lookahead $k$.
Formally, a configuration of the simulation game consists in a pair $(\rho_i, q_i)$,
where $\rho_i$ is the sequence of the next $k-1$ moves from $p_i$ that Spoiler has already committed to.
In every round of the game, Spoiler reveals another move $k$ steps in the future,
and then makes the first of her announced $k$ moves,
to which Duplicator responds as usual.
A pair of states $(p,q)$ is in $k$-continuous simulation
if Duplicator can win this game from every configuration $(\rho,q)$,
where $\rho$ is a sequence of $k-1$ moves from $p$.
($k=1$ is ordinary simulation.)
$k$-continuous simulation is strictly contained in $n$-pebble simulation
(but incomparable with $k$-pebble simulation),
and larger than $k$-step simulation.
While this is arguably the strongest way of giving lookahead to Duplicator,
it requires storing $n^2 \cdot d^{k-1}$ configurations, which is infeasible
for nontrivial $n$ and $k$ (e.g., $n=10000$, $d=4$, $k=12$).

\xparagraph{$k$-lookahead simulation.}

We introduce $k$-lookahead simulation as an optimal compromise between $k$-step and $k$-continuous simulation.
Intuitively, we put the lookahead under Duplicator's control,
who can choose at each round how much lookahead she needs (up to $k$).
Formally, configurations are pairs $(p_i, q_i)$ of states.
In every round of the game, Spoiler chooses a sequence of $k$ consecutive transitions
$p_i \goesto {\symb_i} {p_{i+1}} \goesto {\symb_{i+1}} \cdots \goesto {\symb_{i+k-1}} p_{i+k}$.
Duplicator then chooses a number $1 \leq m \leq k$ and responds with a matching sequence of $m$ transitions 
$q_i \goesto {\symb_i} {q_{i+1}} \goesto {\symb_{i+1}} \cdots \goesto {\symb_{i+m-1}} q_{i+m}$.
The remaining $k-m$ moves of Spoiler are forgotten,
and the next round of the game starts at $(p_{i+m}, q_{i+m})$.
In this way, the players build two infinite traces $\pi_0$ from $p_0$ and $\pi_1$ from $q_0$.
Backward simulation is defined similarly with backward transitions.
For acceptance condition $x \in \{\mathrm{di, de, f, bw}\}$,
Duplicator wins this play if $\mathcal C^x(\pi_0, \pi_1)$ holds.
\begin{definition}\label{def:lookahead-sim}
	Two states $(p_0, q_0)$ are in \emph{$k$-lookahead $x$-simulation},
	written $p_0 \kxsim k x q_0$,
	iff	Duplicator has a winning strategy in the above game.
\end{definition}
Since $\kxsim k x$ is not transitive (unless $k=1$; cf. Appendix~\ref{sec:non_transitivity}),
we denote its transitive closure, which is a preorder, by $\transksimx$,
and its asymmetric restriction by $\stricttransksimx = \transksimx \setminus (\transksimx)^{-1}$.

Lookahead simulation offers the optimal trade-off between $k$-step and $k$-continuous simulation.
Since the lookahead is discarded at each round,
$k$-lookahead simulation is (strictly) included in $k$-continuous lookahead
(where the lookahead is never discarded).
However, this has the benefit of only requiring to store $n^2$ configurations,
which makes computing lookahead simulation space-efficient.
On the other side, when Duplicator always chooses a maximal reply $m = k$
we recover $k$-step simulation,
which is thus included in $k$-lookahead simulation.
Moreover, thanks to the fact that Duplicator controls the lookahead,
most rounds of the game can be solved without ever reaching the maximal lookahead $k$:
\begin{inparaenum}[1)]
	\item for a fixed attack by Spoiler,
	we only consider Duplicator's responses for small $m = 1, 2, \dots, k$
	until we find a winning one, and
	\item also Spoiler's attacks can be built incrementally	since,
	if she loses for some lookahead $h$, then she also loses for $h'\!\! >\!\! h$.
\end{inparaenum}
In practice, this greatly speeds up the computation,
and allows us to use lookaheads in the range $4$-$25$,
depending on the size and structure of the automata;
see Section~\ref{sec:experiments} for the experimental evaluation and
benchmark against the GOAL tool \cite{GOAL_survey_paper}.

$k$-lookahead simulation can also be expressed as a restriction of $n$-pebble simulation,
where Duplicator is allowed to split pebbles maximally (thus $n$-pebbles),
but after a number $m \le k$ rounds (where $m$ is chosen dynamically by Duplicator)
she has to discard all but one pebble.
Then, Duplicator is allowed to split pebbles maximally again, etc.
Thus, $k$-lookahead simulation is contained in $n$-pebble simulation,
though it is generally incomparable with $k$-pebble simulation.

Direct, delayed, fair and backward $k$-lookahead simulation have 
a fixed-point characterization expressible in $\mu$-calculus (cf. Appendix~\ref{sec:fixedpoint}),
which can be useful for a symbolic implementation. However, our current 
algorithm computes them with an explicit-state representation.

\section{Automata Minimization} \label{sec:heavyandlight}

\ignore{

In addition to the transition pruning techniques of
Section~\ref{sec:minimization},
$k$-lookahead simulations can also be used for quotienting.

By \cite{etessami:hierarchy02}, $\directtraceinclusion$ and $n$-pebble
delayed simulation are GFQ, and, by Theorem~\ref{lem:bwincl_GFQ}, 
so is $\bwdirecttraceinclusion$.
While these relations are PSPACE-complete, they can be
efficiently underapproximated by the PTIME-computable 
lookahead simulations $\transkdisim$,
$\transkdesim$ and $\transkbwsim$, respectively.
In particular, $\transkdisim$ and $\transkdesim$ and $\transkbwsim$ are GFQ.

}

We minimize automata by transition pruning and quotienting.
While trace inclusions would be an ideal basis for such techniques,
they (i.e., their membership problems) are PSPACE-complete.
Instead, we use lookahead simulations as efficiently computable under-approximations;
in particular, we use
\begin{itemize}
	\item $\transkdisim$ in place of direct trace inclusion $\directtraceinclusion$ (which is GFQ \cite{etessami:hierarchy02}).

	\item $\transkdesim$ in place of $n$-pebble delayed simulation (GFQ \cite{etessami:hierarchy02}).

	\item $\transkfsim$ in place of fair trace inclusion $\fairtraceinclusion$ (which is GFI).

	\item $\transkbwsim$ in place of backward trace inclusion $\bwdirecttraceinclusion$
	(which is GFQ by Theorem~\ref{lem:bwincl_GFQ}).
\end{itemize}
For pruning, we apply the results of Section~\ref{sec:minimization} and the substitutions above
to obtain the following GFP relations: 
\begin{align*}
	\makeprunerel{\id}{\stricttranskdisim},
	\makeprunerel{\stricttranskbwsim}{\id},
	\makeprunerel{\strictbwsim}{\transkdisim},
	\makeprunerel{\transkbwsim}{\strictdisim},
	R_t(\stricttranskfsim)
\end{align*}
For quotienting, we employ delayed $\transkdesim$ and backward $\transkbwsim$ $k$-lookahead simulations (which are GFQ).
Below, we describe two possible ways to combine our simplification techniques: \emph{Heavy-$k$} and \emph{Light-$k$}
(which are parameterized by the lookahead value $k$).

\paragraph{Heavy-$k$.}

We advocate the following minimization procedure,
which repeatedly applies all the techniques described in this paper
until a fixpoint is reached:
\begin{inparaenum}[1)]
	\item Remove dead states.
	\item Prune transitions w.r.t. the GFP relations above (using lookahead $k$).
	\item Quotient w.r.t. $\transkdesim$ and $\transkbwsim$.
\end{inparaenum}
The resulting simplified automaton cannot be further reduced by any of these techniques.
In this sense, it is a local minimum in the space of automata.
Applying the techniques in a different order might produce a different local minimum,
and, in general, there does not exist an optimal order that works best in every instance.
In practice, the order is determined by efficiency considerations and easily computable operations are used first \cite{ourtest,RABIT}.

\begin{remark}
	While quotienting with ordinary simulation is idempotent, in general this is not true for lookahead simulations,
	because these relations are not preserved under quotienting (unlike ordinary simulation).
	Moreover, quotienting w.r.t. forward simulations does not preserve backward simulations, and vice-versa.
	Our experiments showed that repeatedly and alternatingly quotienting w.r.t. $\transkdesim$ and $\transkbwsim$
	(in addition to our pruning techniques) yields the best minimization effect.
\end{remark}

The Heavy-$k$ procedure {\em strictly subsumes} all simulation-based automata
minimization methods described in the literature 
(removing dead states, quotienting, pruning of `little brother' transitions, 
mediated preorder), except for the following two:
\begin{inparaenum}[1)]
	\item The \emph{fair simulation minimization} of \cite{GBS02} works by tentatively 
	merging fair simulation equivalent states and then checking if this operation 
	preserved the language. (In general, fair simulation is not
	GFQ.) It subsumes quotienting with $\desim$ (but not $\transkdesim$) and 
	is implemented in GOAL \cite{GOAL_survey_paper}.
	We benchmarked our methods against it and found Heavy-$k$ to be much better in
	both effect and efficiency; cf Section~\ref{sec:experiments}.
	\item The GFQ \emph{jumping-safe preorders} of \cite{buchiquotient:ICALP11,Clemente:PhD} are incomparable to
	the techniques described in this paper. If applied in addition to Heavy-$k$,
	they yield a very modest extra minimization effect.
\end{inparaenum}

\paragraph{Light-$k$.}

This procedure is defined purely for comparison reasons.
It demonstrates the effect of the lookahead $k$
in a single quotienting operation and works as follows:
Remove all dead states and then quotient w.r.t. $\transkdesim$.
Although Light-$k$ achieves much less than Heavy-$k$, it is not necessarily faster.
This is because it uses the more expensive to compute relation $\transkdesim$
directly, while Heavy-$k$ applies other cheaper (pruning) operations
first and only then computes $\transkdesim$ on the resulting smaller
automaton.

\section{Language Inclusion Checking}\label{sec:preprocessing}

The language inclusion problem $\A \languageinclusion \B$
is PSPACE-complete \cite{kupfermanvardi:fair_verification}.
It can be solved via complementation of $\B$ \cite{sistla:vardi:wolper:complementation:87,GOAL_survey_paper}
and, more efficiently, by rank-based (\cite{fogarty_et_al:LIPIcs:2011:3235} and references therein)
or Ramsey-based methods
\cite{seth:buchi,seth:efficient,abdulla:simulationsubsumption,Rabit_CONCUR2011},
or variants of Piterman's construction \cite{Pit06,GOAL_survey_paper}.
Since these all have {\em exponential} time complexity, it helps significantly
to first minimize the automata in a preprocessing step.
Better minimization techniques, as described in the previous sections, make it
possible to solve significantly larger instances.
However, our simulation-based techniques can not only be used in
preprocessing, but actually solve most instances of the inclusion problem
{\em directly}. This is significant, because simulation scales 
{\em polynomially} (quadratic average-case complexity; cf. Section~\ref{sec:experiments}).

\subsection{Inclusion-preserving minimization}

Inclusion checking algorithms generally benefit from language-preserving minimization preprocessing
(cf. Sec.~\ref{sec:heavyandlight}).
However, preserving the languages of $\A$ and $\B$ in the preprocessing is not actually necessary.
A preprocessing on $\A,\B$ is said to be \emph{inclusion-preserving}
iff it produces automata $\A',\B'$ s.t. $\A \languageinclusion \B \iff \A' \languageinclusion \B'$
(regardless of whether $\A \languageequivalence \A'$ or $\B \languageequivalence \B'$).
In the following, we consider two inclusion-preserving preprocessing steps.

\xparagraph{Simplify $\A$.}

In theory, the problem $\A \languageinclusion \B$ is only hard in $\B$,
but polynomial in the size of $\A$. However, this is only relevant if one actually 
constructs the exponential-size complement of $\B$, which is of course to be
avoided. For polynomial simulation-based algorithms it is crucial to also minimize $\A$.
The idea is to remove transitions in $\A$ which are `covered' by better transitions in $\B$.

\begin{definition}
	Given $\A = (\Sigma, Q_\A, I_\A, F_\A, \delta_\A)$, 
	$\B = (\Sigma, Q_\B, I_\B, F_\B, \delta_\B)$,
	let $\prunerel \subseteq \delta_\A \times \delta_\B$ be a relation for comparing transitions in $\A$ and $\B$.
	The pruned version of $\A$ is $\xprune{\A}{\B}{\prunerel} := (\Sigma, Q_\A, I_\A, F_\A, \delta')$
	with $\delta' = \{(p,\symb,r) \in \delta_\A \st {\nexists}
	(p',\symb',r') \in \delta_\B.\, (p,\symb,r) \prunerel (p',\symb',r')\}$.
\end{definition}

\noindent
$\A \languageinclusion \B$ implies $\xprune{\A}{\B}{\prunerel} \languageinclusion \B$
(since $\xprune{\A}{\B}{\prunerel} \languageinclusion \A$).
When also the other direction holds (so pruning is inclusion-preserving),
we say that $\prunerel$ is \emph{good for $\A,\B$-pruning},
i.e., when $\A \languageinclusion \B \iff \xprune{\A}{\B}{\prunerel} \languageinclusion \B$.
Intuitively, pruning is correct when the removed edges do not allow $\A$ to accept any word
which is not already accepted by $\B$.
In other words, if there is a counter example to inclusion in $\A$,
then it can even be found in $\xprune{\A}{\B}{\prunerel}$.
As in Sec.~\ref{sec:minimization}, we compare transitions by looking at their endpoints:
For state relations $\brel, \frel \subseteq Q_\A \times Q_\B$, let
$\makeprunerel{\brel}{\frel} = \{((p,\symb,r),(p',\symb,r')) \st p \brel p'\,\wedge\, r \frel r'\}$.

Since inclusion-preserving pruning does not have to respect the language,
we can use much weaker (i.e., coarser) relations for comparing endpoints.
Let $\accblindbwdirecttraceinclusion$ be the variant of $\bwdirecttraceinclusion$
where accepting states are not taken into consideration.
\begin{theorem}\label{lem:ABpruning}
$\makeprunerel{\accblindbwdirecttraceinclusion}{\fairtraceinclusion}$ is good for $\A,\B$-pruning.
\end{theorem}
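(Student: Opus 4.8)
The forward implication of good-for-$\A,\B$-pruning is immediate and already noted in the text: since $\xprune{\A}{\B}{\makeprunerel{\accblindbwdirecttraceinclusion}{\fairtraceinclusion}} \languageinclusion \A$, from $\A \languageinclusion \B$ we obtain $\xprune{\A}{\B}{\makeprunerel{\accblindbwdirecttraceinclusion}{\fairtraceinclusion}} \languageinclusion \B$. The plan is therefore to prove only the converse, namely that $\A' := \xprune{\A}{\B}{\makeprunerel{\accblindbwdirecttraceinclusion}{\fairtraceinclusion}} \languageinclusion \B$ implies $\A \languageinclusion \B$. I would fix a word $w = \symb_0\symb_1\cdots \in \lang{\A}$ together with an accepting witness for it, i.e., an infinite, fair, initial $w$-trace $\pi = q_0 \goesto{\symb_0} q_1 \goesto{\symb_1} \cdots$ in $\A$, and show $w \in \lang{\B}$ by a case distinction on whether $\pi$ survives pruning.

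If every transition of $\pi$ is still present in $\A'$, then $\pi$ is itself a fair initial $w$-trace in $\A'$, so $w \in \lang{\A'} \languageinclusion \B$ and we are done. The interesting case is when $\pi$ uses a pruned transition $q_j \goesto{\symb_j} q_{j+1}$. By definition of pruning there is a witnessing transition $(p',\symb_j,r') \in \delta_\B$ with $q_j \accblindbwdirecttraceinclusion p'$ and $q_{j+1} \fairtraceinclusion r'$. I would then assemble a fair initial $w$-trace in $\B$ out of three pieces: (i) the prefix $\prefix{\pi}{j}$ is an initial finite trace of $\A$ on $\symb_0\cdots\symb_{j-1}$ ending in $q_j$, so by $q_j \accblindbwdirecttraceinclusion p'$ there is an initial finite $\B$-trace on $\symb_0\cdots\symb_{j-1}$ ending in $p'$; (ii) the witnessing transition $p' \goesto{\symb_j} r'$; and (iii) an infinite $\B$-trace from $r'$ obtained by applying $q_{j+1} \fairtraceinclusion r'$ to the suffix $\suffix{\pi}{j+1}$, which reads $\symb_{j+1}\symb_{j+2}\cdots$. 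Concatenating the three pieces yields an infinite initial $w$-trace in $\B$.

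The key point securing the acceptance condition is that fairness is a tail property: since $\pi$ visits $F_\A$ infinitely often, so does its suffix $\suffix{\pi}{j+1}$, which is thus itself a fair infinite trace, and $q_{j+1} \fairtraceinclusion r'$ then delivers via $\mathcal C^{\mathrm f}$ a \emph{fair} $\B$-trace from $r'$. All the accepting visits of the constructed $\B$-run therefore come from piece (iii), which is precisely why the finite backward part may ignore acceptance, i.e., why the accepting-blind relation $\accblindbwdirecttraceinclusion$ rather than the stronger $\bwdirecttraceinclusion$ suffices here. In contrast to the language-preserving pruning results (Theorems~\ref{lem:prune_id_strictdirecttraceinclusion}--\ref{lem:bwtrace-fwsim}), no K\"onig's Lemma argument is needed: a single pruned transition already lets the run escape into $\B$ once and for all, so there is no need to repair infinitely many prunings within one automaton. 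The only mild care is the boundary case $j=0$, where piece (i) is the empty trace and $p'$ must be taken initial in $\B$; this is exactly what $\accblindbwdirecttraceinclusion$ guarantees applied to the initial state $q_0 \in I_\A$, yielding $p' \in I_\B$. I expect this boundary bookkeeping, rather than any conceptual difficulty, to be the only step requiring attention.
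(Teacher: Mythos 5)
Your proposal is correct and follows essentially the same approach as the paper's proof: a case split on whether the accepting trace of $\A$ survives pruning, with the pruned-transition case handled by stitching together a backward-trace-inclusion prefix, the witnessing $\B$-transition, and a fair-trace-inclusion suffix. The paper states this by contraposition and leaves the three-piece construction implicit, whereas you spell it out (including the $j=0$ boundary case), but the underlying argument is identical.
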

\begin{proof}
	Let $\prunerel = \makeprunerel{\accblindbwdirecttraceinclusion}{\fairtraceinclusion}$.
	One direction is trivial.
	For the other direction, by contraposition, assume $\xprune{\A}{\B}{\prunerel} \languageinclusion \B$, but
	$\A \not\languageinclusion \B$. There exists a $w \in \lang{\A}$
	s.t. $w \notin \lang{\B}$. There exists an initial fair trace 
	$\pi = q_0 \goesto {\symb_0} q_1 \goesto {\symb_1} \cdots$ on $w$ in $\A$.
	There are two cases.
	\begin{enumerate}
		\item
		$\pi$ does not contain any transition
		$q_i \goesto {\symb_i} q_{i+1}$ that is not present in
		$\xprune{\A}{\B}{\prunerel}$. Then $\pi$ is also an initial fair trace
		on $w$ in $\xprune{\A}{\B}{\prunerel}$, and thus we obtain
		$w \in \lang{\xprune{\A}{\B}{\prunerel}}$ and
		$w \in \lang{\B}$. Contradiction.
		\item
		$\pi$ contains a transition
		$q_i \goesto {\symb_i} q_{i+1}$ that is not present in
		$\xprune{\A}{\B}{\prunerel}$.
		Therefore there exists a transition
		$q_i' \goesto {\symb_i} q_{i+1}'$ in $\B$ s.t.
		$q_i \accblindbwdirecttraceinclusion q_i'$ and $q_{i+1} \fairtraceinclusion q_{i+1}'$.
		Thus there exists an initial fair trace on $w$ in $\B$ and thus
		$w \in \lang{\B}$. Contradiction. \qedhere
	\end{enumerate}
\end{proof}

\noindent
We can approximate $\accblindbwdirecttraceinclusion$ with
(the transitive closure of) a corresponding $k$-lookahead simulation $\accblindkbwsim$,
which is defined as $\kbwsim$, except that only initial states are considered,
i.e., the winning condition is
$\mathcal C^\mathrm{bw-}(\pi_0, \pi_1) \iff \forall (i \geq 0) \cdot p_i \in I \implies q_i \in I$.
Let $\accblindtranskbwsim$ be the transitive closure of $\accblindkbwsim$.
Since GFP is $\subseteq$-downward closed and $\makeprunerel{\cdot}{\cdot}$ is monotone,
we get this corollary.

\begin{corollary}
	$\makeprunerel{\accblindtranskbwsim}{\transkfsim}$ is good for $\A,\B$-pruning.
\end{corollary}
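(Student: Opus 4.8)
The plan is to derive the Corollary directly from Theorem~\ref{lem:ABpruning} using the two closure properties already established in the preceding text, so essentially no new game-theoretic or trace-building argument is required. First I would recall the two underapproximation facts that the paper has just set up: by definition of the lookahead simulations and their transitive closures, $\accblindtranskbwsim\ \subseteq\ \accblindbwdirecttraceinclusion$ and $\transkfsim\ \subseteq\ \fairtraceinclusion$. The first inclusion holds because $\accblindkbwsim$ is an underapproximation of the full backward trace inclusion restricted to initial states (its winning condition $\mathcal C^{\mathrm{bw-}}$ only demands $p_i \in I \implies q_i \in I$, matching the definition of $\accblindbwdirecttraceinclusion$), and passing to the transitive closure stays inside the transitive relation $\accblindbwdirecttraceinclusion$. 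The second holds analogously because $\kfsim$ underapproximates fair trace inclusion and $\fairtraceinclusion$ is itself transitive.

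Next I would invoke monotonicity of the endpoint-comparison construction, namely that $\makeprunerel{\cdot}{\cdot}$ is monotone in both arguments (stated just before the Definition of $\makeprunerel{\brel}{\frel}$). Combining monotonicity with the two inclusions above yields
\[
	\makeprunerel{\accblindtranskbwsim}{\transkfsim}\ \subseteq\ \makeprunerel{\accblindbwdirecttraceinclusion}{\fairtraceinclusion}.
\]
Finally I would appeal to the fact that good-for-$\A,\B$-pruning is $\subseteq$-downward closed, exactly as the paper notes for GFP relations. Since Theorem~\ref{lem:ABpruning} establishes that the larger relation $\makeprunerel{\accblindbwdirecttraceinclusion}{\fairtraceinclusion}$ is good for $\A,\B$-pruning, every subrelation is too, and in particular $\makeprunerel{\accblindtranskbwsim}{\transkfsim}$ is good for $\A,\B$-pruning.

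The only subtlety — and the one place I would be careful — is making sure the downward-closure argument is actually stated for the $\A,\B$-pruning setting rather than only for the single-automaton GFP notion; the excerpt asserts downward closure for GFP and invokes it ``since GFP is $\subseteq$-downward closed'' in the sentence preceding the Corollary, so I would either cite that sentence directly or, to be safe, observe the one-line reason: shrinking $\prunerel$ can only leave more transitions in $\delta'$, hence $\xprune{\A}{\B}{\prunerel}$ grows (as a subautomaton) when $\prunerel$ shrinks, and the equivalence $\A \languageinclusion \B \iff \xprune{\A}{\B}{\prunerel}\languageinclusion\B$ is preserved because the nontrivial direction only needs \emph{some} witnessing transition in $\B$, which survives when fewer $\A$-transitions are pruned. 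Beyond that, the proof is a routine three-step chain — containment of relations, monotonicity, downward closure — with no real obstacle.
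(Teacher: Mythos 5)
Your proposal is correct and matches the paper's own justification, which likewise derives the corollary from Theorem~\ref{lem:ABpruning} via the under-approximation of the trace inclusions by the (transitive closures of the) lookahead simulations, the monotonicity of $\makeprunerel{\cdot}{\cdot}$, and the $\subseteq$-downward closure of the good-for-pruning property. Your extra remark spelling out why downward closure holds in the $\A,\B$-pruning setting (fewer pruned transitions can only preserve the counterexample-finding direction) is a reasonable precaution but not a departure from the paper's argument.
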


\xparagraph{Simplify $\B$.}

Let $\A \times \B$ be the synchronized product of $\A$ and $\B$.
The idea is to remove states in $\B$ which cannot be reached in $\A \times \B$.
Let $R$ be the set of states in $\A \times \B$ reachable from $I_\A \times I_\B$,
and let $X \subseteq Q_\B$ be the projection of $R$ to the $\B$-component.
We obtain $\B'$ from $\B$ by removing all states $\notin X$ and their associated transitions.
Although $\B' \not\languageequivalence \B$, this operation is clearly inclusion-preserving.


\subsection{Jumping fair simulation as a better GFI relation} \label{sec:jumpsim}

We further generalize the GFI preorder $\transkfsim$ by allowing Duplicator even more freedom.
The idea is to allow Duplicator to take \emph{jumps} during the simulation game (in the spirit of \cite{Clemente:PhD}).
For a preorder $\le$ on $Q$, in the game for \emph{$\le$-jumping $k$-lookahead simulation}
Duplicator is allowed to jump to $\le$-larger states before taking a transition.
Thus, a Duplicator's move is of the form
$q_i \le q_i' \goesto {\symb_i} {q_{i+1}} \le q_{i+1}' \goesto {\symb_{i+1}} \cdots \goesto {\symb_{i+m-1}} q_{i+m}$,
and she eventually builds an infinite $\le$-jumping trace. We say that this trace
is \emph{accepting} at step $i$ iff $\exists q_i'' \in F.\, q_i \le q_i'' \le q_i'$,
and \emph{fair} iff it is accepting infinitely often.
As usual, \emph{$\le$-jumping $k$-lookahead fair simulation} holds
iff Duplicator wins the corresponding game, with the fair winning condition
lifted to jumping traces.

Not all preorders $\le$ induce GFI jumping simulations.
The preorder $\le$ is called {\em jumping-safe} \cite{Clemente:PhD} if,
for every word $w$, there exists a $\le$-jumping initial fair trace on $w$ 
iff there exists an initial fair non-jumping one.
Thus, jumping-safe preorders allows to convert jumping traces into non-jumping ones.
Consequently, for a jumping-safe preorder $\le$,
$\le$-jumping $k$-lookahead fair simulation is GFI.

One can prove that $\bwdirecttraceinclusion$ is jumping-safe, while $\accblindbwdirecttraceinclusion$ is not. 
%
We even improve $\bwdirecttraceinclusion$ to a slightly more general jumping-safe relation $\countingbwtraceinclusion$,
by only requiring that Duplicator visits at least as many accepting states as Spoiler does,
but not necessarily at the same time.
Formally, $p_m \countingbwtraceinclusion q_m$ iff,
for every initial $w$-trace
$\pi_0 = p_0 \goesto {\symb_0} p_1 \goesto {\symb_1} \cdots \goesto {\symb_{m-1}} p_m$, 
there exists an initial $w$-trace
$\pi_1 = q_0 \goesto {\symb_0} q_1 \goesto {\symb_1} \cdots \goesto {\symb_{m-1}} q_m$, 
s.t. $|\{i \,|\, p_i \in F\}| \le |\{i \,|\, q_i \in F\}|$.

\begin{theorem}\label{lem:jumping-fairsim}
	The preorder $\countingbwtraceinclusion$ is jumping-safe.
\end{theorem}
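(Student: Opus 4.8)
The plan is to prove the two directions of the ``iff'' separately, the direction non-jumping $\Rightarrow$ jumping being immediate and the direction jumping $\Rightarrow$ non-jumping carrying all the work. For the easy direction, observe that any initial fair non-jumping trace is a special $\countingbwtraceinclusion$-jumping trace in which every jump is trivial ($q_i = q_i'$): whenever $q_i \in F$ the trace is accepting at step $i$ via the witness $q_i'' = q_i$, so fairness is preserved. The interesting direction is to turn a given $\countingbwtraceinclusion$-jumping initial fair trace $q_0 \le q_0' \goesto{\symb_0} q_1 \le q_1' \goesto{\symb_1} \cdots$ on $w$ into an ordinary initial fair trace on $w$, thereby witnessing $w \in \lang{\A}$.

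For this I would follow the template of Theorem~\ref{lem:bwincl_GFQ}: build, for every $n$, an ordinary (non-jumping) initial finite $w$-trace that ends where the jumping trace sits at step $n$ and that visits at least as many accepting states as the jumping prefix, and then invoke K\"onig's Lemma (using that $\A$ is finitely branching) to extract an infinite initial trace visiting $F$ infinitely often. The jumps are eliminated one at a time, from the front, using the defining property of $\countingbwtraceinclusion$: at a jump $q_i \le q_i'$ every initial finite $w$-trace ending in $q_i$ can be replaced by an initial finite $w$-trace ending in $q_i'$ visiting at least as many accepting states, after which the genuine transition $q_i' \goesto{\symb_i} q_{i+1}$ can be appended. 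At an accepting step, i.e.\ where $q_i \le q_i'' \le q_i'$ with $q_i'' \in F$, I would route the re-routed past first through $q_i''$ (using $q_i \countingbwtraceinclusion q_i''$) and then on to $q_i'$ (using $q_i'' \countingbwtraceinclusion q_i'$), so that the accepting intermediate is actually traversed by a genuine trace.

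The main obstacle is the bookkeeping of accepting states, and it is precisely here that $\countingbwtraceinclusion$ is more delicate than $\bwdirecttraceinclusion$. Since $\countingbwtraceinclusion$ compares only the \emph{total} number of accepting states (``not necessarily at the same time''), each application guarantees merely $|\{i : p_i \in F\}| \le |\{i : q_i \in F\}|$ and may silently trade one internal accepting state for another; a naive step-by-step re-routing therefore keeps the count only non-decreasing rather than making it grow at each accepting step. The crux of the proof is to show that the count nonetheless tends to infinity along the construction: I would argue that forcing a genuine passage through the accepting witness $q_i''$ contributes an acceptance that cannot be entirely absorbed by a trade, so that the accepting-state counts of the constructed prefixes are unbounded. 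Once unboundedness is secured, K\"onig's Lemma finishes the argument exactly as in Theorem~\ref{lem:bwincl_GFQ}. In contrast with the position-wise relation $\bwdirecttraceinclusion$, where each accepting step contributes a matched accepting state and growth is automatic, for $\countingbwtraceinclusion$ this monotone-growth step is the only non-routine part, and it is where I would concentrate the effort (possibly strengthening the invariant to track the count relative to the number of accepting steps rather than pointwise).
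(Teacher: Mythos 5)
Your proposal is correct and follows essentially the same route as the paper's proof: the trivial direction via reflexivity of $\countingbwtraceinclusion$, and for the converse a front-to-back elimination of jumps that reroutes the past through the accepting witness $q_i''$ by two successive applications of $\countingbwtraceinclusion$, followed by K\"onig's Lemma. The counting subtlety you isolate as the crux is exactly what the paper addresses by maintaining the cumulative invariant $\mathcal{C}^c_i(\pi,\pi^i)$, which compares the numbers of accepting \emph{steps} of the jumping prefixes rather than pointwise acceptance --- precisely the strengthened invariant you suggest in your closing parenthetical.
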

\begin{proof}
	%
	Since $\countingbwtraceinclusion$ is reflexive, the existence of an initial
	fair trace on $w$ directly implies the existence of a 
	$\countingbwtraceinclusion$-jumping initial fair trace on $w$.
	
	Now, we show the reverse implication.
	Given two initial $\countingbwtraceinclusion$-jumping traces on $w$
	$\pi_0 = p_0 \countingbwtraceinclusion p_0' \goesto {\symb_0} {p_{1}} 
	\countingbwtraceinclusion p_{1}' \goesto {\symb_{1}} \cdots$
	and 
	$\pi_1 = q_0 \countingbwtraceinclusion q_0' \goesto {\symb_0} {q_{1}} 
	\countingbwtraceinclusion q_{1}' \goesto {\symb_{1}} \cdots$
	we define $\mathcal C^c_j(\pi_0, \pi_1)$ iff
	$|\{i \le j\,|\, \exists p_i'' \in F.\,  p_i \countingbwtraceinclusion
	p_i'' \countingbwtraceinclusion p_i'\}| \le 
	|\{i \le j\,|\, \exists q_i'' \in F.\,  q_i \countingbwtraceinclusion
	q_i'' \countingbwtraceinclusion q_i'\}|$.
	We say that an initial $\countingbwtraceinclusion$-jumping trace on $w$
	is {\em $i$-good} iff it does not jump within the first $i$ steps.

	We show, by induction on $i$, the following property (P):
	For every $i$ and every 
	infinite $\countingbwtraceinclusion$-jumping initial trace
	$\pi = p_0 \countingbwtraceinclusion p_0' \goesto {\symb_0} {p_{1}} \countingbwtraceinclusion p_{1}' \goesto {\symb_{1}} \cdots$
	on $w$ there exists 
	an initial $i$-good trace $\pi^i = q_0 \goesto {\symb_0} q_1 \goesto
	{\symb_1} \cdots \goesto {\symb_i} q_i \cdots$ on $w$
	s.t. $\mathcal C^c_i(\pi, \pi^i)$ and the suffixes of the traces are identical, i.e.,
	$q_i = p_i$ and $\suffix \pi i = \suffix {\pi^i} i$.

	For the case base $i=0$ we take $\pi^0 = \pi$. 
	Now we consider the induction step. 
	By induction hypothesis we get an initial $i$-good trace $\pi^i$
	s.t. $\mathcal C^c_i(\pi, \pi^i)$ and $q_i = p_i$ and 
        $\suffix \pi i = \suffix {\pi^i} i$.
	If $\pi^i$ is $(i+1)$-good then we can take $\pi^{i+1} = \pi^{i}$.
	Otherwise, $\pi^i$ contains a step 
	$q_i \countingbwtraceinclusion q_i' \goesto {\symb_i} {q_{i+1}}$.
	First we consider the case where there exists a $q_i'' \in F$
	s.t. $q_i \countingbwtraceinclusion q_i'' \countingbwtraceinclusion q_i'$.
	(Note that the $i$-th step in $\pi^i$ can count as accepting in $\mathcal C^c$
	because $q_i'' \in F$, even if $q_i$ and $q_i'$ are not accepting.)
	By def. of $\countingbwtraceinclusion$ there exists
	an initial trace $\pi''$ on a prefix of $w$ that ends in $q_i''$
	and visits accepting states at least as often as the non-jumping
	prefix of $\pi^i$ that ends in $q_i$.
	Again by definition of $\countingbwtraceinclusion$ there exists
	an initial trace $\pi'$ on a prefix of $w$ that ends in $q_i'$
	and visits accepting states at least as often as $\pi''$.
	Thus $\pi'$ visits accepting states at least as often as the {\em jumping}
	prefix of $\pi^i$ that ends in $q_i'$ (by the definition of $\mathcal C^c$).
	By composing the traces we get $\pi^{i+1} = \pi' (q_i' \goesto {\symb_i}
	{q_{i+1}}) \suffix {\pi^i} {i+1}$. Thus $\pi^{i+1}$ is an $(i+1)$-good initial trace
	on $w$ and 
        $\suffix \pi {i+1} = \suffix {\pi^i} {i+1} = \suffix {\pi^{i+1}} {i+1}$ and 
	$\mathcal C^c_{i+1}(\pi^i, \pi^{i+1})$ and $\mathcal C^c_{i+1}(\pi, \pi^{i+1})$.
	The other case where there is no $q_i'' \in F$
	s.t. $q_i \countingbwtraceinclusion q_i'' \countingbwtraceinclusion q_i'$ is
	similar, but simpler.

	Let $\pi$ be an initial $\countingbwtraceinclusion$-jumping fair trace on $w$.
	By property (P) and K\"onig's Lemma there 
	exists an infinite initial non-jumping fair trace $\pi'$ on $w$.
	Thus $\countingbwtraceinclusion$ is jumping-safe.
	\ignore{ 
	Now, we show that $\countingbwtraceinclusion$-jumping $k$-lookahead fair
	simulation is GFI.
	Consider two B\"uchi automata $\A$ and $\B$ s.t. for 
	every $p \in I_\A$ there exists $q \in I_\B$ 
	s.t. $q$ can $\countingbwtraceinclusion$-jumping $k$-lookahead fair simulate
	$p$. If $w \in \lang{\A}$ then there is an infinite fair trace on $w$
	from some $p \in I_\A$. It follows that there is an infinite fair
	$\countingbwtraceinclusion$-jumping trace on $w$ from some $q \in I_\B$.
	Since $\countingbwtraceinclusion$ is jumping-safe, there exists
	an infinite fair initial trace on $w$ in $\B$, i.e., 
	an infinite fair trace on $w$ from some $q' \in I_\B$.
	Thus, $w \in \lang{\B}$ as required.
	}
\end{proof}

\noindent
As a direct consequence, $\countingbwtraceinclusion$-jumping $k$-lookahead fair simulation is GFI.
Since $\countingbwtraceinclusion$ is difficult to compute,
we approximate it by a corresponding lookahead-simulation $\countingkbwsim$ which, in the same spirit,
counts and compares the number of visits to accepting states in every round of the $k$-lookahead backward simulation game.
Let $\countingtranskbwsim$ be the transitive closure of $\countingkbwsim$.

\begin{corollary}
	$\countingtranskbwsim$-jumping $k$-lookahead fair sim. is GFI.
\end{corollary}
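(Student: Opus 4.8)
The plan is to deduce the corollary from Theorem~\ref{lem:jumping-fairsim} by combining two facts: that $\countingtranskbwsim$ refines the trace inclusion $\countingbwtraceinclusion$, and that the property of being jumping-safe is preserved under shrinking the preorder. Once $\countingtranskbwsim$ is shown to be jumping-safe, the GFI conclusion follows by exactly the argument that establishes GFI for $\countingbwtraceinclusion$-jumping fair simulation.

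First I would check the containment $\countingtranskbwsim\ \subseteq\ \countingbwtraceinclusion$. The relation $\countingkbwsim$ is the $k$-lookahead (bounded-information) version of the counting backward condition, whereas $\countingbwtraceinclusion$ is its $\omega$-lookahead limit, in which Duplicator sees the whole trace in advance; since more lookahead only helps Duplicator, a Duplicator win in the $k$-lookahead game witnesses the trace-inclusion relation, so $\countingkbwsim\ \subseteq\ \countingbwtraceinclusion$. Because $\countingbwtraceinclusion$ is a preorder (hence transitive), the transitive closure stays inside it, giving $\countingtranskbwsim\ \subseteq\ \countingbwtraceinclusion$; and as $\countingkbwsim$ is reflexive, $\countingtranskbwsim$ is itself a preorder.

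Next I would show that jumping-safety is $\subseteq$-downward closed. Suppose $\le_1\ \subseteq\ \le_2$ with $\le_2$ jumping-safe, and fix a word $w$. The forward direction of the jumping-safe equivalence is immediate from reflexivity: a non-jumping initial fair trace is in particular a $\le_1$-jumping one. For the converse, every $\le_1$-jumping initial fair trace on $w$ is also a $\le_2$-jumping one, since each $\le_1$-jump is a $\le_2$-jump; jumping-safety of $\le_2$ then yields a non-jumping initial fair trace on $w$. Hence $\le_1$ is jumping-safe. Applying this with $\le_1\ =\ \countingtranskbwsim$ and $\le_2\ =\ \countingbwtraceinclusion$, Theorem~\ref{lem:jumping-fairsim} makes $\countingtranskbwsim$ jumping-safe.

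Finally I would invoke the general principle recorded in the text---that for any jumping-safe preorder $\le$, $\le$-jumping $k$-lookahead fair simulation is GFI---to conclude that $\countingtranskbwsim$-jumping $k$-lookahead fair simulation is GFI. The only step needing care is the containment $\countingkbwsim\ \subseteq\ \countingbwtraceinclusion$, which amounts to confirming that the round-by-round counting of accepting visits under bounded lookahead under-approximates the global count defining $\countingbwtraceinclusion$; this is the standard lookahead-versus-limit inclusion, so I anticipate no genuine difficulty, justifying the description of the corollary as a direct consequence.
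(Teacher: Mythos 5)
Your proposal is correct and follows the same route the paper intends for this corollary (which it states without an explicit proof): the containment $\countingtranskbwsim\ \subseteq\ \countingbwtraceinclusion$, the $\subseteq$-downward closure of jumping-safety, Theorem~\ref{lem:jumping-fairsim}, and the stated principle that any jumping-safe preorder induces a GFI jumping fair simulation. The only detail worth making explicit is that the ``accepting at step $i$'' condition of a jumping trace is itself monotone in the preorder, so a fair $\le_1$-jumping trace really is a fair $\le_2$-jumping trace when $\le_1\ \subseteq\ \le_2$; this holds and your argument goes through.
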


\subsection{Advanced inclusion checking algorithm} \label{sec:inclalg}

Given these techniques, we propose the following algorithm for inclusion checking $\A \languageinclusion \B$.

\begin{enumerate}
	
	\item[(1)]
		Use the Heavy-$k$ procedure to minimize $\A$ and $\B$,
		and additionally apply the inclusion-preserving minimization techniques from Sec.~\ref{sec:preprocessing}.
		Lookahead simulations are computed not only on $\A$ and $\B$, but
		also {\em between} them (i.e., on their disjoint union).
		Since they are GFI, we check whether they already witness
                inclusion. Since many simulations are computed between partly
                minimized versions of $\A$ and $\B$, this witnesses
                inclusion much more often than checking fair simulation
                between the original versions. 
		\ignore{
			Note that simulations are computed several times on different partly minimized
			variants of $\A$ and $\B$ during the iterated minimization procedure,
			and simulation is easier to establish on the smaller simpler variants.
			Thus one has more than one chance to establish inclusion, i.e., it is much more
			effective than just checking (fair) simulation between the initial states of the
			original versions of $\A$ and $\B$. Moreover, lookahead-simulations 
			are much larger than ordinary simulations, further increasing the chance to
			establish inclusion already at this stage (of course only in those instances where
			inclusion actually holds); cf Section~\ref{sec:experiments}.
		}
		This step either stops showing inclusion,
		or produces smaller inclusion-equivalent automata $\A', \B'$.
		
	\item[(2)]
		Check the GFI $\countingtranskbwsim$-jumping $k$-lookahead fair simulation from Sec.~\ref{sec:jumpsim} between $\A'$ and $\B'$,
		and stop if the answer is yes.
		
	\item[(3)]
	If inclusion was not established in steps (1) or (2) then try to find a counterexample
	to inclusion. This is best done by a Ramsey-based method (optionally using
	simulation-based subsumption techniques), e.g.,
	\cite{Rabit_CONCUR2011,RABIT}. 
	Use a small timeout value, since in most
	non-included instances there exists a very short counterexample.
	Stop if a counterexample is found.
	
	\item[(4)]
	If steps (1)-(3) failed (rare in practice), use any complete method,
	(e.g., Rank-based, Ramsey-based or Piterman's construction) to check $\A' \languageinclusion \B'$.
	At least, it will benefit from working on the smaller instance $\A', \B'$
	produced by \ignore{the pre-processing} step (1).
	
\end{enumerate}

\noindent
Note that steps (1)-(3) take polynomial time, while step (4) takes exponential time.
(For the latter, we recommend the improved Ramsey method of \cite{Rabit_CONCUR2011,RABIT}
and the on-the-fly variant of Piterman's construction \cite{Pit06} implemented in GOAL \cite{GOAL_survey_paper}.) 
This algorithm allows to solve much larger instances of the
inclusion problem than previous methods
\cite{sistla:vardi:wolper:complementation:87,GOAL_survey_paper,fogarty_et_al:LIPIcs:2011:3235,seth:buchi,seth:efficient,abdulla:simulationsubsumption,Rabit_CONCUR2011,Pit06},
i.e., automata with 1000-20000 states instead of 10-100 states; cf.
Section~\ref{sec:experiments}.

\section{Experiments}\label{sec:experiments}

We test the effectiveness of Heavy-k minimization 
on Tabakov-Vardi random automata \cite{tabakov:model}, 
on automata derived from LTL formulae, and on 
automata derived from mutual exclusion protocols,
and compare it to the best previously available 
techniques implemented in GOAL \cite{GOAL_survey_paper}.
A scalability test shows that Heavy-k has quadratic
average-case complexity and it is vastly more efficient than GOAL.
Furthermore, we test our methods for language inclusion
on large instances and compare their performance to
previous techniques. Due to space limitations,
we only give a summary of the results, but all details and
the runnable tools are available \cite{ourtest}.
Unless otherwise stated, the experiments were run with Java 6 on 
Intel Xeon X5550 2.67GHz and 14GB memory.

\paragraph{Random automata.}

The Tabakov-Vardi model \cite{tabakov:model} generates random automata
according to the following parameters:
the number of states $n$, the size of the alphabet $|\Sigma|$, the transition density 
${\it td}$ (number of transitions, relative to $n$ and $|\Sigma|$) and the
acceptance density ${\it ad}$ (percentage of accepting states).
Apart from this, they do not have any special structure, and thus minimization
and language inclusion problem are harder for them than for automata from
other sources (see below).
Random automata provide general reproducible test cases, on average.
Moreover, they are the only test cases that are guaranteed to be unbiased
towards any particular method. Thus, it is a particular sign of quality if a
method performs well even on these hard cases.

The inherent difficulty of the minimization problem,
and thus also the effectiveness of minimization methods, depends strongly on the class of
random automata, i.e., on the parameters listed above. Thus, one needs
to compare the methods over the whole range, not just for one example.
Variations in ${\it ad}$ do not affect Heavy-k much (cf. Appendix~\ref{sec:acceptance_density}),
but very small values make minimization harder for the other methods.
By far the most important parameter 
is ${\it td}$. The following figure shows typical results.
We take $n=100$, $|\Sigma|=2$, ${\it ad}=0.5$ and the range of
${\it td}=1.0, 1.1, \dots, 3.0$.
For each ${\it td}$ we created 300 random automata, minimized them with
different methods, and plotted the resulting average number of states after minimization.
Each curve represents a different method: RD (just remove dead states),
Light-1, Light-12, Heavy-1, and Heavy-12 and GOAL. The GOAL curve
shows the best effort of all previous techniques (as implemented in GOAL),
which include RD, quotienting with backward and forward simulation, pruning of
little brother transitions and the fair simulation minimization of
\cite{GBS02} (which subsumes quotienting with delayed simulation).

\begin{center}
\includegraphics[scale=0.4]{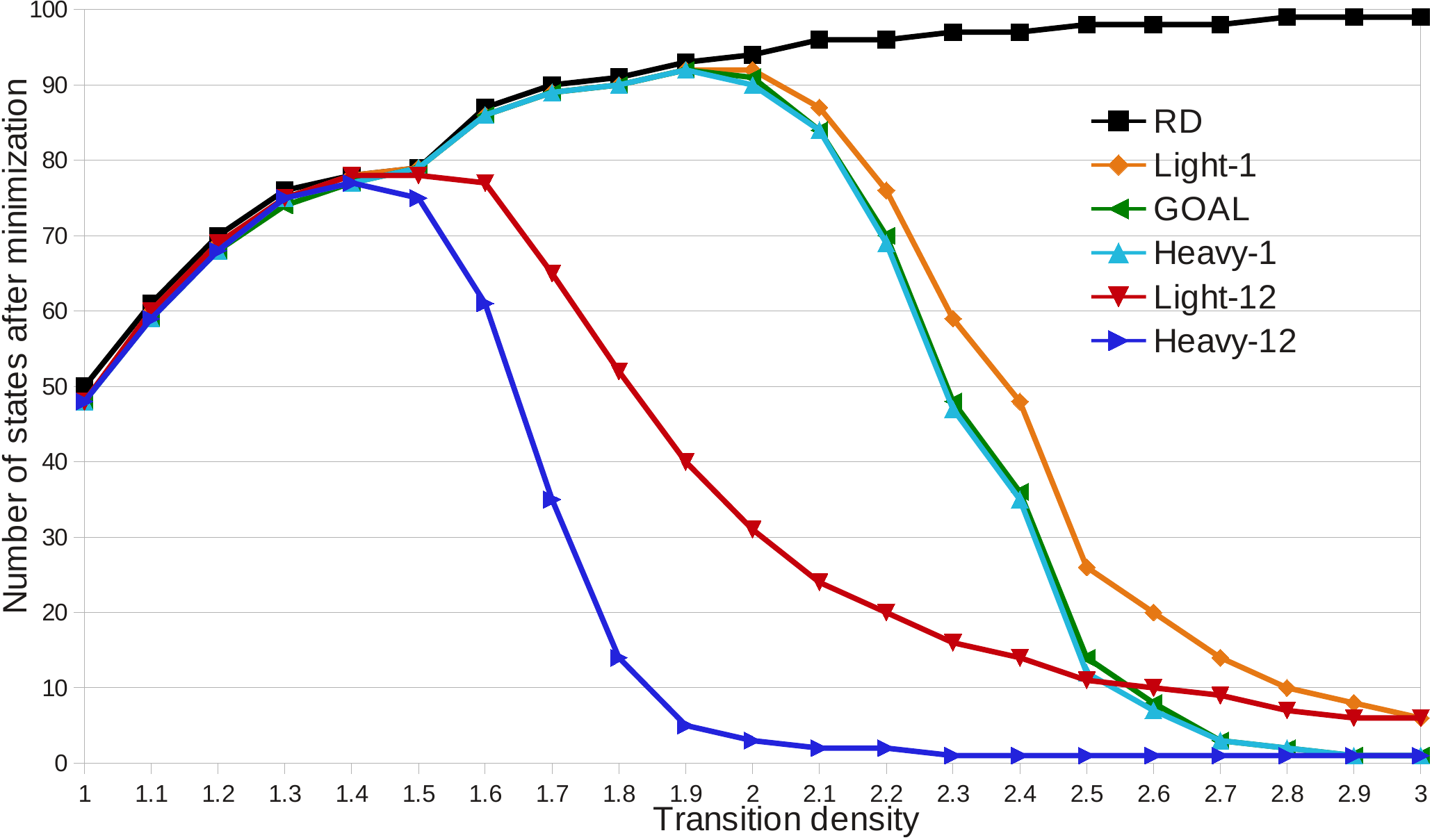}
\end{center}

\noindent
Sparse automata with low ${\it td}$ have more dead states.
For ${\it td} \le 1.4$ no technique except RD has any significant effect.
GOAL minimizes just slightly worse than Heavy-1 but it is no match for
our best techniques.
Heavy-12 vastly outperforms all others, particularly in the
interesting range between $1.4$ and $2.5$. 
Moreover, the minimization of GOAL (in particular the fair simulation minimization of
\cite{GBS02}) is very slow. For GOAL, the average minimization time per automaton varies
between 39s (at ${\it td}=1.0$) and 612s (maximal at ${\it td}=2.9$).
In contrast, for Heavy-12, the average minimization time per automaton
varies between 0.012s (at ${\it td}=1.0$) and 1.482s (max. at ${\it td}=1.7$).
So Heavy-12 minimizes not only much better, but also at least 400
times faster than GOAL (see also the scalability test).

For ${\it td} \ge 2.0$, Heavy-12 yields very small automata. Many of these are
even universal, i.e., with just one state and a universal loop.
However, this frequent universality is {\em not} due to trivial reasons 
(otherwise simpler techniques like Light-1 and GOAL would also recognize this).
Consider the following question: Given Tabakov-Vardi random automata with
parameters $n$, $|\Sigma|$ and ${\it td}$, what is the probability 
$U(n,|\Sigma|,{\it td})$ that
every state has at least one outgoing transition for every symbol in $\Sigma$?
(Such an automaton would be trivially universal if ${\it ad}=1$.)

\begin{theorem}
$U(n,|\Sigma|,{\it td}) = (\alpha(n,T)/\beta(n,T))^{|\Sigma|}$, 
with $T=n\cdot {\it td}$, 
$\alpha(n,T)=\sum_{m=n}^{n^2} {{m-n} \choose {T-n}} \sum_{i=0}^{n} (-1)^i
{n \choose i} {{m-in-1} \choose {n-1}}$
and $\beta(n,T) = {{n^2} \choose T}$
\end{theorem}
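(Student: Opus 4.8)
The plan is to reduce the computation to a single alphabet symbol, count the favorable transition configurations by a generating-function argument, and then massage the resulting coefficient into the stated double sum.

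First I would exploit the structure of the Tabakov--Vardi model. For each symbol $\symb \in \Sigma$, the $T = n \cdot {\it td}$ transitions labelled $\symb$ are drawn uniformly at random as a $T$-element subset of the $n^2$ candidate pairs $(p,q) \in Q \times Q$, and the draws for distinct symbols are independent. Hence the event ``every state has an outgoing $\symb$-transition'' has the same probability $p_1$ for each $\symb$, and the $|\Sigma|$ events are independent, so $U(n,|\Sigma|,{\it td}) = p_1^{|\Sigma|}$. It therefore suffices to show $p_1 = \alpha(n,T)/\beta(n,T)$, where $\beta(n,T) = \binom{n^2}{T}$ is the total number of $T$-subsets and $\alpha(n,T)$ is the number of $T$-subsets in which every source state has out-degree at least one.

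Next I would compute $\alpha(n,T)$ by generating functions. The out-transitions of the $n$ states are chosen independently from disjoint blocks of $n$ candidates each, and for a single state the generating function counting its out-degree while requiring at least one transition is $(1+x)^n - 1$. Multiplying over the $n$ states gives $\alpha(n,T) = [x^T]\,\bigl((1+x)^n-1\bigr)^n$.

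The remaining, and most delicate, step is to rewrite this coefficient as the stated double sum. Here I would use the factorization $u^n - 1 = (u-1)(1 + u + \cdots + u^{n-1})$ with $u = 1+x$ and $u-1 = x$ to obtain $\bigl((1+x)^n-1\bigr)^n = x^n\,\bigl(1 + (1+x) + \cdots + (1+x)^{n-1}\bigr)^n$. Expanding the bracket, the coefficient of $(1+x)^{m-n}$ equals the number $C(m)$ of ways to write $m = d_1 + \cdots + d_n$ with each $1 \le d_p \le n$, i.e.\ the number of compositions of $m$ into $n$ parts bounded by $n$; a standard bounded-composition inclusion--exclusion (subtracting, for each $i$, the configurations in which $i$ designated parts exceed $n$, which shifts the total by $in$) evaluates this to $C(m) = \sum_{i=0}^n (-1)^i \binom{n}{i}\binom{m-in-1}{n-1}$, exactly the inner sum. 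Finally, extracting $[x^T]$ from $x^n (1+x)^{m-n}$ contributes the factor $\binom{m-n}{T-n}$, and summing over the admissible totals $m$ from $n$ to $n^2$ yields $\alpha(n,T)$ in the claimed form. The main obstacle is the bookkeeping of binomial conventions (treating $\binom{a}{b}$ as $0$ whenever $a<b$, so that both the inclusion--exclusion terms and the range of $m$ remain consistent) together with verifying that the factorization relabels the composition count correctly under the shift $m \mapsto m-n$.
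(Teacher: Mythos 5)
Your proof is correct and is essentially the paper's argument recast in generating-function form: the factorization $(1+x)^n-1 = x\bigl(1+(1+x)+\cdots+(1+x)^{n-1}\bigr)$ encodes exactly the paper's device of distinguishing one mandatory (maximal-column) item per row and letting the remaining $T-n$ items fall into the $m-n$ lower cells, which yields the same outer factor $\binom{m-n}{T-n}$. The inner alternating sum is likewise the same key lemma in both proofs, namely the count of compositions of $m$ into $n$ parts bounded by $n$ (the ``dice-sum'' problem the paper cites from Niven), so no genuinely new idea is introduced or missing.
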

\begin{proof}
For each symbol in $\Sigma$ there are 
$T=n\cdot {\it td}$ transitions and $n^2$ possible places for transitions,
described as a grid.
$\alpha(n,T)$ is the number of ways $T$ items can be placed onto an 
$n\times n$ grid s.t. every row contains $\ge 1$ item, i.e., every state has an
outgoing transition. $\beta(n,T)$ is the number of possibilities without this
restriction, which is trivially ${{n^2} \choose T}$.
Since the Tabakov-Vardi model chooses transitions for different symbols
independently, we have $U(n,|\Sigma|,{\it td}) = (\alpha(n,T)/\beta(n,T))^{|\Sigma|}$.
It remains to compute $\alpha(n,T)$.
For the $i$-th row let $x_i \in \{1,\dots,n\}$ be the maximal column
containing an item. The remaining $T-n$ items can only be distributed to lower columns. 
Thus $\alpha(n,T) = \sum_{x_1,\dots,x_n} {{(\sum x_i)-n} \choose {T-n}}$.
With $m=\sum x_i$ and a standard dice-sum problem from \cite{Niven:1965}
the result follows.
\end{proof}

\noindent
For $n=100$, $|\Sigma|=2$ we obtain the following values for
$U(n,|\Sigma|,{\it td})$:
$10^{-15}$ for ${\it td=2.0}$, $2.9\cdot 10^{-5}$ for ${\it td=3.0}$,
$0.03$ for ${\it td=4.0}$, $0.3$ for ${\it td=5.0}$, 
$0.67$ for ${\it td=6.0}$, and $0.95$ for ${\it td=8.0}$.
So this transition saturation effect is negligible in our tested
range with ${\it td \le 3.0}$. 

While Heavy-12 performs very well, an even smaller lookahead can already
be sufficient for a good minimization. However, this depends very much on the
density ${\it td}$ of the automata. The following chart shows the effect of
the lookahead by comparing Heavy-k for varying $k$ on different classes of
random automata with different density ${\it td}=1.6, 1.7, 1.8, 1.9, 2.0$.
We have $n=100$, $|\Sigma|=2$ 
and ${\it ad}=0.5$, and every point is the average of 1000 automata.

\begin{center}
\includegraphics[scale=0.4]{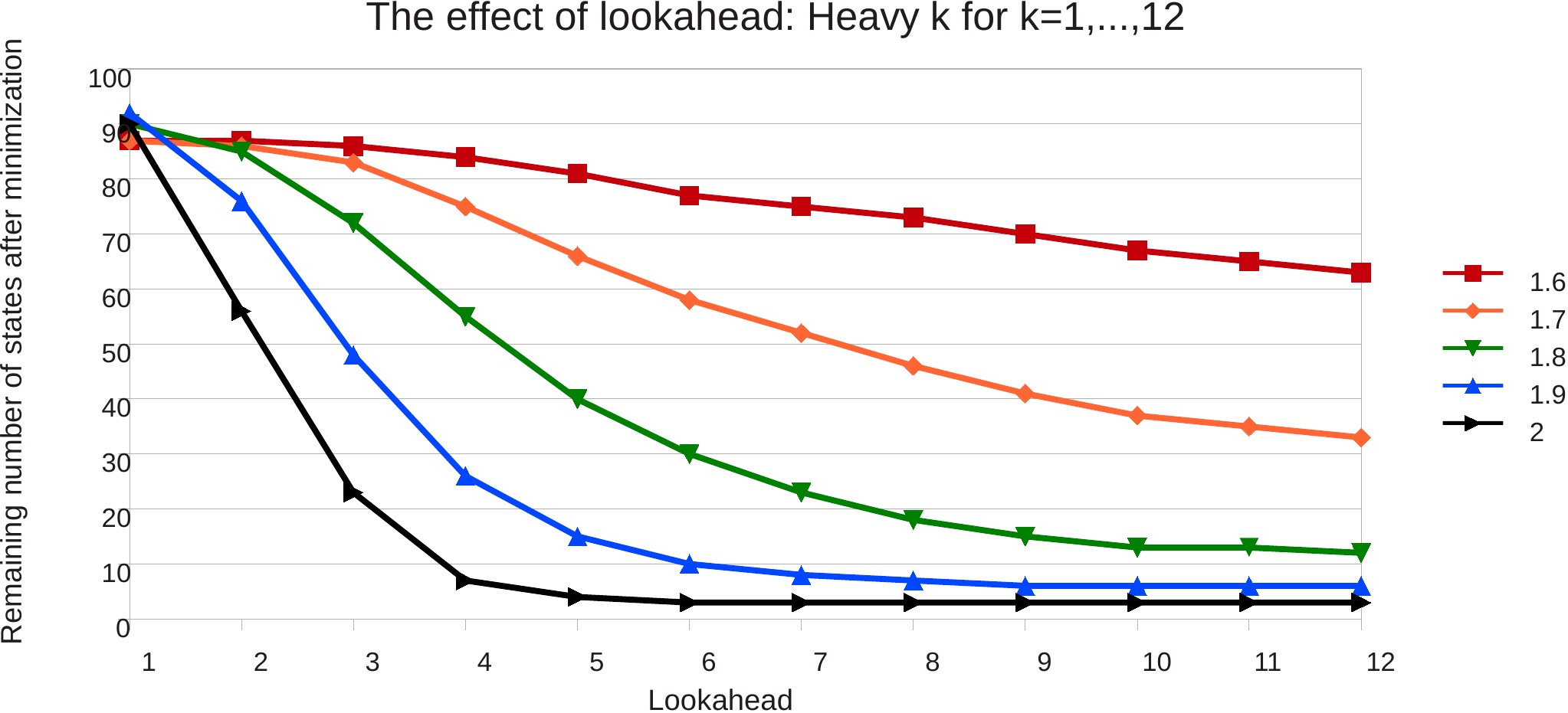}
\end{center}

\noindent 
The big advantage of Heavy-12 over Light-12 is due to the pruning techniques.
However, these only reach their full potential at higher lookaheads 
(thus the smaller difference between Heavy-1 and Light-1).
Indeed, the simulation relations get much denser with higher lookahead $k$.
We consider random automata with $n=100$, $|\Sigma|=2$ and
${\it td}=1.8$ (a nontrivial case; larger ${\it td}$ yield larger
simulations). We let ${\it ad}=0.1$ (resp. ${\it ad}=0.9$), and plot the size of fair, delayed,
direct, and backward simulation as $k$ increases from 
1 to 12. Every point is the average of 1000 automata.

\begin{center}
\includegraphics[scale=0.5]{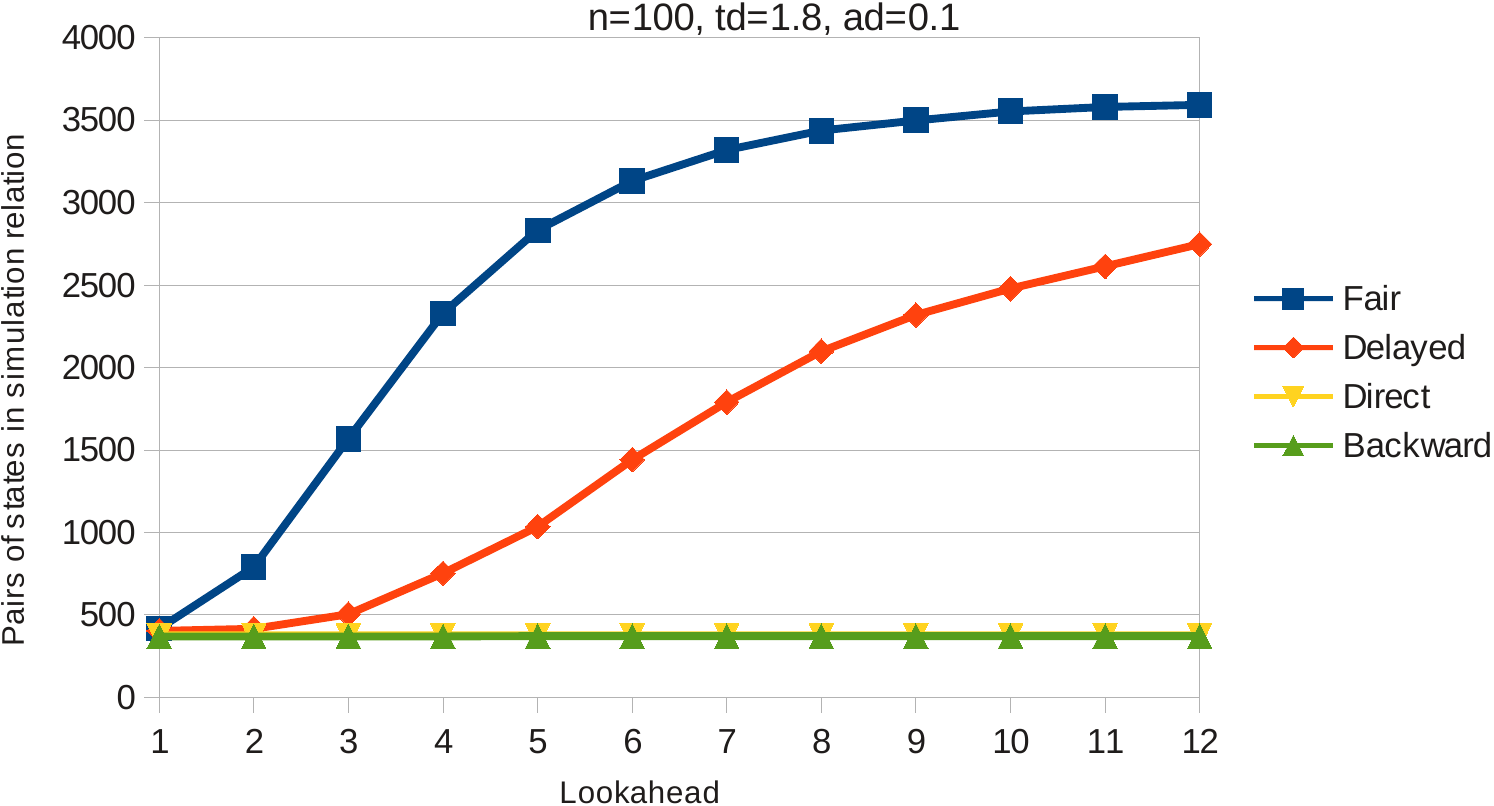}
\end{center}

\begin{center}
\includegraphics[scale=0.5]{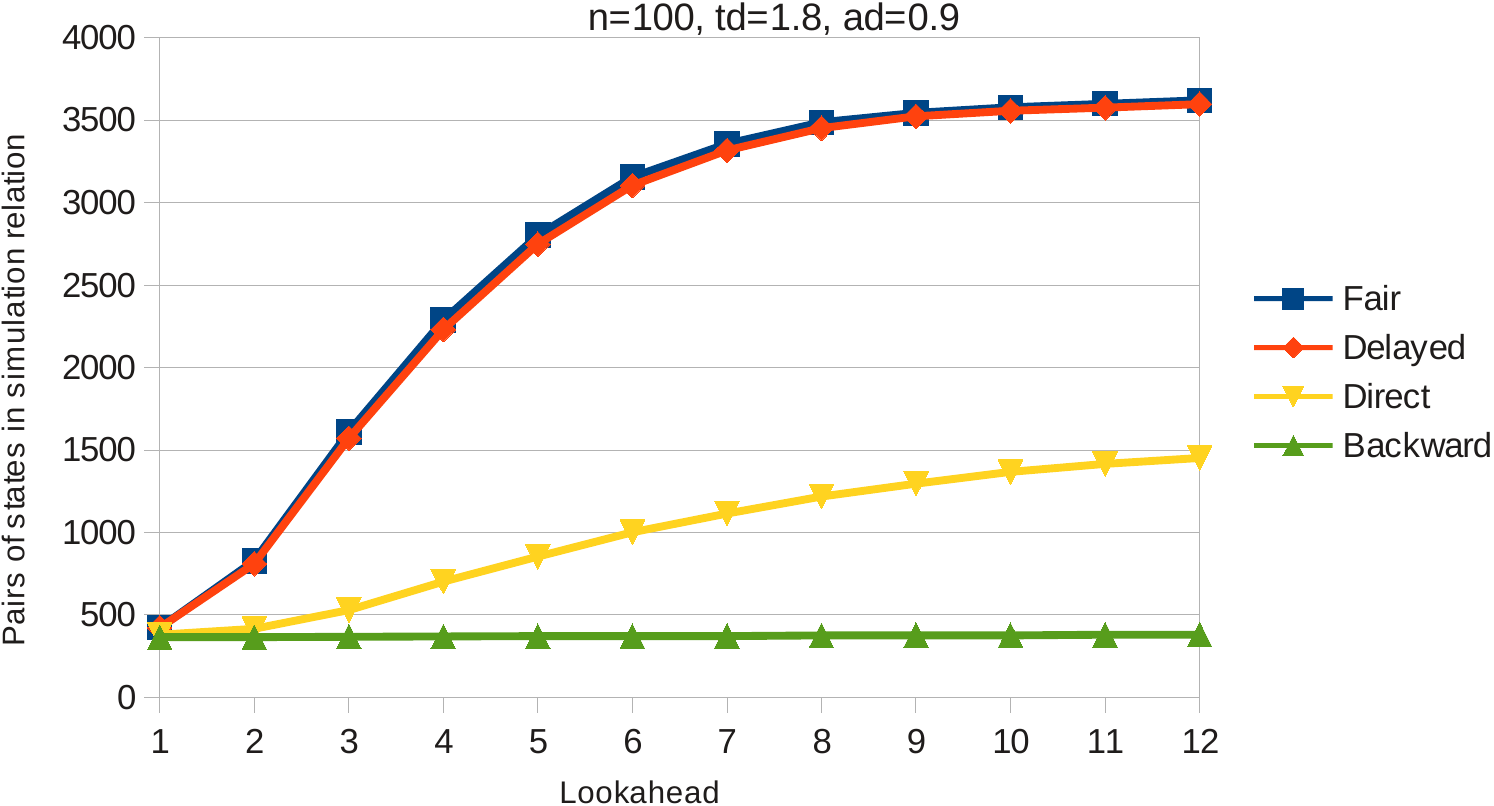}
\end{center}

\noindent
Fair/delayed simulation is not much larger than direct
simulation for $k=1$, but they benefit strongly from higher $k$.
Backward simulation increases only slightly (e.g., from 365 to 381 pairs for
${\it ad}=0.9$).
Initially, it seems as if backward/direct simulation does not benefit from higher $k$ if 
${\it ad}$ is small (on random automata), but this is wrong.
Even random automata get less random during the Heavy-k minimization process,
making lookahead more effective for backward/direct simulation.
Consider the case of $n=300$, ${\it td}=1.8$ and ${\it ad}=0.1$. 
Initially, the average ratio $|\transkdisimnumber{12}|/|\transkdisimnumber{1}|$
is $1.00036$, but after quotienting with $\transkdesimnumber{12}$
this ratio is $1.103$.

\paragraph{LTL.}

For model checking \cite{Holzmann:Spinbook}, LTL-formulae are converted into B\"uchi automata. This
conversion has been extensively studied and there are many different
algorithms which try to construct the smallest possible automaton
for a given formula (see references in \cite{GOAL_survey_paper}).
It should be noted however, that LTL is designed for human readability and 
does not cover the full class of $\omega$-regular languages. Moreover,
B\"uchi Store \cite{buechistore:2011} contains handcrafted automata for
almost every human-readable LTL-formula and none of these automata has 
more than 7 states.
Still, since many people are interested in LTL to automata conversion,
we tested how much our minimization algorithm can improve upon the best effort
of previous techniques. For LTL model checking, the size of the automata 
is not the only criterion \cite{Sebastiani-Tonetta:2003}, since 
more non-determinism also makes the problem harder. However, our 
transition pruning techniques only make an automaton `more deterministic'.

Using a function of GOAL, we created 300 random LTL-formulae of nontrivial
size: length 70, 4 predicates and probability weights 1 for boolean and 2 for
future operators. We then converted these formulae to B\"uchi automata and 
minimized them with GOAL. Of the 14 different converters implemented in GOAL
we chose LTL2BA \cite{GastinOddoux2001} (which is also used by the SPIN model
checker \cite{Holzmann:Spinbook}), since it was the only one which could handle such large
formulae. (The second best was COUVREUR which succeeded on 90\% of the
instances, but produced much larger automata than LTL2BA. The other converters 
ran out of time (4h) or memory (14GB) on most instances.)
We thus obtained 300 automata and minimized them with GOAL. 
The resulting automata vary significantly in size from 1 state to 1722 states \cite{ourtest}. 

Then we tested how much {\em further} these automata could be reduced in size by our
Heavy-12 method (cf. Appendix~\ref{sec:scalability}). In summary,
82\% of the automata could be further reduced in size. 
The average number of states declined from 138 to 78, and the average number
of transitions from 3102 to 1270. Since larger automata have a 
disproportionate effect on averages, we also computed the average reduction
ratio per automaton, i.e., 
$(1/300) \sum_{i=1}^{300} {\it newsize}_i/{\it oldsize}_i$.
(Note the difference between the average ratio and the ratio of averages.)
The average ratio was $0.76$ for states and $0.68$ for transitions.
The computation times for minimization vary a lot due to different automata sizes
(average 122s), but were always less than the time used by the LTL to automata translation. 
If one only considers the 150 automata above median size (30 states)
then the results are even stronger. 100\% of these automata could be further
reduced in size.
The average number of states declined from 267 to 149, and the average number
of transitions from 6068 to 2435. 
The average reduction ratio was $0.65$ for states and $0.54$ for transitions.
To conclude, our minimization can significantly improve the quality of
LTL to automata translation with a moderate overhead.

\paragraph{Mutual exclusion protocols.}
We consider automata derived from mutual exclusion protocols.
The protocols were described in a language of guarded commands and
automatically translated into B\"uchi automata, whose size is given in the
column `Original'. 
By row, the protocols are Bakery.1, Bakery.2, Fischer.3.1, Fischer.3.2, 
Fischer.2, Phils.1.1, Phils.2 and Mcs.1.2.
We minimize these automata with GOAL and with our Heavy-12 method 
and describe the sizes of the resulting automata and the runtime in 
subsequent columns (Java 6 on Intel i7-740, 1.73 GHz).
In some instances GOAL ran out of time (2h) or memory (14GB).

\begin{center}\scriptsize
	\begin{tabular}{|c|c|c|c|c|c|c|c|}
	  \hline
	\multicolumn{2}{|c|}{Original}
	  & \multicolumn{2}{c|}{GOAL} & Time & \multicolumn{2}{c|}{Heavy-12} & Time\\
	 \cline{1-4}\cline{6-7}
	  Trans. & States & Tr. & St. & GOAL & Tr. & St. &
	  Heavy-12\\
	  \hline
	  2597 & 1506 & N/A & N/A & $>2h$ & 696 & 477 & 6.17s\\
	  \hline
	  2085 & 1146 & N/A & N/A & $>2h$ & 927 & 643 & 9.04s\\
	  \hline
	  1401 & 638 & 14 & 10 & 15.38s & 14 & 10 & 1.16s\\
	  \hline
	  3856 & 1536 & 212 & 140 & 4529s & 96 & 70 & 5.91s\\
	  \hline
	  67590 & 21733 & N/A & N/A & oom(14GB) & 316 & 192 & 325.76s\\
	  \hline
	  464  & 161 & 362 & 134 & 540.3s & 359 & 134 & 11.51s\\
	  \hline
	  2350 & 581 & 284 & 100 & 164.2s & 225 & 97 & 4.04s\\
	  \hline
	  21509 & 7968 & 108 & 69 & 2606.7s & 95 & 62 & 48.18s\\
	  \hline
	\end{tabular}
\end{center}

\paragraph{Scalability.}
We test the scalability of Heavy-12 minimization by applying it to
Tabakov-Vardi random automata of increasing size but fixed ${\it td}$, 
${\it ad}$ and $\Sigma$. We ran four separate tests with ${\it td}=1.4, 1.6, 1.8$ and
$2.0$. In each test we fixed ${\it ad}=0.5$, $|\Sigma|=2$ and increased the number of states
from $n=50$ to $n=1000$ in increments of 50. For each parameter point 
we created 300 random automata and minimized them with
Heavy-12. We analyze the average size of the minimized automata in percent of
the original size $n$, and how the average computation time increases with
$n$.

For ${\it td}=1.4$ the average size of the minimized automata stays around 
$77\%$ of the original size, regardless of $n$.
For ${\it td}=1.6$ it stays around $65\%$.
For ${\it td}=1.8$ it {\em decreases} from
$28\%$ at $n=50$ to $2\%$ at $n=1000$.
For ${\it td}=2.0$ it {\em decreases} from
$8\%$ at $n=50$ to $<1\%$ at $n=1000$ (cf. Appendix~\ref{sec:scalability}).  
Note that the lookahead of 12 did {\em not change} with $n$.
Surprisingly, larger automata do not require larger lookahead for a good minimization.

We plot the average computation time (measured in ms) in $n$ and then compute
the optimal fit of the function ${\it time} = a*n^b$ to the data by the least-squares
method, i.e., this computes the parameters $a$ and $b$ of the function that
most closely fits the experimental data. The important parameter is the
exponent $b$. For ${\it td}=1.4, 1.6, 1.8, 2.0$ we obtain
$0.018*n^{2.14}$, $0.32*n^{2.39}$, $0.087*n^{2.05}$ and $0.055*n^{2.09}$, respectively.
Thus, the average-case complexity of Heavy-12 scales (almost) 
quadratically. This is especially surprising given that Heavy-12 does not only
compute one simulation relation but potentially many simulations until 
the repeated minimization reaches a fixpoint.
Quadratic complexity is the very best one can hope for in any method that
explicitly compares states/transitions by simulation relations, since the
relations themselves are of quadratic size. 
Lower complexity is only possible with pure partition refinement techniques 
(e.g., bisimulation, which is $O(n\log n)$), but these achieve even less
minimization than quotienting with direct simulation (i.e., next to nothing 
on hard instances).

\begin{center}
\includegraphics[scale=0.4]{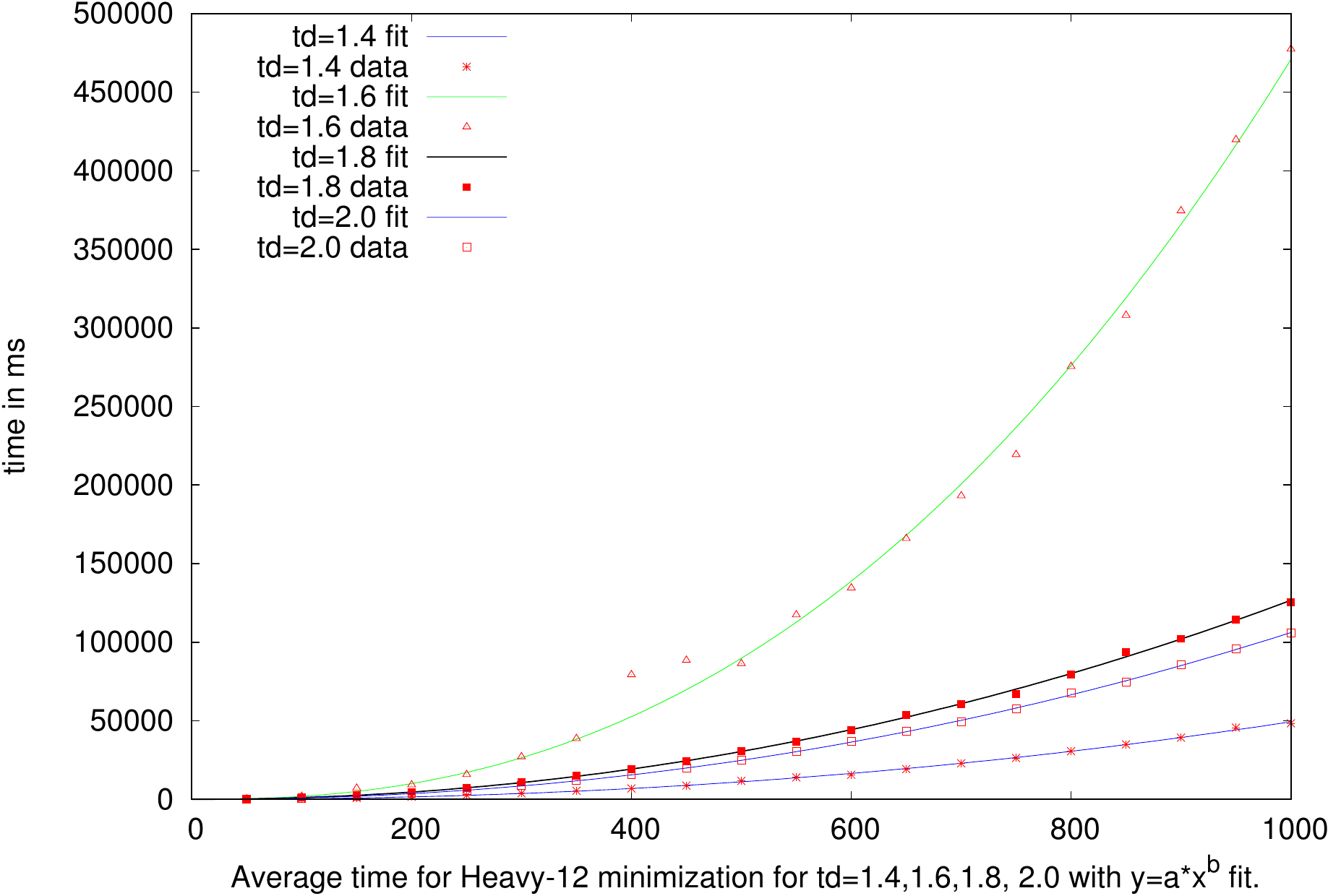}
\end{center}

\noindent
The computation time of Heavy-k depends on the class of automata, 
i.e., on the density ${\it td}$, as the scalability test above shows.
Moreover, it also depends on $k$.
The following graph shows the average computation time of Heavy-k on automata
of size 100 and varying ${\it td}$ and $k$. The most difficult cases are those where
minimization is possible (and thus the alg. does not give up quickly), but 
does not massively reduce the size of the instance. For Heavy-k, this peak is
around ${\it td}=1.6,1.7$ (like in the scalability test).

\begin{center}
\includegraphics[scale=0.4]{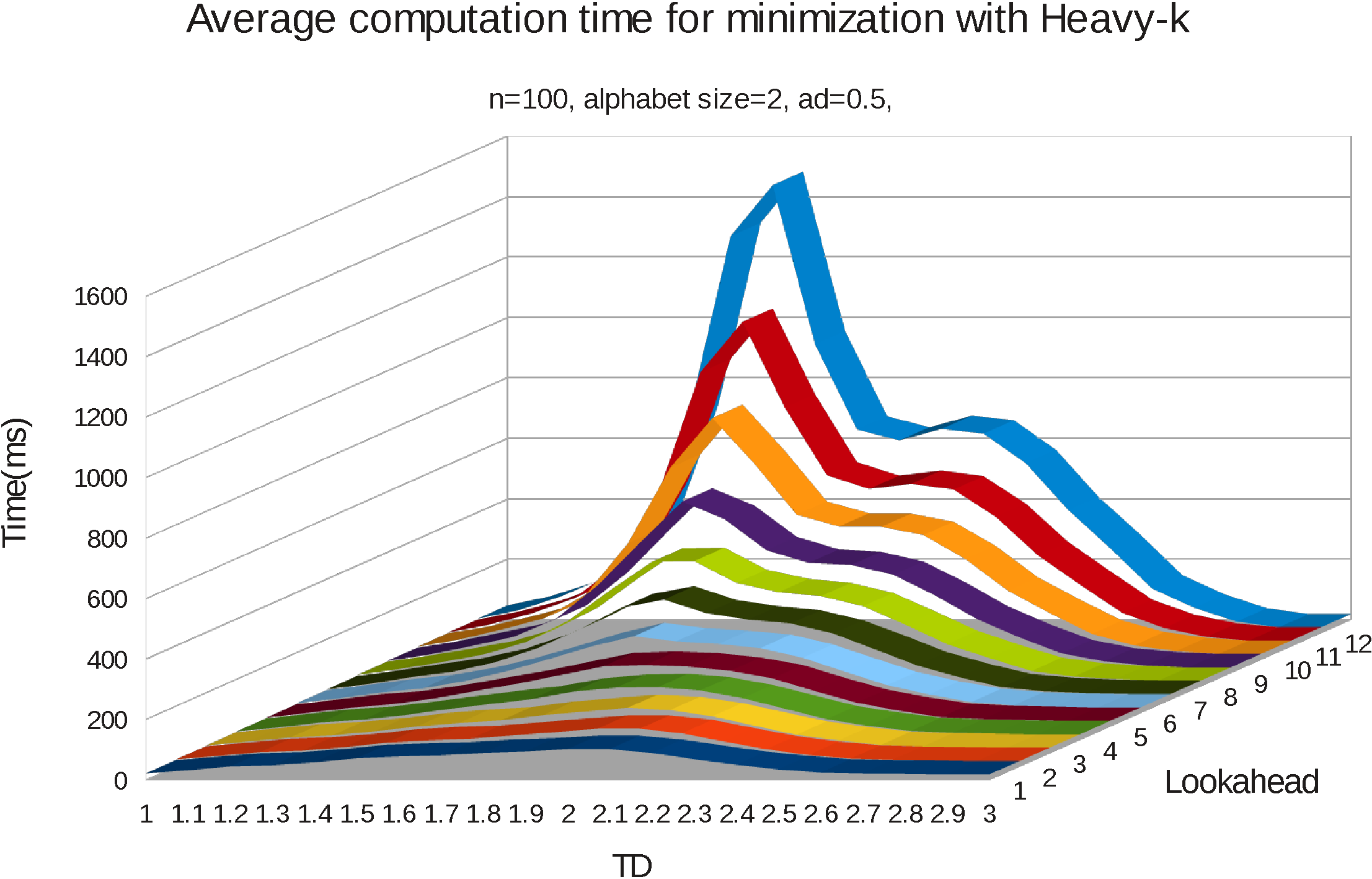}
\end{center}

\paragraph{Language Inclusion Checking.}
We test the language inclusion checking algorithm of
Section~\ref{sec:inclalg} (with lookahead up-to 15)
on nontrivial instances and compare its 
performance to previous techniques like ordinary fair simulation checking
and the best effort of GOAL (which uses simulation-based minimization 
followed by an on-the-fly variant of Piterman's construction
\cite{Pit06,GOAL_survey_paper}).
In this test we use only the polynomial time steps (1)-(3) of our algorithm,
thus it may fail in some instances.
We consider pairs of Tabakov-Vardi random automata with 1000 states each,
$|\Sigma|=2$ and ${\it ad}=0.5$. For each separate case of 
${\it td}=1.6, 1.8$ and $2.0$, we create 300 such automata pairs and check
if language inclusion holds. (For ${\it td}<1.6$ inclusion rarely holds,
except trivially if one automaton has empty language. For ${\it td}>2$
inclusion often holds but is easier to prove.)

For ${\it td}=1.6$ our algorithm solved 294 of 300 instances (i.e., 98\%):
45 included (16 in step (1) and 29 in step (2)), 249 non-included (step (3)),
and 6 failed. Average computation time 1167s.
Ordinary fair simulation solved only 13 included instances.
GOAL (timeout 60min, 14GB memory) solved only 13 included
instances (the same 13 as fair simulation) and 155 non-included instances.

For ${\it td}=1.8$ our algorithm solved 297 of 300 instances (i.e., 99\%):
104 included (103 in step (1) and 1 in step (2)) and 193 non-included 
(step (3)) and 3 failed. 
Average computation time 452s.
Ordinary fair simulation solved only 5 included instances.
GOAL (timeout 30min, 14GB memory) solved only 5 included
instances (the same 5 as fair simulation) and 115 non-included instances.

For ${\it td}=2.0$ our algorithm solved every instance:
143 included (shown in step (1)) and 157 non-included (step (3)).
Average computation time 258s.
Ordinary fair simulation solved only 1 of the 143 included instances.
GOAL (timeout 30min, 14GB memory) solved only 1 of 143 included
instances (the same one as fair simulation) and 106 of 157 non-included instances.

\section{Conclusion and Future Work}\label{sec:conclusion}

Our automata minimization techniques perform significantly better than 
previous methods. In particular, they can be applied to solve PSPACE-complete
automata problems like language inclusion for much larger instances.
While we presented our methods in the context of B\"uchi automata,
most of them trivially carry over to the simpler case of automata over finite
words.
Future work includes more efficient algorithms for computing lookahead
simulations, either along the lines of \cite{HHK:FOCS95} for normal simulation,
or by using symbolic representations of the relations.
Moreover, we are applying similar techniques to minimize tree-automata.

\bibliographystyle{abbrvnat}

\newpage
\appendix
\section{Additional Experiments}

This appendix contains additional material related to experiments with our minimization algorithm
(cf. Section~\ref{sec:experiments}).

\subsection{The Effect of the Acceptance Density}
\label{sec:acceptance_density}

Figure~\ref{fig:acceptance_density} shows the performance of our
minimization algorithm on random automata with acceptance density $0.5$ and
$0.1$, respectively.
Clearly, variations in the acceptance density do not affect our methods 
with lookahead (e.g., Light-12 and Heavy-12) very much.
However, a small acceptance density like $0.1$ makes the problem somewhat
harder for methods without lookahead (e.g., Light-1 and Heavy-1).

\begin{figure}[h]\begin{center}
\includegraphics[scale=0.5]{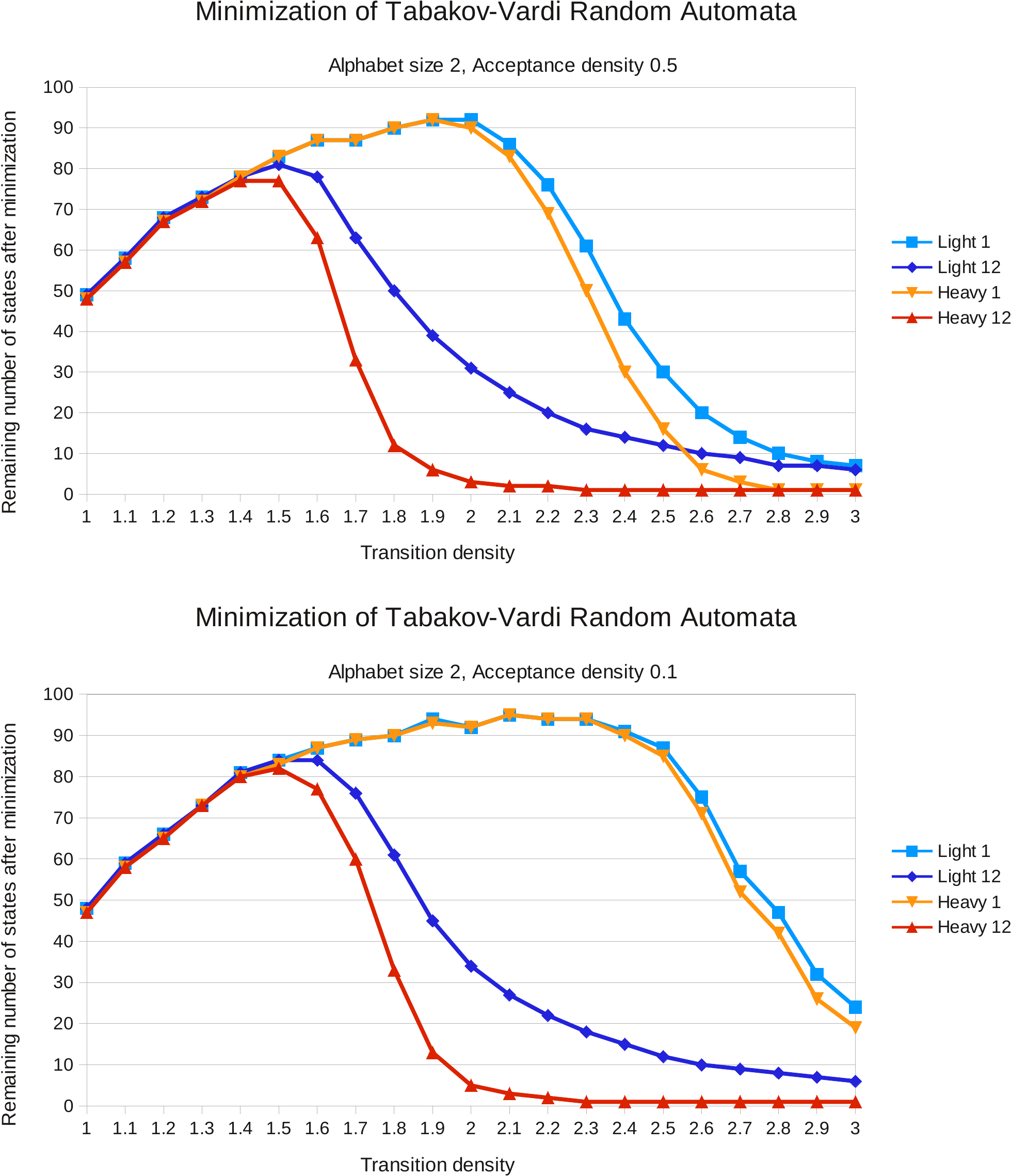}
\end{center}
\caption{Minimization of Tabakov-Vardi random automata with $n=100$,
$|\Sigma|=2$, ${\it ad}=0.5$ (top), ${\it ad}=0.1$ (bottom) and varying
${\it td}$. 
We use the Light 1, Light 12, Heavy 1 and Heavy 12 methods
and plot the average number of states of the minimized automata. Every point
in the top (resp. bottom) graph the average
of 1000 (resp. 300) automata.
Note how a small acceptance density makes minimization harder without lookahead, but
not much harder for lookahead 12.
}\label{fig:acceptance_density}
\end{figure}

\newpage
\subsection{Scalability}
\label{sec:scalability}

In this section we present the complete data for our scalability experiments.
We tested our Heavy-12 minimization algorithm on random automata of increasing size but fixed ${\it td}$, ${\it ad}$ and $\Sigma$.
In Figure~\ref{fig:scalability1} we show the reduction in size,
while in Figure~\ref{fig:scalability2} we show the computation time (for the same set of experiments).

\begin{figure}[h]
\begin{center}
\includegraphics[scale=0.5]{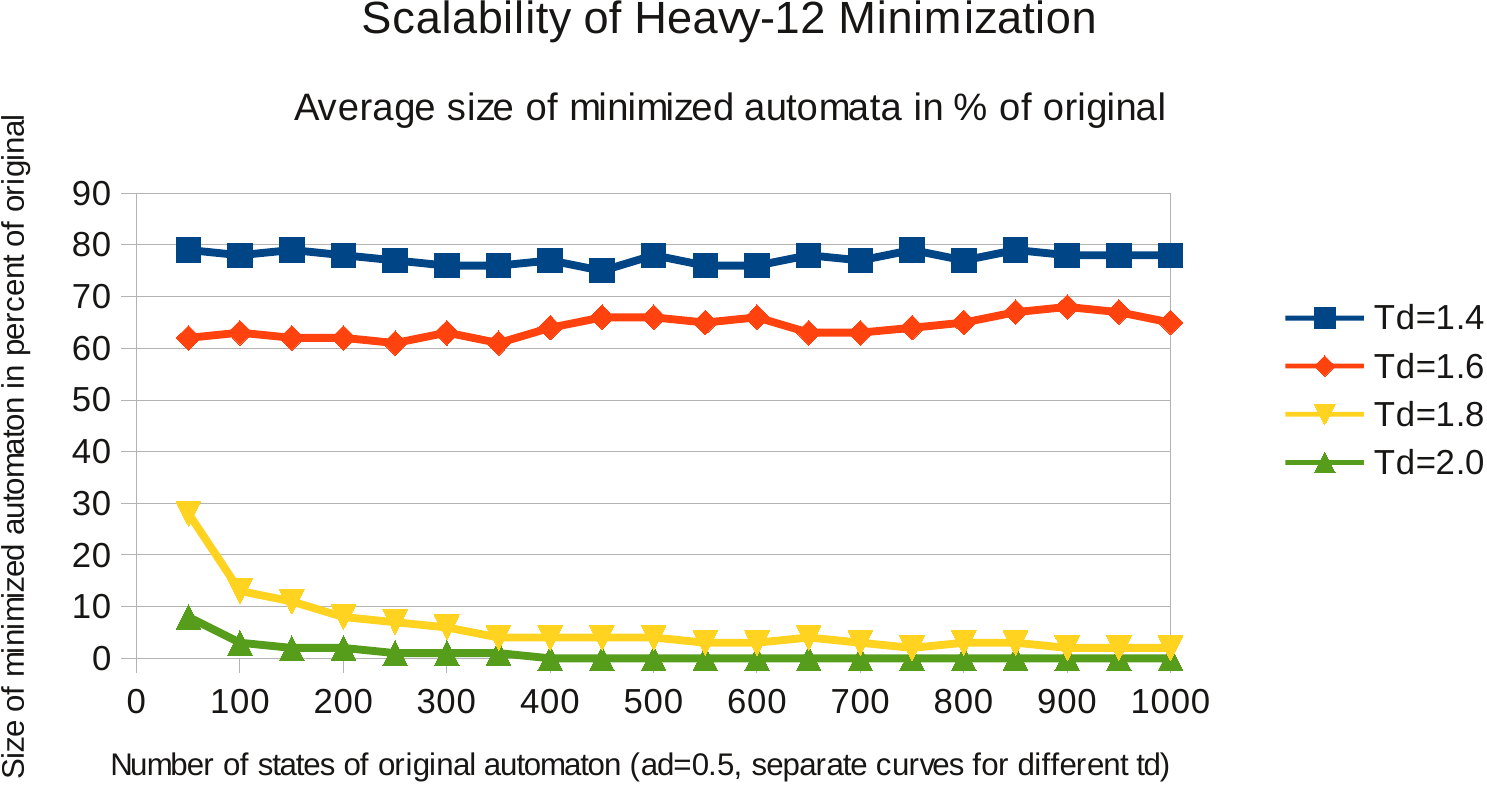}
\end{center}
\caption{Minimization of Tabakov-Vardi random automata with ${\it ad}=0.5$,
$|\Sigma|=2$, and increasing $n=50,100,\dots,1000$.
Different curves for different ${\it td}$.
We plot the average size of the Heavy-12 minimized automata, in percent of
their original size. Every point is the average of 300 automata.
Note that the lookahead of 12 does not change, i.e., larger automata do not
require a higher lookahead for a good minimization.
}\label{fig:scalability1}
\end{figure}

\begin{figure}[h]
\begin{center}
\includegraphics[scale=0.45]{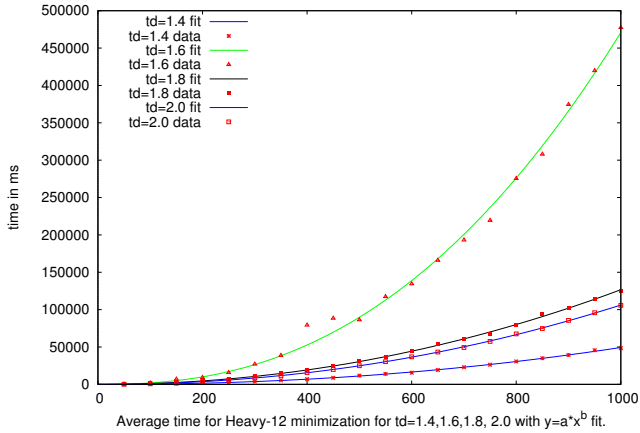}
\end{center}
\caption{Minimization of Tabakov-Vardi random automata as in Figure~\ref{fig:scalability1}.
Here we plot the average computation time (in ms) for the minimization,
and a least-squares fit by the function $a*n^b$.
For ${\it td}=1.4, 1.6, 1.8, 2.0$ we obtain
$0.018*n^{2.14}$, $0.32*n^{2.39}$, $0.087*n^{2.05}$ and $0.055*n^{2.09}$, respectively.
}\label{fig:scalability2}
\end{figure}

\newpage
\section{Non-transitivity of Lookahead Simulation}
\label{sec:non_transitivity}

\begin{figure} \centering
	\begin{tikzpicture}[on grid, node distance=2cm and 2cm]
		\tikzstyle{vertex} = [state]
				
		\path node [vertex] (p0) {$p_0$};
		
		\path node [vertex] (q0) [right = 3cm of p0] {$q_0$};
		\path node [vertex] (q1) [below left = 1.5cm and 1cm of q0] {$q_1$};
		\path node [vertex] (q2) [below right = 1.5cm and 1cm of q0] {$q_2$};
		
		\path node [vertex] (r0) [right = 4cm of q0] {$r_0$};
		\path node [vertex] (r1) [below left = 1.5cm and 1cm of r0] {$r_1$};
		\path node [vertex] (r2) [below right = 1.5cm and 1cm of r0] {$r_2$};
		
		\path[->]
		
			(p0) edge [loop above] node {$a, b$} ()
			
			(q0) edge node [left] {$a,b$} (q1)
			(q0) edge node [right] {$a,b$} (q2)
			(q1) edge [bend left = 60] node [above left] {$a$} (q0)
			(q2) edge [bend right = 60] node [above right] {$b$} (q0)
			
			(r0) edge node [left] {$a,b$} (r1)
			(r0) edge node [right] {$a,b$} (r2)
			(r1) edge [loop below] node {$a$} ()
			(r1) edge [bend left = 20] node [above] {$a$} (r2)
			(r2) edge [loop below] node {$b$} ()
			(r2) edge [bend left = 20] node [below] {$b$} (r1);
			
		\path
			(p0) -- node [right = -.4cm of p0] {$\ksim k$} (q0)
			(q0) -- node [right = .2cm of q0] {$\ksim k$} (r0);
		
		\begin{pgfonlayer}{background}
		\end{pgfonlayer}
			
	\end{tikzpicture}
	
	\caption{Lookahead simulation is not transitive.}
	
	\label{fig:lookhead_non_transitive}
	
\end{figure}
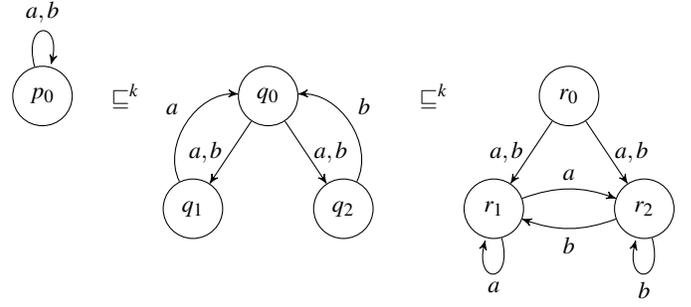

In this section we show that lookahead simulation is not transitive for $k \geq 2$.
Consider the example in Figure~\ref{fig:lookhead_non_transitive}.
We have $p_0 \ksim k q_0 \ksim k r_0$ (and $k = 2$ suffices),
but $p_0 \not \ksim k r_0$ for any $k > 0$. In fact,

\begin{itemize}
	\item $p_0 \ksim k q_0$, with $k = 2$:
	Duplicator takes the transition via $q_1$ or $q_2$
	depending on whether Spoiler plays word $(a+b)a$ or $(a+b)b$,
	respectively.
	\item $q_0 \ksim k r_0$, with $k = 2$:
	If Spoiler goes to $q_1$ or $q_2$, then Duplicator goes to $r_1$ or $r_2$, respectively.
	That $q_1 \ksim k r_1$ holds can be shown as follows
	(the case $q_2 \ksim k r_2$ is similar).
	If Spoiler takes transitions $q_1 \goesto a q_0 \goesto a q_1$,
	then Duplicator does $r_1 \goesto a r_1 \goesto a r_1$,
	and	if Spoiler does $q_1 \goesto a q_0 \goesto b q_1$,
	then Duplicator does $r_1 \goesto a r_2 \goesto b r_1$.
	The other cases are similar.
	\item $p_0 \not \ksim k r_0$, for any $k > 0$.
	From $r_0$,	Duplicator can play a trace for any word $w$ of length $k > 0$,
	but in order to extend it to a trace of length $k + 1$ for any $w' = wa$ or $wb$,
	she needs to know whether the last $(k + 1)$-th symbol is $a$ or $b$.
	Thus, no finite lookahead suffices for Duplicator.
\end{itemize}
Incidentally, notice that $r_0$ simulates $p_0$ with $k$-continuous simulation,
and $k = 2$ suffices.

As shown in Section~\ref{sec:lookahead}, non-transitivity of lookahead simulation
is not an obstacle to its applications. Since it is only used to compute 
good under-approximations of certain preorders, one can simply consider its
transitive closure (which is easily computed).

\newpage
\section{Fixpoint Logic Characterization of Lookahead Simulation}
\label{sec:fixedpoint}

In this section we give a fixpoint logic characterization of lookahead
simulation, using the modal $\mu$-calculus.
Basically it follows from the following preservation property enjoyed by lookahead simulation:
Let $x \in \{ \mathrm{di, de, f, bw} \}$ and $k > 0$.
When Duplicator plays according to a winning strategy,
in any configuration $(p_i, q_i)$ of the resulting play, $p_i \kxsim k x q_i$.
Thus, $k$-lookahead simulation (without acceptance condition) can be characterized as the largest $X \subseteq Q \times Q$
which is closed under a certain monotone predecessor operator.
For convenience, we take the point of view of Spoiler,
and compute the complement relation $W^x = (Q \times Q) \setminus \kxsim k x$ instead.
This is particularly useful for delayed simulation,
since we can avoid recording the obligation bit (see
\cite{etessami:etal:fairsimulations:05})
by using the technique of \cite{piterman:generalized06}.

\subparagraph{Direct and backward simulation.}
Consider the following predecessor operator $\cpredi X$, for any set $X \subseteq Q \times Q$:
\begin{align*}
	\cpredi X &= \{ (p_0, q_0) \st
		\exists(p_0 \goesto {a_0} p_1 \goesto {a_1} \cdots \goesto {a_{k-1}} p_k)  \\
								& \forall (q_0 \goesto {a_0} q_1 \goesto {a_1} \cdots \goesto {a_{m-1}} q_m), 0 < m \leq k, \\
	\textrm{\it either} \quad	& \exists (0 \leq j \leq m) \cdot p_j \in F \textrm{ and } q_j \not\in F, \\
	\textrm{\it or}		\quad 	& (p_m, q_m) \in X \}
\end{align*}
Intuitively, $(p,q) \in \cpredi X$ iff, from position $(p, q)$,
in one round of the game Spoiler can either force the game in $X$,
or violate the winning condition for direct simulation.
For backward simulation, $\cprebw X$ is defined analogously,
except that transitions are reversed and also initial states are taken into account:
\begin{align*}
	\cprebw X &= \{ (p_0, q_0) \st
		\exists(p_0 \comesfrom {a_0} p_1 \comesfrom {a_1} \cdots \comesfrom {a_{k-1}} p_k) \\
				& \forall (q_0 \comesfrom {a_0} q_1 \comesfrom {a_1} \cdots \comesfrom {a_{m-1}} q_m), 0 < m \leq k, \\
	\textrm{\it either} \quad	& \exists (0 \leq j \leq m) \cdot p_j \in F \textrm{ and } q_j \not\in F, \\
	\textrm{\it or} 	\quad	& \exists (0 \leq j \leq m) \cdot p_j \in I \textrm{ and } q_j \not\in I, \\
	\textrm{\it or}		\quad 	& (p_m, q_m) \in X \}
\end{align*}
\begin{remark}\label{rem:no_deadlocks}
	The definition of $\cprex x X$ requires that the automaton has no deadlocks;
	otherwise, Spoiler would incorrectly lose if she can only perform at most $k' < k$ transitions.
	We assumed that the automaton is complete to keep the definition simple,
	but our implementation works with general automata.

        Intuitively, the generalization to incomplete automata works as
        follows. If Spoiler's move reaches a deadlocked state after $k'$ steps,
        where $1 \le k' < k$ then Spoiler does not immediately lose. Instead
        Duplicator needs to reply to this move of length $k'$.
        In other words, if Spoiler's move ends in a deadlocked state then the
        lookahead requirements are weakened, because one simply cannot demand
        any more steps from Spoiler.
\end{remark}
\noindent
For $X = \emptyset$, $\cprex x X$ is the set of states from which Spoiler wins in at most one step.
Thus, Spoiler wins iff she can eventually reach $\cprex x \emptyset$.
Formally, for $x \in \{\mathrm{di, bw}\}$, \[ W^x = \mu W \cdot \cprex x W \nonumber \]

\ignore{
	\begin{remark}\label{rem:early_stop}
		The computation of $\cpredi X$ can be speeded up by noting that,
		if Spoiler fails to force the game in $X$ with lookahead $k$,
		then she will also fail with lookahead $k+1$.
		In this way, $(p,q)\not\in\cpredi X$ can be determined by constructing the attack of Spoiler incrementally,
		starting with length 1, 2, \dots, and stopping as soon as Duplicator has a defense to drive the computation outside $X$.
		If $(p,q) \in \cpredi X$, then we clearly have to consider attacks of Spoiler up to length $k$.
		Similarly, also Duplicator's defense can be built incrementally,
		stopping as soon as $(p_m,q_m) \not \in \cpredi X$ for some $m \leq k$.
	\end{remark}
}

\subparagraph{Delayed and fair simulation.}

We introduce a more elaborate three-arguments predecessor operator $\cprelong X Y Z$.
Intuitively, a configuration belongs to $\cprelong X Y Z$ iff
Spoiler can make a move s.t., for any Duplicator's reply,
at least one of the following conditions holds:
\begin{enumerate}
	\item Spoiler visits an accepting state, while Duplicator never does so; then, the game goes to $X$.
	\item Duplicator never visits an accepting state; the game goes to $Y$.
	\item The game goes to $Z$ (without any further condition).
\end{enumerate}

%
%
\begin{align}
	\nonumber
	\cprelong X Y Z &= \{ (p_0, q_0) \st
		\exists(p_0 \goesto {a_0} p_1 \goesto {a_1} \cdots \goesto {a_{k-1}} p_k) \\ \nonumber	
		&\forall (q_0 \goesto {a_0} q_1 \goesto {a_1} \cdots \goesto {a_{m-1}} q_m)\ \cdot \forall (0 < m \leq k) \cdot \\ \nonumber
		\textrm{\it either} \quad	&	\exists (0 \leq i \leq m) \cdot p_i \in F,
										\forall (0 \leq j \leq m) \cdot q_j \not\in F,
										(p_m, q_m) \in X \\ \nonumber
		\textrm{\it or}		\quad	&	\forall (0 \leq j \leq m) \cdot q_j \not\in F,
										(p_m, q_m) \in Y \\
		\textrm{\it or}		\quad	&	(p_m, q_m) \in Z \} \nonumber
\end{align}
%

For fair simulation, Spoiler wins iff, except for finitely many rounds,
she visits accepting states infinitely often while Duplicator does not visit any accepting state at all.
Thus, \[ W^\mathrm f = \mu Z \cdot \nu X \cdot \mu Y \cdot \cprelong X Y Z \nonumber \]

For delayed simulation, Spoiler wins if, after finitely many rounds, the following conditions are both satisfied:
\begin{inparaenum}[1)]
	\item She can visit an accepting state, and
	\item She can prevent Duplicator from visiting accepting states in the future.
\end{inparaenum}
For condition 1), let $\cpreone X Y := \cprelong X Y Y$,
and, for 2), $\cpretwo X Y := \cprelong X X Y$. Then,
\[ W^\mathrm {de} = \mu W \cdot \cpreone {\nu X \cdot \cpretwo X W} W \nonumber \]

\ignore{
The evaluation of $\cprelong X Y Z$ can be optimized by exploring the search
space of the $\exists(p = p_0 \goesto {a_0} p_1 \goesto {a_1} \cdots \goesto
{a_{k-1}} p_k)$ quantification in the following way.
Consider a prefix $p = p_0 \goesto {a_0} p_1 \goesto {a_1} \cdots \goesto
{a_{k'-1}} p_{k'}$ for some $k' < k$. If the rest of the condition (with $k'$
substituted for $k$) is false even for this prefix, then it is also false for
all extensions of $p = p_0 \goesto {a_0} p_1 \goesto {a_1} \cdots \goesto
{a_{k'-1}} p_{k'}$, which thus do not need to be considered.
On the other hand one can stop when an instance 
$p_0 \goesto {a_0} p_1 \goesto {a_1} \cdots \goesto {a_{k-1}} p_k$ with full
$k$ yields true.
Moreover, one explores the search space of the 
$\forall (0 < m \leq k) \forall (q = q_0 \goesto {a_0} q_1 \goesto {a_1}
\cdots \goesto {a_{m-1}} q_m)$
quantification in order of increasing $m$ and stops as soon as an instance
yields false.
}

\end{document}